%% Modified from bare_conf.tex on 06-12-04
%% V1.2
%% 2002/11/18
%% by Michael Shell
%% mshell@ece.gatech.edu
%%
%
\documentclass[10pt, conference]{IEEEtran}

\IEEEoverridecommandlockouts
% If the IEEEtran.cls has not been installed into the LaTeX system files,
% manually specify the path to it:
% \documentclass[conference]{IEEEtran}
%

\usepackage{xspace,exscale}
\usepackage{fancybox}
\usepackage{graphicx}
\usepackage{color}
% this is for \mathfrak & better working \mathbb
\usepackage{amsfonts}
% proof environment
\usepackage{amsthm}
% this is for \gtrsim, \lesssim symbols (redefined as \simleq, \simgeq in my_style.sty)
\usepackage{amssymb}
\usepackage{url}

%automatic sorting/compression of citations
% noadjust -- to prevent automatic adding of ~ before \cite{}
\usepackage[noadjust]{cite}

% amsmath is for align, multline env's.
%  -- cmex10 option directs TeX to use rescaled cmex10 for cmex7-9 fonts.
%  	This option is highly recommended by IEEEtran because older systems
%  	did not include Type1 font definitions for cmex7-9 fonts. However, on my
%  	systems it seems that DVIPS is instructed by 
%  	/usr/share/texmf-texlive/fonts/map/dvips/cmex/ttcmex.map:
%  		cmex7 TeX-cmex7 <fmex7.pfb
%		cmex8 TeX-cmex8 <fmex8.pfb
%		cmex9 TeX-cmex9 <fmex9.pfb
%	Thus, I _DO NOT_ need cmex10 option
%\usepackage[cmex10]{amsmath}
\usepackage{amsmath}
	% AMSMath was so ample as to prohibit usage of \over, so we undo it
	\makeatletter
	\let\over=\@@over \let\overwithdelims=\@@overwithdelims
	\let\atop=\@@atop \let\atopwithdelims=\@@atopwithdelims
  	\let\above=\@@above \let\abovewithdelims=\@@abovewithdelims
  	\makeatother
% Also, note that the amsmath package sets \interdisplaylinepenalty to 10000
% thus preventing page breaks from occurring within multiline equations. Use:
\interdisplaylinepenalty=10000
% after loading amsmath to restore such page breaks as IEEEtran.cls normally
% does. 

\usepackage{fixltx2e}
% fixltx2e, the successor to the earlier fix2col.sty, was written by
% Frank Mittelbach and David Carlisle. This package corrects a few problems
% in the LaTeX2e kernel, the most notable of which is that in current
% LaTeX2e releases, the ordering of single and double column floats is not
% guaranteed to be preserved. Thus, an unpatched LaTeX2e can allow a
% single column figure to be placed prior to an earlier double column
% figure. The latest version and documentation can be found at:
% http://www.ctan.org/tex-archive/macros/latex/base/

\usepackage{rotating}

%\usepackage{stfloats}
% stfloats.sty was written by Sigitas Tolusis. This package gives LaTeX2e
% the ability to do double column floats at the bottom of the page as well
% as the top. (e.g., "\begin{figure*}[!b]" is not normally possible in
% LaTeX2e). It also provides a command:
%\fnbelowfloat
% to enable the placement of footnotes below bottom floats (the standard
% LaTeX2e kernel puts them above bottom floats). This is an invasive package
% which rewrites many portions of the LaTeX2e float routines. It may not work
% with other packages that modify the LaTeX2e float routines. The latest
% version and documentation can be obtained at:
% http://www.ctan.org/tex-archive/macros/latex/contrib/sttools/
% Documentation is contained in the stfloats.sty comments as well as in the
% presfull.pdf file. Do not use the stfloats baselinefloat ability as IEEE
% does not allow \baselineskip to stretch. Authors submitting work to the
% IEEE should note that IEEE rarely uses double column equations and
% that authors should try to avoid such use. Do not be tempted to use the
% cuted.sty or midfloat.sty packages (also by Sigitas Tolusis) as IEEE does
% not format its papers in such ways.

% correct bad hyphenation here
\hyphenation{op-tical net-works semi-conduc-tor}

\usepackage{ifpdf}

\usepackage{subfigure}
\usepackage{psfrag}

\usepackage[%dvips,
            CJKbookmarks=true,
            bookmarksnumbered=true,
            bookmarksopen=true,
%						bookmarks=false,
            colorlinks=true,
            citecolor=red,
            linkcolor=blue,
            anchorcolor=red,
            urlcolor=blue,
            pdfauthor={Tang,Polyanskiy,Wang,Wornell}
            ]{hyperref}

% Jennifers packages
\usepackage{enumerate}
\usepackage{mathtools}

\usepackage{enumitem}
\usepackage{thmtools}
\usepackage{thm-restate}
\usepackage{cleveref}
\usepackage{bbm}
%\usepackage{setspace}
%\usepackage[margin=1in]{geometry}
%\doublespacing

%\usepackage[usenames, dvipsnames]{color}

\DeclarePairedDelimiter{\floor}{\lfloor}{\rfloor}

\Crefname{figure}{Fig.}{Fig.}
\crefname{figure}{fig.}{fig.}

%%%%%%%%%%%%%%%%%%%%%%%%%%%%%%%%%%%%%%%%%%%%%%%%%%%%%%%%%%%%%%%%%%
%%%%%%%%%%%%%%%%%%%%%%%%%%%%%%%%%%%%%%%%%%%%%%%%%%%%%%%%%%%%%%%%%%
%%%%                                                          %%%%
%%%%                      Local defines                       %%%%
%%%%                                                          %%%%
%%%%%%%%%%%%%%%%%%%%%%%%%%%%%%%%%%%%%%%%%%%%%%%%%%%%%%%%%%%%%%%%%%
%%%%%%%%%%%%%%%%%%%%%%%%%%%%%%%%%%%%%%%%%%%%%%%%%%%%%%%%%%%%%%%%%%

\newcommand{\matx}{\ensuremath{\mathcal{X}}}
\newcommand{\matr}{\ensuremath{\mathcal{R}}}

\newcommand{\matf}{\ensuremath{\mathcal{F}}}
\newcommand{\matg}{\ensuremath{\mathcal{G}}}

 %%SV-apr26
 %%SV-apr26

\newcommand{\mreals}{\ensuremath{\mathbb{R}}}

\newcommand{\ZZ}{\ensuremath{\mathbb{Z}}}

\newcommand{\TV}{{\sf TV}}

\ifx\eqref\undefined
	\newcommand{\eqref}[1]{~(\ref{#1})}
\fi
\ifx\mod\undefined
	\def\mod{\mathop{\rm mod}}
\fi

\newcommand{\vect}[1]{{\underline{#1}}}

\def\argmax{\mathop{\rm argmax}}

\def\EE{\mathbb{E}\,}

\def\PP{\mathbb{P}}

\def\eqdef{\stackrel{\triangle}{=}}

\def\unifto{\mathop{{\mskip 3mu plus 2mu minus 1mu%
	\setbox0=\hbox{$\mathchar"3221$}%
	\raise.6ex\copy0\kern-\wd0%
	\lower0.5ex\hbox{$\mathchar"3221$}}\mskip 3mu plus 2mu minus 1mu}}

%% These two are defined in amssymb package
\ifx\lesssim\undefined
\def\simleq{{{\mskip 3mu plus 2mu minus 1mu%
	\setbox0=\hbox{$\mathchar"013C$}%
	\raise.2ex\copy0\kern-\wd0%
	\lower0.9ex\hbox{$\mathchar"0218$}}\mskip 3mu plus 2mu minus 1mu}}
\else
\def\simleq{\lesssim}
\fi

\ifx\gtrsim\undefined
\def\simgeq{{{\mskip 3mu plus 2mu minus 1mu%
	\setbox0=\hbox{$\mathchar"013E$}%
	\raise.2ex\copy0\kern-\wd0%
	\lower0.9ex\hbox{$\mathchar"0218$}}\mskip 3mu plus 2mu minus 1mu}}
\else
\def\simgeq{\gtrsim}
\fi

% Macro to show off newstuff to authors

%%%%%%%%%%%%%%%%%%%%%%%%%%%%%%%%%%%%%%%%%%%%%%%%%%%%%%%%%%%%%%%%
%%%%%%%%%%%% CONDITIONAL COMPILATION TRICK %%%%%%%%%%%%%%%%%%%%%
%%%%%%%%%%%% Usage: \ifmapx TEXT \fi 
%%%%%%%%%%%% The "TEXT" will only appear if *_apx.tex is compiled   
%%%%%%%%%%%%%%%%%%%%%%%%%%%%%%%%%%%%%%%%%%%%%%%%%%%%%%%%%%%%%%%%
%
% Ok, all this pain is to produce the name "xxx_apx" with characters
% of category 12, since this is what jobname expands to. What I used is the
% trick that \string\abc produces a sequence of tokens (\,a,b,c) all of category 12, 
% so I only need to kill the front \_12 which is what mkillslash macro doing.
%
%%%%%%%%%%%%
\newif\ifmapx
{\catcode`/=0 \catcode`\\=12/gdef/mkillslash\#1{#1}}
\edef\jobnametmp{\expandafter\string\csname itpaper_apx\endcsname}
\edef\jobnameapx{\expandafter\mkillslash\jobnametmp}
\edef\jobnameexpand{\jobname}
\ifx\jobnameexpand\jobnameapx
\mapxtrue
\else
\mapxfalse
\fi

\newtheorem{theorem}{Theorem}
\newtheorem{lemma}[theorem]{Lemma}
\newtheorem{corollary}[theorem]{Corollary}
\newtheorem{definition}{Definition}
\newtheorem{proposition}[theorem]{Proposition}
\newtheorem{remark}{Remark}

\newcommand\numberthis{\addtocounter{equation}{1}\tag{\theequation}}
%%%%%%%%%%%%%%%%%%%%%%%%%%%%%%%%%%%%%%%%%%%%%%%%%%%%%%%%%%%%%%%%%%
%%%%%%%%%%%%%%%%%%%%%%%%%%%%%%%%%%%%%%%%%%%%%%%%%%%%%%%%%%%%%%%%%%
%%%%                                                          %%%%
%%%%                Document begins here                      %%%%
%%%%                                                          %%%%
%%%%%%%%%%%%%%%%%%%%%%%%%%%%%%%%%%%%%%%%%%%%%%%%%%%%%%%%%%%%%%%%%%
%%%%%%%%%%%%%%%%%%%%%%%%%%%%%%%%%%%%%%%%%%%%%%%%%%%%%%%%%%%%%%%%%%

\begin{document}

% paper title
\title{Defect tolerance: fundamental limits and examples}

% author names and affiliations
% use a multiple column layout for up to three different
% affiliations
\author{Jennifer Tang, Da Wang, Yury Polyanskiy, Gregory Wornell
\thanks{JT, YP and GW are with the Department of Electrical Engineering 
and Computer Science, MIT, Cambridge, MA 02139 USA.
	\mbox{e-mail:~{\ttfamily\{jstang,yp,gww\}@mit.edu}.} \textcolor{black}{DW was with the Department of Electrical Engineering and Computer Science, MIT, Cambridge, MA, 02139 and is now with Projection Analytics LLC, Hoboken, NJ, 07030. e-mail:~{\ttfamily dawang@alum.mit.edu}}.}%
\thanks{
 The research was supported by the Center for Science of Information (CSoI),
and NSF Science and Technology Center, under grant agreement CCF-09-39370. This work was also supported in part by Systems on Nanoscale Information fabriCs (SONIC), an SRC STARnet Center sponsored by MARCO and DARPA.}
\thanks{This research was presented at ISIT 2016.}
\thanks{\textcolor{black}{Copyright (c) 2017 IEEE. Personal use of this material is permitted.  However, permission to use this material for any other purposes must be obtained from the IEEE by sending a request to pubs-permissions@ieee.org.}}}
% avoiding spaces at the end of the author lines is not a problem with
% conference papers because we don't use \thanks or \IEEEmembership
% for over three affiliations, or if they all won't fit within the width
% of the page, use this alternative format:
%

% make the title area
\maketitle

\begin{abstract}
%This paper addresses the problem of adding redundancy to a collection of physical objects so that the overall system is
%more robust to failures. Physical redundancy can (generally) be achieved by employing copy/substitute procedures.
%This is fundamentally different from information redundancy, where a single parity check simultaneously protects a large
%number of data bits against a single erasure. We propose a bipartite graph model of designing defect-tolerant systems where defective objects are repaired by reconnecting them to strategically placed redundant objects. The
%fundamental limits of this model are characterized under various asymptotic settings and both asymptotic and finite-size optimal systems
%are constructed. 

%As argued in the paper, the following is
%a mathematical model for constructing defect-tolerant systems. We say that a $k$ by $m$ bipartite graph corrects $t$ defects over an alphabet of size $q$ if for every
%$q$-coloring of $k$ left vertices there exists a $q$-coloring of $m$ right vertices such that every left vertex is connected to
%at least $t$ same-colored right vertices. We study the trade-off between redundancy \textcolor{black}{ $m / kt$} and the total number of
%edges in the graph divided by \textcolor{black}{$kt$}. The question is trivial when $q\ge k$: the optimal solution is a simple $t$-fold replication.
%However, when $q<k$ non-trivial savings are possible by leveraging the inherent repetition of colors.

This paper addresses the problem of adding redundancy to a collection of physical objects so that the overall system is
more robust to failures. In
contrast to its information counterpart, which can exploit parity to
protect multiple information symbols from a single erasure, physical
redundancy can only be realized through duplication and substitution
of objects.  We propose a bipartite graph model for designing
defect-tolerant systems in which defective objects are replaced by
judiciously connected redundant objects.  The fundamental
limits of this model are characterized under various asymptotic
settings and both asymptotic and finite-size systems that approach
these limits are constructed.  Among other results, we show that
simple modular redundancy is in general suboptimal.  As we develop,
this combinatorial problem of defect tolerant system design has a
natural interpretation as one of graph coloring, and the analysis is
significantly different from that traditionally used in information
redundancy for error-control codes.
\end{abstract}

% Note that keywords are not normally used for peerreview papers.
\begin{IEEEkeywords}
Defect-tolerant circuits, bipartite graphs, coloring, combinatorics, worst-case errors
\end{IEEEkeywords}

\tableofcontents

% For peer review papers, you can put extra information on the cover
% page as needed:
% \ifCLASSOPTIONpeerreview
% \begin{center} \bfseries EDICS Category: 3-BBND \end{center}
% \fi
%
% For peerreview papers, this IEEEtran command inserts a page break and
% creates the second title. It will be ignored for other modes.
\IEEEpeerreviewmaketitle

%%
%% MIGHTY HACK TO FIT TO 5 PAGES
%%
%\abovedisplayskip=6pt plus 1pt minus 2pt
%\belowdisplayskip=6pt plus 1pt minus 2pt
%\baselineskip=11.61pt

%------------------INTRODUCTION--------------------%
%------------------INTRODUCTION--------------------%
%------------------INTRODUCTION--------------------%
%------------------INTRODUCTION--------------------%

\section{Introduction}

Classical Shannon theory established principles of adding redundancy to data for combating noise and, dually, of
removing redundancy from data for more efficient storage. The central object of the classical theory is information,
which unlike physical objects, can be freely copied and combined. In fact, the marvel of
error-correcting codes is principally based on the counter-intuitive property that multiple unrelated information
bits $X_1,\ldots, X_k$ can be simultaneously protected by adding ``parity-checks'' such as 
\begin{equation}\label{eq:pc}
	Y = X_1 + \cdots + X_k \mod 2\,.
\end{equation}
In this example, the added parity-check $Y$ allows the recovery of the original message even if the vector 
$$ (X_1, X_2, \ldots, X_k, Y) $$
undergoes an erasure of an arbitrary element.

Physical objects (e.g., transistors in a chip) may also be subject to erasures (failures) and thus it is natural to
ask about ways of insuring the system against failure events. Note, however, that for physical objects operations such as~\eqref{eq:pc} are meaningless.  If the failure renders an object completely useless, then protecting against these failures would entail adding spare (redundant) elements. 
The required operation is to copy and then substitute.\footnote{\textcolor{black}{We assume for physical objects, the ``error-correction'' should provide an
exact copy of the object, not merely something functionally equivalent to the object. An example of what
is not considered as ``correction'' would be replacing a cell storing two bits $(b_1,b_2)$ with a cell storing $(b_1, b_1\oplus b_2)$.}} 
It may, therefore, seem that nothing better than simple replication can guard against failures. This paper shows
otherwise. Indeed, there exist non-trivial ways to add redundancy as long as the objects' diversity does not
exceed their number. That is, if the number of types of objects is smaller than the total number of them.

The objective of this paper is to develop a study of adding redundancy to a physical system where certain objects in the system fail and can only be replaced by substitutes. This paper will explore what are good design choices in this scenario and find fundamental limits for specific settings.

\subsection{Reconfigurable defect-tolerant circuits}\label{sec:reconf}

To facilitate defining the problem we intend to study, we will first present the application which informed the main model we developed for studying redundancy of physical objects, and that is the application of reconfigurable circuits. 

Consider a chip design process, in which the chip is composed of
many similar cells (e.g., standard-cell designs of ASICs). Layout of elements in each cell is dictated by the chip manufacturer. 
Each cell has $k$ input/output buses and $k$ placeholders (nodes) that can be
filled in with logic realizing one of $q$ functions. Now because of
manufacturing defects, not all $k$ elements in the cell will operate correctly (call these \emph{primary} elements). For this reason, each
cell also contains provisions for redundant elements. In particular, there are $m$ placeholders designated as redundant elements. The designer then selects what type of logic to
instantiate into these redundant elements. Once the chip is manufactured and placed on the testbed,
the testing equipment probes each cell and determines which primary elements are defective.
Programmable switches are then used to reconnect input/output buses from the defective primary
elements to one of the redundant elements containing the same logic.
So the summary of the events happening to each cell during this process is:
\begin{enumerate} 
\item Choose the layout of the placeholders and interconnect (these are provisional wires) \label{case::reconfig1}
\item Choose components (from available collection of possible types) to fill in the primary elements for the reconfigurable circuit \label{case::reconfig2}
\item Based on primary elements chosen, choose redundant components (from the same collection) to place in redundant placeholders
\item Build the circuit with these components
\item Based on where the defects occurred, reconfigure the interconnect (i.e., enable provisional wires with programmable switches) of the circuit to correct the defects. \label{case::reconfig5}
\end{enumerate}

\textcolor{black}{In the above summary, step \ref{case::reconfig5} of reconfiguring the defects is a simple operation which requires minimal programming (or switching) of the provisional wiring. This is precisely the advantage of choosing a good layout for the placeholders and provisional wires in step \ref{case::reconfig1}. Notice that this layout is universal in the sense that any choice of components in step \ref{case::reconfig2} (which may be arbitrarily dictated by the manufacturer later) should still lead to guarantees on the number of correctable defects. The focus of this work is to study optimal choices of layouts in step \ref{case::reconfig1} so that the rest of the steps in the procedure are possible. } 

With respect to this application, our goal is to understand what wiring topologies for the layout the chip manufacturer should try to implement in order to attain the optimal trade-off between the number of redundant elements, provisional wires and defect-tolerance. Notice that the two metrics, redundancy and wiring, both correspond to necessary additional resources. Adding redundancy requires more silicon area and the provisional wires requires additional metal and programmable switches.\footnote{There are certainly other metrics (such as geometric constraints \textcolor{black}{or resources to adjust the wiring between primary elements}) which are relevant for circuit applications, but we leave consideration of them to future work.} 

\textcolor{black}{Certainly, there are other procedures and layout constraints we could have chosen to study defects in hardware. For instance, there could be a $2$-hop system between the primary elements and redundant elements, decreasing the amount of wiring needed. However, multi-hop interconnects could
introduce more latency and make signal propagation delays unpredictable, which is why we do not discuss this in this work, but this is a scenario left for future work (see \Cref{sec::future}).}

\subsection{Relation to prior work}

Prior work on the subject of designing digital electronics robust to noise has been traditionally approached with the goal of
combating dynamic noise. This is epitomized in the line of work started by von Neumann~\cite{von1956probabilistic} and
contemporary variations~\cite{moore56}. Although significant progress has been made in understanding fundamental
limits in von Neumann's model,
see e.g.,~\cite{dobrushin1977lower,dobrushin1977upper,pippenger85,pippenger1988reliable,pippenger1991lower,
hajek1991maximum,evans1999signal,evans2003maximum,unger2010better}, the practical applications are limited due to a
prohibitively high level of redundancy required~\cite{norman04}.

Here, instead, we are interested in circuits robust to static manufacturing failures. As illustrated previously, this scenario has the advantage of being able to test which parts of the circuit failed and attempt to configure out (or ``wire around'') the defective parts. This side information enables significant savings in redundancy~\cite{nikolic02}. In fact, this method of testing the performance of a device followed by some configuration is rather popular in practice, used in multi-core CPUs~\cite{gizopoulos11}, analog-to-digital
converters~\cite{flynn03}, sense-amplifiers~\cite{verma08}, self-replicating automatons~\cite{mange00}, parallel computing~\cite{leighton85,teramac98}, etc.

This paper can be
seen as an attempt to provide theoretical foundations for the static defect scenario. (In fact, this was our original
motivation.)

\subsection{Problem formulation}

We study the following problem formulation: Given $k$
objects (``primary nodes''), connect each one of them to some of the available $m$ spares
(``redundant nodes'') in such a way that in the event that $t\ge 1$ of the objects fail (originals or
spares) the overall system can be made to function after a repair step. %The repair step consists of replacing each failed primary node with one of the spares that it was connected to. The key restriction is that objects must be one of $q$ types and the spares have to be programmed to one of the $q$ types \textit{before} the failure events are known.
Such a repair step consists of replacing each failed primary node with one of the working spares that it is connected to. Each spare can only replace one failed primary node. The key assumptions are 1) the primary nodes are one of $q$ different types (called labels) 2) the spares have to be programmed to one of the $q$ labels \textit{before} the failure events are known and 3) the same connections need to repair all possible choices of labels for the $k$ primary nodes.  
We are interested in minimizing the number of spare nodes and the number of connections to spare nodes. 

Key to our problem formulation is the idea that we want to design the interconnect (wires) before any of the node labels are determined.
%\footnote{We will return to this and other models in \ref{sec::otherScenarios}}. 
One might argue that in some applications the interconnect could be allowed to depend on the labeling of primary nodes. Indeed, the latter will be known before the final topology for the chip is made. However, our procedure insists that the interconnect does not depend on this labeling. The advantage of this is that in the reconfigurable circuits framework, the provisional wire-layout is usable regardless of where any element is placed, providing the same defect tolerance guarantee for every possible placement. We seek a
\emph{universal} design, which is independent of element types and thus could serve as the new
standard cell for all defect-tolerant circuits. We further discuss alternative design methodologies in
Section~\ref{sec::discussion}. 

We intentionally abstracted our problem to a simple model which is more fundamental and relates to other applications needing redundancy for objects and a universal design. For example, instead of parts of a reconfigurable circuit, objects can represent elements in a programmable logic device (e.g., look-up tables (LUTs) in an FPGA). As part of periodical firmware update, a manufacturer assigns values of LUTs (both primary and redundant) without knowledge of
locations of device-specific failures. Then, a built-in algorithm for each failed LUT $T$ reconnects it to an adjacent spare LUT $R$, with the requirement that $R$ and $T$ be equivalent. This built-in local algorithm is a computationally non-demanding way to reconfigure around defective LUTs. Note that the interconnect of the LUTs need to be universal so that any update chosen by the manufacturer (these updates change the configuration of the primary LUTs) has the same guarantee against defects. 

For $q=2$ our problem is equivalent to finding sparsity vs. edge-size trade-off for $(t,t)$-colorable hypergraphs,
cf.~\cite{alon2009power}. See Section~\ref{sec:hyper}. Other applications potentially arise in warehouse planning, operations research, public safety etc. \textcolor{black}{Such applications can be conceived after
realizing that our interconnect may be thought of as a transportation network between a collection of ``sinks'' and ``sources'' 
so that each sink can be serviced by at least $t$ sources, where each sink has a type and can be
serviced only by sources of the same type.}

Expressed mathematically, we are looking for a $k \times m$ bipartite graph with the property that for any $q$-coloring of the left-side
nodes there is a $q$-coloring of the right-side nodes such that each of the $k$ left-side nodes is connected to at least $t$ nodes of its
color. The goal is to find bipartite designs which have efficient trade-off in redundancy ${m/ kt}$
vs. number of edges. 

The high-level summary of our main findings is that when $q\ge k$, no strategy is better than straightforward $t$-fold replication. When $q<k$, there exist designs that provide savings compared to repetition. We fully or partially characterize the fundamental trade-off between redundancy $m/kt$ and
the average number of edges per primary node in the following cases: 
\begin{enumerate}
	\item $q$, $t$ fixed and $k,m\to \infty$; 
	\item $q$ fixed and $k,m,t\to\infty$; 
	\item $q$, $k$ fixed and  $m, t \to \infty$.
\end{enumerate}
Perhaps surprisingly, in this (combinatorial) problem it is possible to obtain exact analysis for
asymptotics.
The organization of the paper is as follows. Section~\ref{sec:def} introduces the problem formally and overviews main results. Section~\ref{sec:example} demonstrates small-size examples
that show non-triviality of the problem. Sections~\ref{sec::finiteT}
and~\ref{sec::asymSection} address the trade-off in the regime of fixed $t$ and $t\to\infty$ respectively. Finally,
Section~\ref{sec::discussion} discusses implications and extensions of our results.

The notation $[n]$ denotes positive integers $1, 2,... ,n$. The notation $\mathbbm{1}\{\cdot\}$ denotes the indicator function. An underlined letter (e.g, $\vect x$) stands for a vector quantity.

\begin{figure}
\centering
\subfigure[Design]{
	\includegraphics[scale = .2]{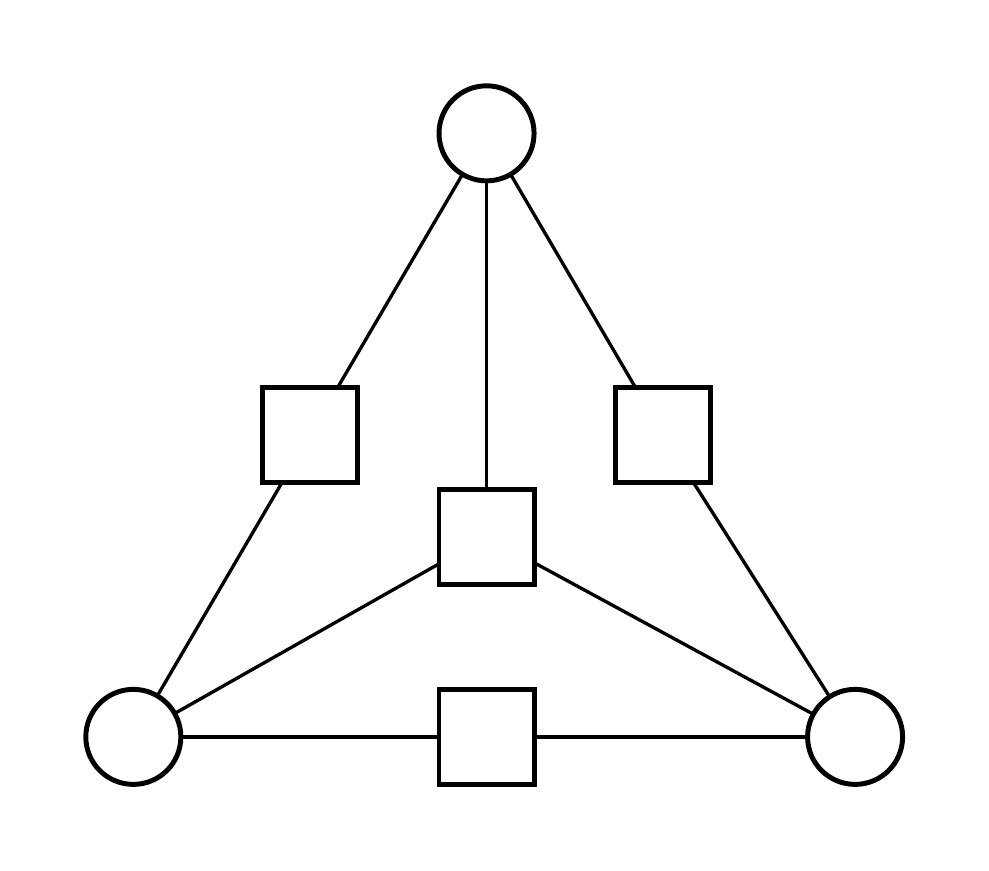}
    \label{fig::exampleBlank}  
}
\subfigure[Primary node labeling]{
	\includegraphics[scale = .2]{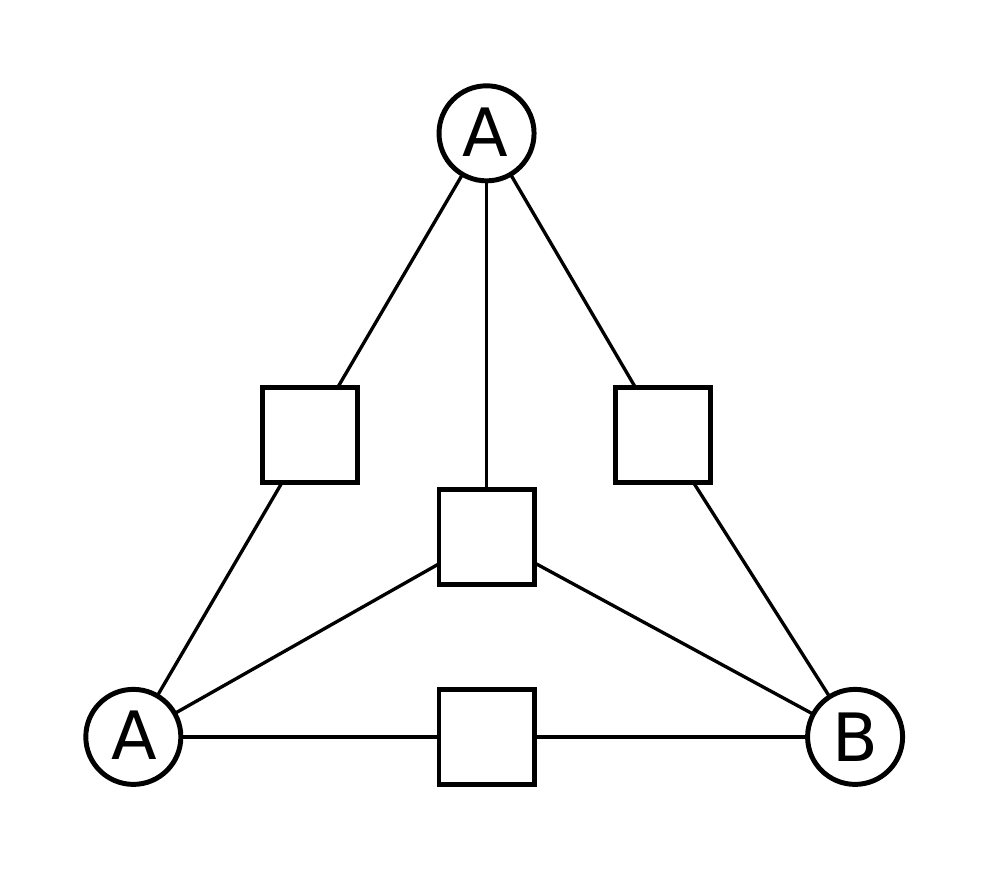}
	\label{fig::exampleFunc}
} 
\subfigure[Redundant node labeling]{
	\includegraphics[scale = .2]{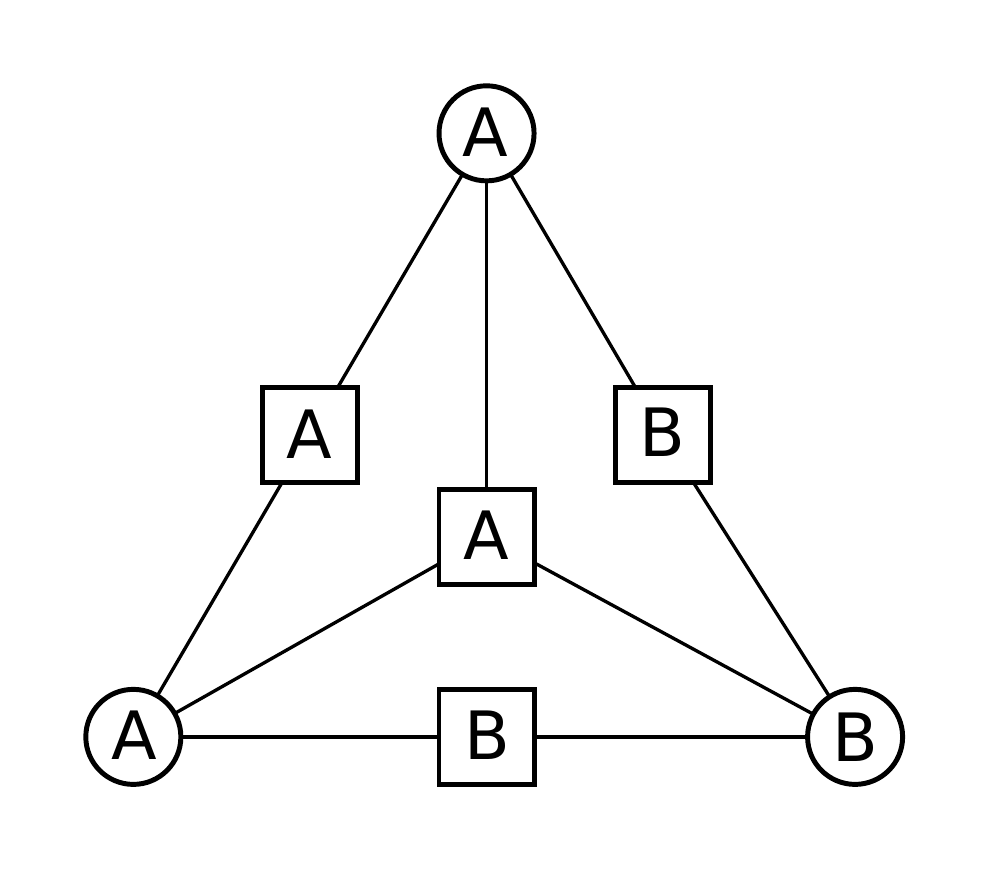}
	\label{fig::exampleRedun}
}

\caption{\label{fig::exampleDesign} Example of a $2$-defect correcting design for an alphabet $\matx = \{A, B\}$ of size 
$q = 2$. \textcolor{black}{The design is bipartite with the circles representing the left-side nodes and the squares representing the right-side nodes.} \Cref{fig::exampleFunc} shows a labeling of the primary (circle) nodes. To each such labeling, 
we strategically choose a labeling of the redundant (square) nodes, so that each primary node has $t=2$ neighbors with matching
labels (see \Cref{fig::exampleRedun}). Since such a choice is possible for each of the $2^3 = 8$ possible labelings of primary nodes, we
conclude that this design is $t = 2$ defect correcting in the sense of \Cref{def:main} \textcolor{black}{and is a $(3,4,2,9)_2$-design}.}
\end{figure}

%------------------PROBLEM SETUP/MAIN RESULTS--------------------%
%------------------PROBLEM SETUP/MAIN RESULTS--------------------%
%------------------PROBLEM SETUP/MAIN RESULTS--------------------%
%------------------PROBLEM SETUP/MAIN RESULTS--------------------%

\section{Problem setup and main results}\label{sec:def}

\subsection{Defect-tolerance model} 

This paper focuses on bipartite graph designs.\footnote{\textcolor{black}{The word choice of ``design'' is not intended to relate to the notion of
combinatorial (Steiner) designs or any other established mathematical definitions.}} The left-side nodes of the bipartite graph are called the primary nodes. These are denoted by circles and there are $k$ of these in the bipartite graph. The right-side nodes are the redundant nodes. These are denoted by squares and there are $m$ of these in the bipartite graph.

Let $\matx$ be a finite alphabet where $ q =|\matx|$. 

\begin{definition} Fix an alphabet of labels $\matx$ with size $q$. A $k \times m$ bipartite graph is called a \emph{$t$-defect correcting design}
if for any labeling of $k$ primary nodes with elements of
$\matx$ there exists a labeling of $m$ redundant nodes with elements of $\matx$
such that every primary node labeled $x\in\matx$ has at least $t$ neighbors labeled $x$.
We will call such a graph a $(k,m,t,E)_q$-design, with $E$ denoting the number of
edges. (See \Cref{fig::exampleDesign} for an illustration.)
\label{def:main}
\end{definition}
This paper is devoted to characterizing the fundamental trade-off between the two basic parameters of $t$-defect correcting designs: redundancy and wiring complexity. The redundancy of a $(k,m,t,E)_q$-design is $\rho={m/(kt)}$. The wiring complexity (or average degree per defect) of a $(k,m,t,E)_q$-design is $\varepsilon={E/(kt)}$.
This trade-off can be encoded in a two-dimensional region:
\begin{definition} For a fixed $q$ and $t\ge 1$ we define the region $\matr_t$ as the closure of the set of all achievable
pairs of $(\varepsilon, \rho)$:
\begin{equation}\label{eq:rtdef}
	\mathcal{R}_t \eqdef \mathrm{closure} \left\{\left(\frac{E}{kt}, \frac{m}{kt}\right): \exists (k,m,t,E)_q\text{-design}\right\}\,.
\end{equation}
\end{definition}

%The exact relation to the previous definition of the $t$-defect correcting design is as follows: the $k$ primary nodes in our model represents the placeholders intended for the components which are necessary for the chip to operate and the redundant nodes represents the placeholders for the redundant components. The labeling we apply to the nodes is the choice of components for each space. The edges correspond to the provisional wires. 

 To interpret between \Cref{def:main} and reconfigurable circuits (and other applications), we present the following association. 

\begin{proposition} An interconnect for a reconfigurable circuit can tolerate any $t$ manufacturing defects for any choice of primary nodes if and only if the interconnect is a $t$-defect correcting design.
\end{proposition}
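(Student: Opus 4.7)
The proposition asserts equivalence between the physical repair requirement described in Section~\ref{sec:reconf} and the purely combinatorial condition of Definition~\ref{def:main}. My plan is to prove both directions, using Hall's marriage theorem to bridge the neighborhood-counting condition of Definition~\ref{def:main} and the concrete assignment demanded by the repair step.

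For the ``if'' direction, suppose the interconnect is a $(k,m,t,E)_q$-design. Once the manufacturer has chosen types for the $k$ primary elements (Step 2), the defining property supplies a labeling of the $m$ redundant placeholders (Step 3) in which every primary node of type $x\in\matx$ has at least $t$ same-type redundant neighbors. Now consider any defect pattern with $f_p$ failed primaries and $f_r$ failed redundants satisfying $f_p + f_r \le t$. For each type $x$, let $S_x$ be the set of failed primaries of that type; the repair (Step 5) requires matching each element of $S_x$ to a distinct adjacent working redundant of type $x$. Each primary in $S_x$ still has at least $t - f_r \ge f_p \ge |S_x|$ working same-type neighbors, so the neighborhood (inside the working type-$x$ redundants) of any $S' \subseteq S_x$ has size at least $|S'|$. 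Hall's condition is therefore satisfied within each type class, and the within-type matchings combine into a valid global repair.

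For the ``only if'' direction, assume the interconnect tolerates every defect pattern of size at most $t$. Fix any labeling of the primary nodes; the manufacturer must commit in advance to some labeling of the redundant nodes. Suppose for contradiction that some primary $p$ of type $x$ has $s < t$ same-type redundant neighbors under this commitment. Failing $p$ together with those $s$ neighbors is a defect pattern of size $s + 1 \le t$. Any valid repair must match $p$ to a working adjacent redundant of type $x$, but by construction none exists, a contradiction. Hence the chosen labeling gives every primary at least $t$ same-type neighbors, certifying the condition of Definition~\ref{def:main} for the given primary labeling; since the latter was arbitrary, the graph is a $t$-defect correcting design.

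The main subtlety is the matching step in the ``if'' direction: Definition~\ref{def:main} guarantees only a vertex-count condition, whereas the physical repair requires a concrete injection from failed primaries to working same-type redundants. Hall's theorem converts one into the other, and the bound $t$ on the total number of failures is precisely what makes the counting chain $t - f_r \ge f_p \ge |S_x|$ the correct one. The reverse direction is a contrapositive construction whose only freedom, namely which objects to fail, is spent on isolating failures at a single under-covered primary together with all of its same-type redundant neighbors.
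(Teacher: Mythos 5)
Your proof is correct and follows essentially the same strategy as the paper's: the ``only if'' direction is the identical contrapositive (fail an under-covered primary together with all of its same-type redundant neighbors), and the ``if'' direction rests on the same count that at most $t-1$ of a failed primary's $t$ same-type neighbors can be unavailable. Your invocation of Hall's theorem simply makes rigorous the simultaneous-assignment step that the paper handles with an informal greedy argument, and the verification of Hall's condition via $t - f_r \ge f_p \ge |S_x|$ is sound.
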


\begin{proof}
If the interconnect corrects fewer than $t$ defects, there is some primary node labeling where any labeling of the redundant nodes would result in some primary node with label $x$ having fewer than $t$ neighbors with the same label $x$. If this primary node and all its matching neighbors have defects, then the defect in the primary node cannot be corrected.

If the interconnects is a $t$-defect correcting design, with the correct redundant node labeling scheme, any primary node labeled $x$ has $t$ redundant neighbors with the same label $x$. If there are only $t$ defects, either this primary node is working, or this primary node has a defect and at most $t-1$ of its neighbors have a defect or are used to correct another primary node. In the latter case, there is at least one redundant node with label $x$ available which can be used to replace this primary node. 
\end{proof}

As noted earlier, our performance metrics, $\rho$ and $\varepsilon$, correspond to the extra silicon area and wiring (and fan-out) required respectively for defect-tolerance.

Before proceeding further, we summarize some of the basic properties of regions $\matr_t$.
\begin{proposition}(Properties of $\matr_t$)\label{prop:basic}
Regions $\matr_t$ satisfy the following:
\begin{enumerate}
\item $(\varepsilon,\rho)\in\matr_t$ iff there exists a sequence of $(k,m,t,E)_q$-designs with ${E\over kt}\to\varepsilon,{m\over kt}\to\rho$ as $k,m\to\infty$; \label{prop:basic:1}
\item If $(\varepsilon,\rho) \in \matr_t$ and $\varepsilon'\ge \varepsilon,\rho'\ge \rho$ then $(\varepsilon',\rho')\in \matr_t$; \label{prop:basic:2}
\item $\matr_t$ are closed convex subsets of $\mreals_+^2$; \label{prop:basic:3}
\item We have
\begin{equation}\label{eq:rinfty}
	 \limsup_{t\to\infty} \matr_t = \mathrm{closure} \left\{\bigcup_{t\ge 1}\matr_t \right\} \eqdef \matr_\infty\,.
\end{equation} \label{prop:basic:4}
\item The limiting region $\matr_\infty$ is also a closed convex subset of $\mreals_+^2$ characterized as
\begin{equation}\label{eq:rinfty2}
	\mathcal{R}_\infty \eqdef \mathrm{closure} \left\{\left(\frac{E}{kt}, \frac{m}{kt}\right): \exists (k,m,t,E)_q-\text{design}\right\}\,.
\end{equation}\label{prop:basic:5}
\end{enumerate}
\end{proposition}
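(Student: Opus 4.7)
The plan is to verify all five items using a small arsenal of graph-theoretic manipulations on designs: disjoint union, addition of isolated redundant nodes, addition of extra edges, and a \emph{blow-up} operation that splits each redundant node into $s$ identical copies. Each such operation interacts cleanly with the $t$-defect correcting condition and together they provide enough flexibility to approximate any point in the regions of interest.

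For item \ref{prop:basic:1} the only non-trivial direction is ``only if'': starting from a sequence of approximating designs guaranteed by the closure in \eqref{eq:rtdef}, I would replace any design whose $k$ has not grown by a disjoint union of $s$ copies of itself, producing an $(sk,sm,t,sE)_q$-design with identical ratios $\varepsilon$ and $\rho$, and let $s \to \infty$ along the sequence. For item \ref{prop:basic:2} I would observe that adding an isolated redundant node (which can be labeled arbitrarily without consequence) preserves the $t$-defect correcting property and raises only $\rho$, while adding any edge between an existing primary and an existing redundant preserves the property and raises only $\varepsilon$; combining these moves with the scaling of item \ref{prop:basic:1} approximates any target $(\varepsilon',\rho')$ with $\varepsilon'\ge\varepsilon$ and $\rho'\ge\rho$. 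For item \ref{prop:basic:3}, closedness is automatic from the definition, and convexity follows by placing $n_1$ disjoint copies of a design at $(\varepsilon_1,\rho_1)$ alongside $n_2$ disjoint copies of a design at $(\varepsilon_2,\rho_2)$: the combined ratios form the $k$-weighted average
\begin{equation*}
\Bigl(\tfrac{n_1 k_1 \varepsilon_1 + n_2 k_2 \varepsilon_2}{n_1 k_1 + n_2 k_2},\ \tfrac{n_1 k_1 \rho_1 + n_2 k_2 \rho_2}{n_1 k_1 + n_2 k_2}\Bigr),
\end{equation*}
which, as $(n_1,n_2)$ ranges over positive integers, is dense in the segment joining the two original points.

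The substantive step is items \ref{prop:basic:4} and \ref{prop:basic:5}, which both rest on the \emph{blow-up lemma}: $\matr_t\subseteq\matr_{st}$ for every $s\ge 1$. I would prove it by replacing each redundant node in a $(k,m,t,E)_q$-design with $s$ identical copies and connecting each primary to all copies of its former neighbors; labeling every copy like its parent promotes ``at least $t$ same-colored neighbors'' into ``at least $st$ same-colored neighbors,'' yielding a $(k,sm,st,sE)_q$-design with $\varepsilon$ and $\rho$ unchanged. Since every point of any $\matr_{t_0}$ therefore lies in $\matr_t$ for arbitrarily large $t$, every neighborhood meeting $\bigcup_t \matr_t$ meets $\matr_t$ for infinitely many $t$, giving $\mathrm{closure}(\bigcup_t \matr_t)\subseteq\limsup_{t\to\infty}\matr_t$; the reverse containment holds for any reasonable notion of $\limsup$. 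The characterization \eqref{eq:rinfty2} then reduces to commuting closure with union, and convexity of $\matr_\infty$ follows by lifting any two achievable points at defect counts $t_1,t_2$ into the common region $\matr_{t_1 t_2}$ via the blow-up and invoking item \ref{prop:basic:3}.

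The only delicate step is the blow-up lemma, which must simultaneously scale $t$, $m$, and $E$ by the same factor while holding $\varepsilon$ and $\rho$ fixed; the remaining work is routine verification that each elementary graph modification preserves the $t$-defect correcting property.
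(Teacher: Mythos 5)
Your proposal is correct and follows essentially the same route as the paper: disjoint unions (the paper's ``copying'') handle items \ref{prop:basic:1}--\ref{prop:basic:3}, and your blow-up lemma is exactly the paper's self-merging $\bigvee_{i=1}^{s}G$ (identifying the primary nodes of $s$ disjoint copies produces precisely $s$ identical copies of each redundant node), which the paper likewise uses to get $\matr_t\subseteq\matr_{2t}\subseteq\matr_{4t}\subseteq\cdots$ for item \ref{prop:basic:4} and to lift two designs to a common defect count for the convexity of $\matr_\infty$ in item \ref{prop:basic:5}.
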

See Section~\ref{sec:proofbasic} for proofs.

%-----------------------MAIN RESULTS-----------------------%
%-----------------------MAIN RESULTS-----------------------%
%-----------------------MAIN RESULTS-----------------------%
%-----------------------MAIN RESULTS-----------------------%

\subsection{Preview of main results for binary alphabet}

\textcolor{black}{Notice the wiring complexity and redundancy metrics represent the linear scaling between the quantities $E$ and $m$ respectively with the product $kt$. Designs which satisfy \Cref{def:main} must have the number of redundant nodes and number of edges grow linearly with the product $kt$. The goal of our results is to find a tight understanding of the coefficient in this linear scaling.}\footnote{\textcolor{black}{For all our results, $q$ is always fixed. How wiring complexity and redundancy scales with $q$ is left for future work.}}

In this section, for the purpose of illustration, we give a summary of our results for the case of binary alphabet $\matx$ (i.e., $q = 2$). The rest of the paper will present various bounds and constructions which apply to general alphabet sizes, (i.e., arbitrarily values of $q$).

There are three separate results which are the main contributions of this paper. One is characterizing the region $\matr_t$ in the regime where $t$ is small, specifically for values where $t = 1$ and $t = 2$. The second main result is characterizing the region $\matr_\infty$, which corresponds to the limit of regions $\matr_t$ when $t$ tends to infinity. The third is characterizing the result when the number of primary nodes $k$ is finite (the first two results have infinite $k$) and $t$ tends to infinity.

The theorem for the small $t$ case is the following:  

\begin{theorem}\label{thm::finiteResult}
For binary alphabet $\matx$, if $t = 1$ or $t = 2$, we have
\begin{equation}
\label{eq::linRegion}
\mathcal{R}_t = \{(\varepsilon, \rho): \varepsilon \geq 1, \rho \geq 0 \text{ and } \varepsilon \geq 2 - \rho\}\,.
\end{equation}
\end{theorem}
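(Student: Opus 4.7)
The theorem has two directions: a converse showing every $(k,m,t,E)_2$-design satisfies both $\varepsilon \geq 1$ and $\varepsilon + \rho \geq 2$, and an achievability showing every point of the claimed region lies in $\matr_t$. The bound $\varepsilon \geq 1$ (equivalently $E \geq kt$) is immediate, since every primary must have degree at least $t$ in order to possess $t$ same-colored neighbors. The substantive inequality is $E + m \geq 2kt$.

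\textbf{Converse, $t=1$.} I would argue structurally. If a primary $v$ has degree exactly $1$, then its unique neighbor $r_v$ must be labeled $\chi(v)$ under every valid redundant labeling, so $r_v$ is rigidly tied to $v$. Two such degree-$1$ primaries cannot share their sole neighbor, because the adversarial coloring that assigns them opposite colors would force $r_v$ to take two values simultaneously. Hence, letting $a$ denote the number of degree-$1$ primaries, the $a$ associated sole neighbors are distinct, so $a \leq m$. Combining with the degree bound $E \geq 1\cdot a + 2(k - a) = 2k - a$, we get $E + m \geq 2k$, as desired.

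\textbf{Converse, $t=2$.} The same rigidity idea gives that any two degree-$2$ primaries must have disjoint neighborhoods (any shared neighbor would again need conflicting labels), so $2a_2 \leq m$ where $a_2$ counts the degree-$2$ primaries. The naive bound that falls out is $E + m \geq 3k + a_2$, which does not reach $4k$. I plan to close the gap by incorporating degree-$3$ primaries through a flipped-redundant accounting: using the two extreme colorings $\chi_A \equiv A$ and $\chi_B \equiv B$ together with valid labelings $\sigma_A,\sigma_B$, define $F := \sigma_A^{-1}(A)\cap\sigma_B^{-1}(B)$. Inclusion-exclusion yields $|N(v)\cap F| \geq (2t-d_v)^+$, so that every degree-$2$ primary contributes both its neighbors to $F$ (reproducing the disjointness conclusion) and every degree-$3$ primary contributes at least one neighbor to $F$. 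Tracking how the $F$-neighbors of degree-$3$ primaries may overlap with the rigid $2a_2$ already in $F$ via further adversarial colorings of the form ``one primary dissenting'' closes the counting to $\sum_v(4-d_v)^+ \leq E + m - 2kt$, giving $E + m \geq 4k$.

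\textbf{Achievability.} By Proposition~\ref{prop:basic}, $\matr_t$ is a closed, convex, monotone (up-set) subset of $\mreals_+^2$, so it suffices to exhibit designs approaching the two extreme points $(1,1)$ and $(2,0)$. The point $(1,1)$ is realized by trivial $t$-fold replication: each primary receives $t$ private redundants, giving $E=m=kt$. The point $(2,0)$ is approached asymptotically by a grid construction: arrange $k=n^2$ primaries in an $n\times n$ grid, introduce $2n$ redundants (one per row and one per column), each edge $t$-thickened (i.e., use $t$ parallel edges, formalized by replicating every row/column redundant $t$ times). A direct case analysis verifies that for every primary coloring one may label the row-redundants and column-redundants to ensure $t$ matching neighbors per primary; this yields $\varepsilon = 2$ and $\rho = 2/n \to 0$. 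Intermediate points on the segment $\varepsilon + \rho = 2$ with $\varepsilon \in [1,2]$ are then obtained by taking disjoint unions of a replication design on $\lambda k$ primaries and a grid design on $(1-\lambda)k$ primaries, invoking convexity and closure.

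\textbf{Main obstacle.} The $t=2$ converse is the delicate step: the rigid-neighborhood disjointness that gave $t=1$ instantly becomes only one of two ingredients at $t=2$, since degree-$(t+1)=3$ primaries do \emph{not} admit a clean disjointness statement, and the flipped-redundant inequality must be combined carefully with the degree-$2$ structure to aggregate to $4k$. This case-dependence is plausibly the very reason the theorem is stated only for $t \in \{1,2\}$: for $t \geq 3$ one expects $\varepsilon + \rho \geq 2$ to no longer be tight, so both the converse and the constructions would need to be genuinely different.
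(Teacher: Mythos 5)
Your achievability and your $t=1$ converse are both sound. The $\varepsilon\ge 1$ bound and the $t=1$ rigidity argument (degree-$1$ primaries force distinct, rigidly-labeled neighbors, so $k\pi_1\le m$ and $E\ge 2k-k\pi_1$) give exactly the paper's conclusion for $t=1$ by a more structural route; the grid construction for the point $(2,0)$ works (indeed trivially: label all row-redundants $B$ and all column-redundants $A$, so every primary has $t$ neighbors of each color), though the paper simply uses the complete design $K(k,qt)$, and the interpolation via convexity is the same.

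The $t=2$ converse, however, has a genuine gap, and the specific inequality you aim for is false. The Hamming block is a valid $(3,4,2,9)_2$-design in which every primary has degree $3$, so $\sum_v(4-d_v)^+=3$ while $E+m-2kt=9+4-12=1$; thus $\sum_v(2t-d_v)^+\le E+m-2kt$ cannot be proved, and the ``one primary dissenting'' refinement is too vague to substitute for it. The obstruction is exactly the one you identify: degree-$3$ primaries admit no disjointness statement, so edge/vertex counting alone cannot reach $E+m\ge 4k$. The paper closes this with a different idea --- a covering (union-bound) argument over redundant labelings. For a fixed labeling $r^m$ of the redundant nodes, a primary of degree $d$ can be validly colored in at most $\lfloor d/2\rfloor$ ways, so degree-$2$ and degree-$3$ primaries contribute a factor of $1$ each and only degree-$\ge 4$ primaries contribute a factor of $2$; combined with the fact that the disjoint neighborhoods of the degree-$2$ primaries must be monochromatic (cutting the number of useful $r^m$ from $2^m$ to $2^{m-k\pi_2}$), the requirement that all $2^k$ primary labelings be covered yields $k\le (m-k\pi_2)+k\pi_{\ge 4}$. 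Together with $2\varepsilon\ge 2\pi_2+3\pi_3+4\pi_{\ge4}=3-\pi_2+\pi_{\ge4}$ this gives $\varepsilon\ge 2-\rho$. You would need to import this counting step (or an equivalent handling of degree-$3$ primaries) to complete your converse.
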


This will be proved in Section \ref{sec:covering}. \textcolor{black}{The immediate conclusion from this result is that the designs for $t = 1$ and $t = 2$ achieve the same number of redundant nodes and edges needed per primary node per defect asymptotically over $k$. The region in \Cref{thm::finiteResult} has two corner points.}  We will also discuss the designs which attain these corner points.

The theorem for the asymptotic $t$ case is the following:

\begin{theorem}\label{thm::asymResult}
Let $\matx$ be a binary alphabet. The region $\matr_\infty$ defined in~\eqref{eq:rinfty} is the closure of the set of points
$(\varepsilon,\rho)$ defined as follows. For every distribution $P_S$ on $\mathbb{Z}_+$ with finite support, we define 
\begin{subequations}
\begin{equation}\label{eq:asymResult}
	\varepsilon = {\EE[S]\over F(P_S)}, \quad \rho = {1\over F(P_S)}\,,
\end{equation}
where
\begin{equation}\label{eq::optfunc}
\begin{split}
F(P_S) &\eqdef \min_{0 \leq \lambda \leq 1} \max_{0 \leq f(\cdot, \cdot) \leq 1} \min \left\{
\EE\left[\frac{L_0}{\lambda} f(L_0,L_1)\right], \right. \\
& \quad \quad \quad \quad \left. \EE\left[\frac{L_1}{1-\lambda} (1-f(L_0,L_1))\right] \right\}
\end{split}
\end{equation}
\end{subequations}
where the expectations are over $S\sim P_S$ and given $S$ the distribution of $L_1 \sim \text{Bino}(S, \lambda)$ and $L_0=S-L_1$. 
\end{theorem}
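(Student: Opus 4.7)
The plan is to prove Theorem~\ref{thm::asymResult} by identifying the distribution $P_S$ in~\eqref{eq::optfunc} with the empirical degree distribution of the $m$ redundant (square) vertices of a $(k,m,t,E)_2$-design. In that formula, $\lambda\in[0,1]$ represents the adversarial fraction of primary nodes assigned one color (so $L_1\sim\text{Bino}(S,\lambda)$ is the number of such neighbors seen by a random redundant of degree $S$), the function $f(L_0,L_1)\in[0,1]$ is a randomized color-assignment rule applied to a redundant given its local profile, and the two expectations in the inner $\min$ encode the two color-match feasibility constraints normalized by their respective color demands. The characterization then follows from a matching pair of converse and achievability bounds pegged to $P_S$.

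For the converse, I would fix an arbitrary $(k,m,t,E)_2$-design with empirical redundant-degree distribution $P_S$ and any $\lambda\in[0,1]$, draw a random primary labeling $\vect x\in\{0,1\}^k$ with i.i.d.\ Bernoulli$(\lambda)$ entries, and let $r(\vect x)$ denote the valid redundant labeling guaranteed by the defect-correcting property. For each redundant $v$, define
\begin{equation*}
f_v(L_0,L_1)\eqdef\PP\bigl(r_v(\vect x)=1\bigm|L_0(v,\vect x)=L_0,\ L_1(v,\vect x)=L_1\bigr)\in[0,1],
\end{equation*}
and note that under the random labeling $L_1(v,\vect x)\sim\text{Bino}(\deg v,\lambda)$. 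Summing the per-primary requirement ``at least $t$ matching redundant neighbors'' over primaries of each color, switching sums to the redundant side, and taking expectations via linearity yields --- after averaging the $f_v$ over redundants of common degree into a single function $f$ of $(L_0,L_1)$ --- two inequalities that rearrange precisely to $1/\rho$ being upper-bounded by each of the two expectations in~\eqref{eq::optfunc}. Combining and optimizing gives $1/\rho\leq\max_f\min\{\cdots\}$; since $\lambda$ was arbitrary, $1/\rho\leq F(P_S)$, and $\varepsilon\geq\EE[S]/F(P_S)$ follows from $E=m\EE[S]$.

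For achievability, given any finitely supported $P_S$ and $\epsilon>0$, I would take $t$ large and choose $k,m$ so that $m/(kt)\to 1/F(P_S)+\epsilon$, then construct a random bipartite graph via the configuration model with redundant degree sequence sampled i.i.d.\ from $P_S$. For each primary labeling of observed type $\lambda$, I would label the $m$ redundants independently at random by the optimizer $f_\lambda^{*}$ of the inner $\max_f\min\{\cdots\}$ in~\eqref{eq::optfunc}. Standard Chernoff/large-deviation bounds, applied both to the random graph (so each redundant's local profile concentrates on the Bino$(\deg v,\lambda)$ law that appears in~\eqref{eq::optfunc}) and to the independent randomized labeling, imply that with probability $1-o_t(1)$ every primary attains at least $t$ matching redundant neighbors. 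A union bound over the $O(k)$ possible type classes of primary labelings yields a \emph{deterministic} design that works for every labeling; letting $\epsilon\to 0$ along a sequence of such designs places $(\EE[S]/F(P_S),\,1/F(P_S))$ in $\matr_\infty$, and the monotonicity and closure properties from Proposition~\ref{prop:basic} complete the characterization of the region.

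The main obstacle is the achievability, specifically verifying feasibility of the \emph{integer} redundant labeling for \emph{every} primary labeling simultaneously. Three technical points must be handled carefully: (i) the fractional rule $f_\lambda^{*}$ must be derandomized to an integer labeling, which is accomplished by Chernoff concentration together with a small multiplicative slack in $m$ absorbing the $o(1)$ deviation; (ii) the optimizer $f_\lambda^{*}$ depends on the observed primary-labeling type $\lambda$, so one must use a labeling-adaptive strategy and union-bound over the polynomially many types; and (iii) the local profile $(L_0(v),L_1(v))$ of each redundant in the random graph must concentrate on Bino$(\deg v,\lambda)$, which follows from standard configuration-model estimates once $\mathrm{supp}(P_S)$ is finite. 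The converse, by contrast, is essentially a routine averaging/minimax argument once the right conditional probability $f_v$ has been extracted from the design.
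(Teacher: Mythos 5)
Your converse is correct and takes a genuinely different route from the paper's. The paper proves the converse by symmetrization (Proposition~\ref{prop:sym}: merge the $k!$ permuted copies of the design to obtain a subset design) and then invokes the exact worst-case analysis of subset designs (Proposition~\ref{prop::subsetFiniteKResult}) together with $F_k\le F$ (Lemma~\ref{lem::fklessf2k}). You instead fix the given design, randomize the primary labeling i.i.d.\ Bernoulli$(\lambda)$, extract the conditional rule $f_v$ from a selector of valid redundant labelings, and average over redundants of a common degree; since all degree-$s$ redundants see the same $\mathrm{Bino}(s,\lambda)$ profile law, the double counting of matched edges per color gives $kt/m\le\max_f\min\{\cdots\}$ for every $\lambda$, hence $1/\rho\le F(P_S)$ with $P_S$ the empirical degree distribution. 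This is a clean direct argument that avoids both the symmetrization step and the hypergeometric-to-multinomial limit, and I see no gap in it.

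The achievability, however, has a genuine gap as stated. Your union bound is taken over ``the $O(k)$ possible type classes of primary labelings,'' but the type $\lambda$ only determines the labeling \emph{rule} $f^*_\lambda$; it does not determine the realized profiles $(L_0(v),L_1(v))$, which depend on \emph{which} primaries carry each label. Two labelings of the same type induce different profile vectors in a fixed graph, and — more importantly — the quantity that must be controlled is not the average over primaries (which is what $F(P_S)$ computes) but the \emph{minimum} over primaries: for every one of the $2^k$ labelings and every one of the $k$ primary nodes $u$, the sum $\sum_{v\ni u}\PP[r_v=x_u]$ over $u$'s roughly $\Theta(t)$ redundant neighbors must exceed $t(1+\delta)$. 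Over the configuration-model randomness this is a concentration statement about the $2$-neighborhood of each $u$, with only $\Theta(t)$ summands, and it must survive a union bound over $k\cdot 2^k$ events — which forces $t$ to grow at least polynomially in $k$ and requires an Azuma/McDiarmid-type argument on the configuration, not the Chernoff-over-types bound you describe. The paper avoids this entirely: subset designs are permutation invariant, so every primary node is equivalent under an automorphism, the minimum over primaries equals the average, and the labeling (partitioning the $n$ merged copies into groups of sizes $nP_{Y|\vect L}(j|\vect\ell)$) is exact rather than probabilistic. Your plan is likely repairable by letting $t\gg\mathrm{poly}(k)$ and union-bounding over all labelings, but as written the derandomization step does not go through.
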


This theorem parametrically characterizes $\matr_\infty$ in terms of the function $F(P_S)$, which is evaluated on every $P_S$ with finite support. Note that evaluation of the bound~\eqref{eq:asymResult} is non-trivial as we will discuss in Section~\ref{sec::numSection}.

The generalization of \Cref{thm::asymResult} to larger alphabet sizes is given by \Cref{thm::asymResultLarger} and is developed in Section \ref{sec::asymSection}. Here the designs achieving the best trade-off are more complicated than those associated with Theorem~\ref{thm::finiteResult}. We call them \emph{subset designs} and develop them in \Cref{sec:subsetdesigns}. 

The resulting achievable regions for \Cref{thm::finiteResult} and \Cref{thm::asymResult} are depicted in \Cref{fig::AchRegion}. Via these results we can determine at any fixed redundancy level, how many connections are necessary. For example, at redundancy level $10\%$, the figure indicates that there exists designs which:
\begin{itemize}
\item correct $1$ defect if each primary node is connected on average to about $1.9$ redundant nodes
\item correct $2$ defects if each primary node is connected on average to about $1.9 \times 2$ redundant nodes
\item correct $10^3$ defects if each primary node is connected on average to about $1.7 \times 10^3$ redundant nodes.
\end{itemize}

\textcolor{black}{Immediate from \Cref{fig::AchRegion} is that the region $\matr_\infty$ contains the regions $\matr_1$ and $\matr_2$ implying that increasing the number of defects $t$ allows for lower redundancy and wiring complexity (recall both these quantities are divided by $t$). In this sense, it is more efficient to correct more defects.}  

According to~\eqref{eq:rinfty2} all regions $\mathcal{R}_t$ will lie between $\mathcal{R}_1$ and $\mathcal{R}_\infty$, approaching the
latter as $t\to\infty$. It is perhaps surprising that unlike most known asymptotic combinatorial problems, this one (for $t\to\infty$) admits a relatively simple solution.  

The third and the more practically useful result is the characterization of the achievable regions for asymptotic $t$ but with finite $k$. This is developed in \Cref{sec::finitek}. 

%We postpone further discussion of all the results till Section~\ref{sec::discussion}.

\begin{figure}
\centering
\includegraphics[scale = .45]{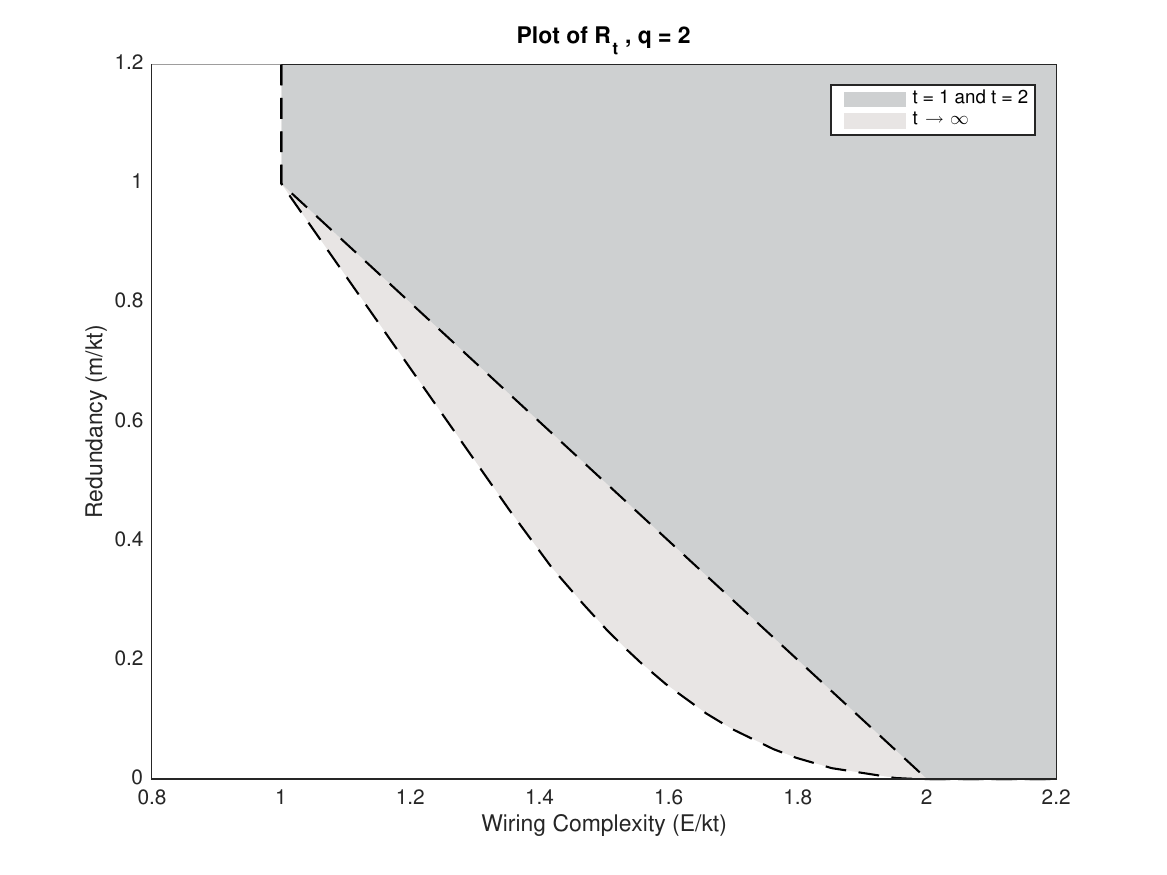}
\caption{Achievable regions for redundancy and wiring complexity trade-off when $q = 2$. Regions $\matr_1$ and $\matr_2$ are shown in darker gray. Region $\matr_\infty$ includes lighter and darker gray areas. All other regions $\matr_t$ lie between $\matr_1$ and $\matr_\infty$. The boundary of the region $\matr_\infty$ is calculated using the methods in \Cref{sec::numerical}.
\label{fig::AchRegion} }
\end{figure}

%-----------------------EXAMPLES-----------------------%
%-----------------------EXAMPLES-----------------------%
%-----------------------EXAMPLES-----------------------%
%-----------------------EXAMPLES-----------------------%

\section{Examples of good designs}\label{sec:example}

\begin{figure}
\centering
	\subfigure[Example of \textbf{complete design}. (Design written as $K(3,4)$). This design is $2$-defect correcting for $q = 2$ and $4$-defect correcting for $q = 1$.]{\hskip 40pt
	\includegraphics[scale = .2]{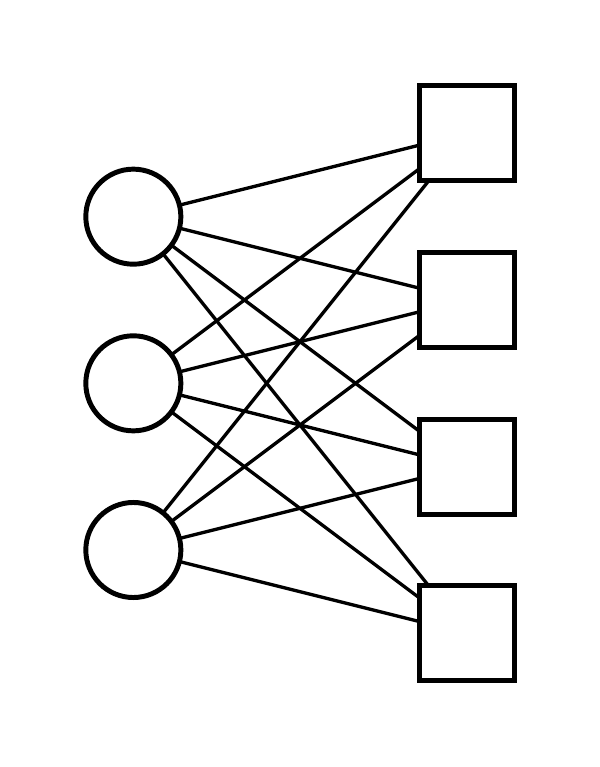}\hskip 40pt
	\label{fig::completeDesign1}
} \\[1em]
\subfigure[Example of \textbf{repetition design}. \textcolor{black}{(Design written as $2K(1,3)$)}. This design corrects 3 defects for any $q$.]{\hskip 40pt
	\includegraphics[scale = .2]{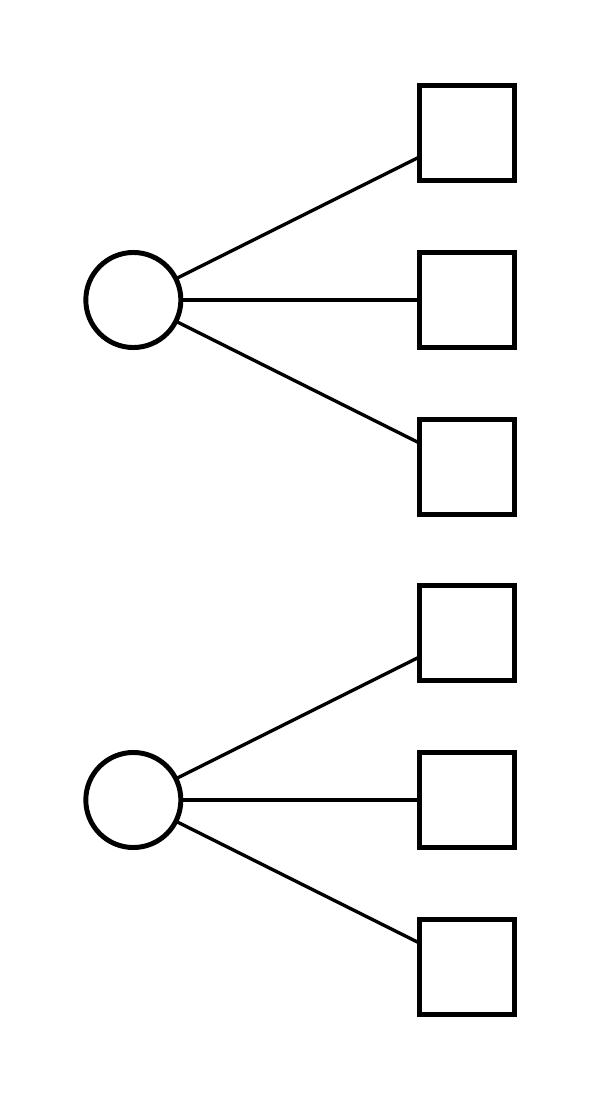}\hskip 40pt
	\label{fig::blockDesign1}
}
\caption{Two elementary designs.}\label{fig:elemdesigns}
\end{figure}

Before developing the main results, we will first introduce a few basic designs and analyze their performances. Some of these examples play major roles in subsequent developments. 

We denote by $K(k,m)$ a complete bipartite graph with $k$ primary nodes (circles) and $m$ redundant nodes (squares). The two most basic designs are the following:
\begin{enumerate}
\item \textit{Complete designs:} $K(k, qt)$ (recall that $q = |\matx|$) is $t$-defect correcting. (Just label
redundant nodes to hold $t$ copies of each value $\matx$. No matter how the primary nodes are labeled, each primary node will be connected to $t$ redundant nodes with the same label as itself.) See
\Cref{fig::completeDesign1} for illustration. 
\item \textit{Repetition designs:} $K(1,t)$ is capable of correcting $t$ defects over an arbitrary alphabet. (Just label all $t$ redundant nodes the same label as the neighboring primary node.) Taking $k$ disjoint copies of $K(1,t)$, denoted by $kK(1,t)$, we get a repetition design achieving $\rho=\varepsilon=1$. See
\Cref{fig::blockDesign1} for illustration. 
\end{enumerate}

If we take $k\to \infty$, the complete design achieves $\varepsilon=q$ and $\rho= \frac{qt}{kt} \rightarrow 0$ for any fixed $t$ and $q$, which is the best possible trade-off given the value of $\varepsilon$. For finite $k$, however, the complete design is not the design with the minimal number of edges: it is
possible to remove some of the edges and still maintain a $t$ defect correcting property, as we will show in the next
subsection. 

The repetition design uses the minimal number of edges (since any primary node needs at least $t$ edges in order to be a $t$-defect correcting design). If all primary nodes have exactly $t$ edges, then it is necessary for each primary node to have a distinct neighborhood, illustrating that the repetition design achieves the best trade-off at minimal wiring complexity. 

\subsection{Smallest non-trivial designs} \label{sec:smallOptimal}

\begin{figure}
\centering
\subfigure[$q=2$]{
	\includegraphics[scale = .18]{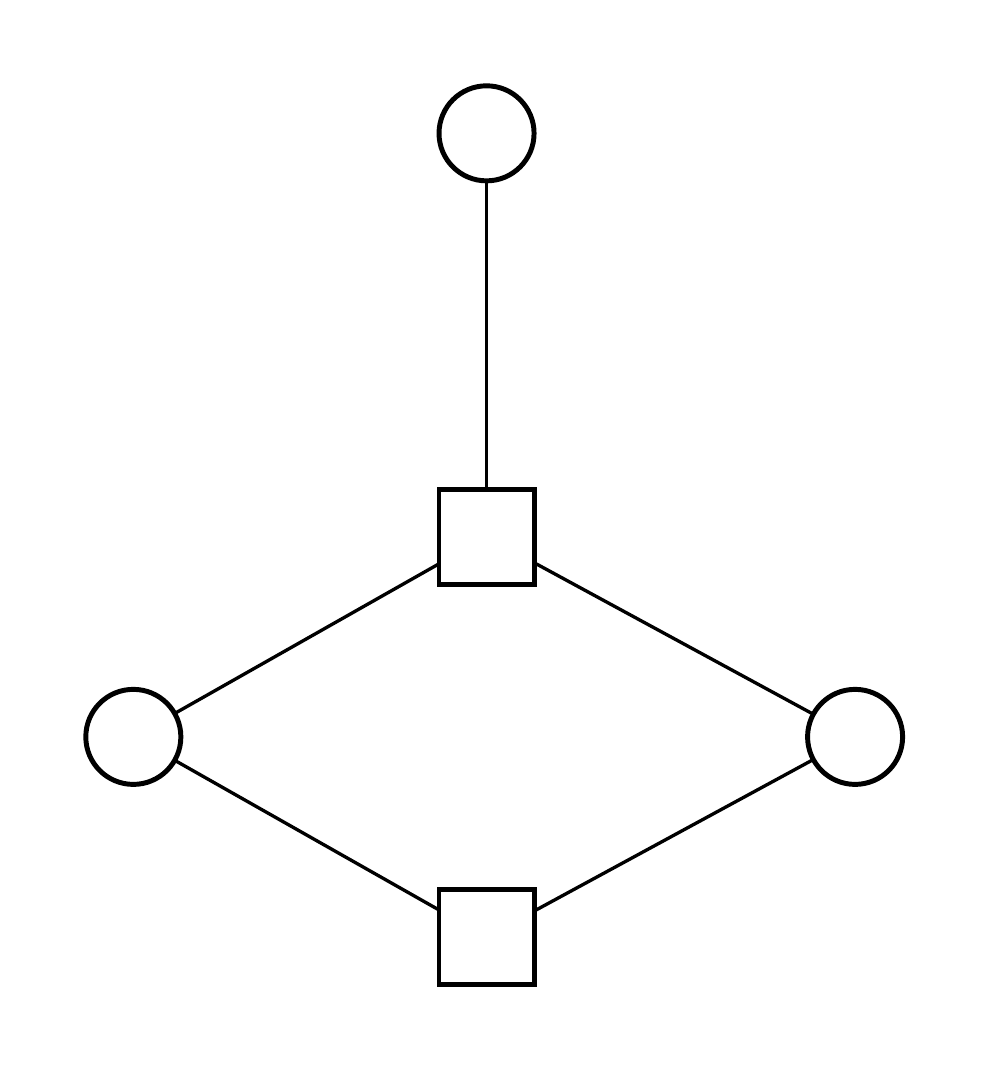}
    \label{fig::bint1}
}\\[1em]
\subfigure[$q=3$]{
	\includegraphics[scale = .18]{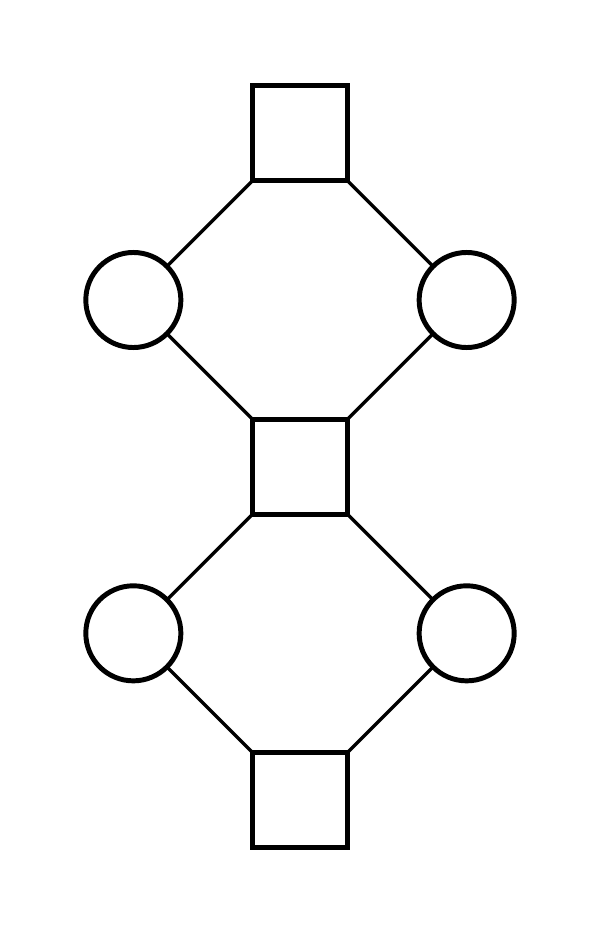}
	\label{fig::tert1}
} \\[1em]
\subfigure[$q=4$]{
	\includegraphics[scale = .18]{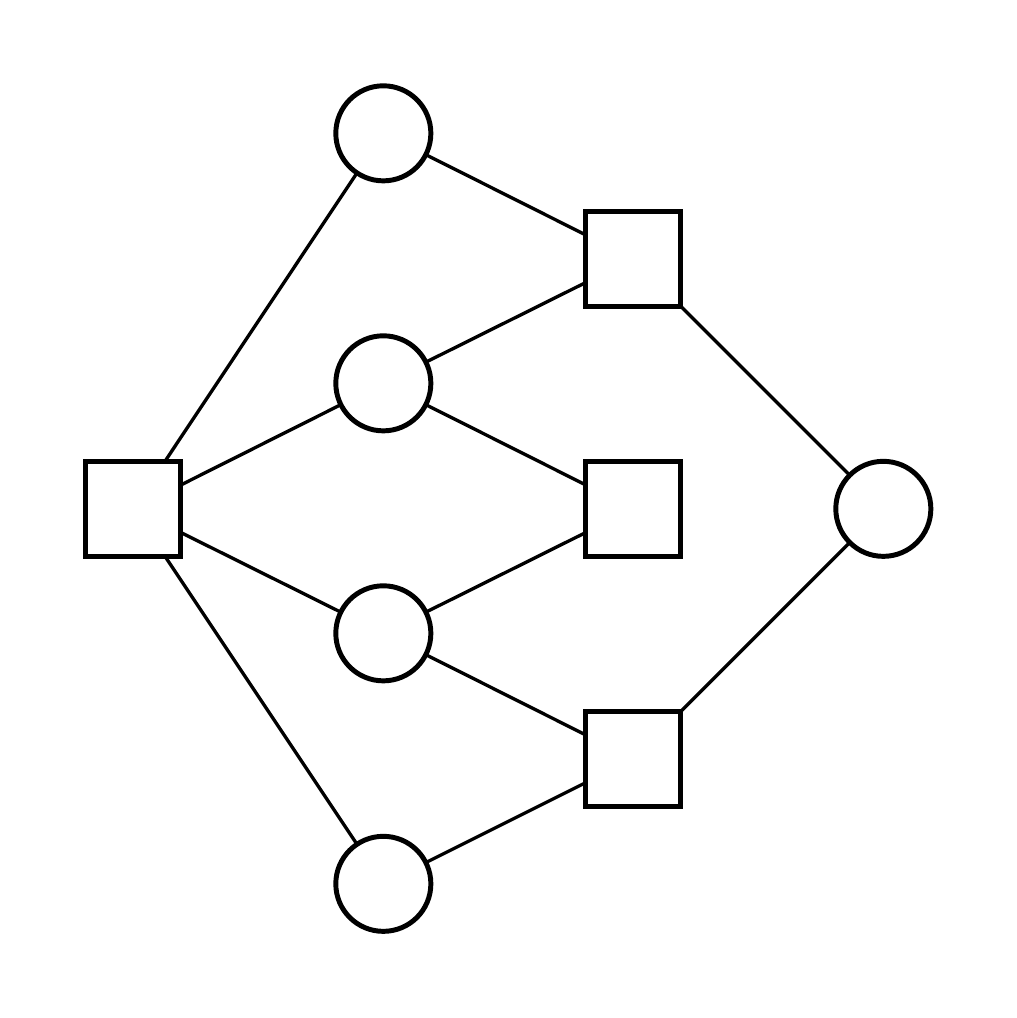}
	\label{fig::quat1}
}\\[1em]
\caption{Smallest non-trivial 1-defect correcting designs. }
\label{fig::oneecc}
\end{figure}

\begin{figure}
\centering
\subfigure[$q=2$]{
	\includegraphics[scale = .18]{plots/hammingDesign1.pdf}
	\label{fig::hammingDesign1}
}\\[1em]
\subfigure[$q=3$]{
	\includegraphics[scale = .18]{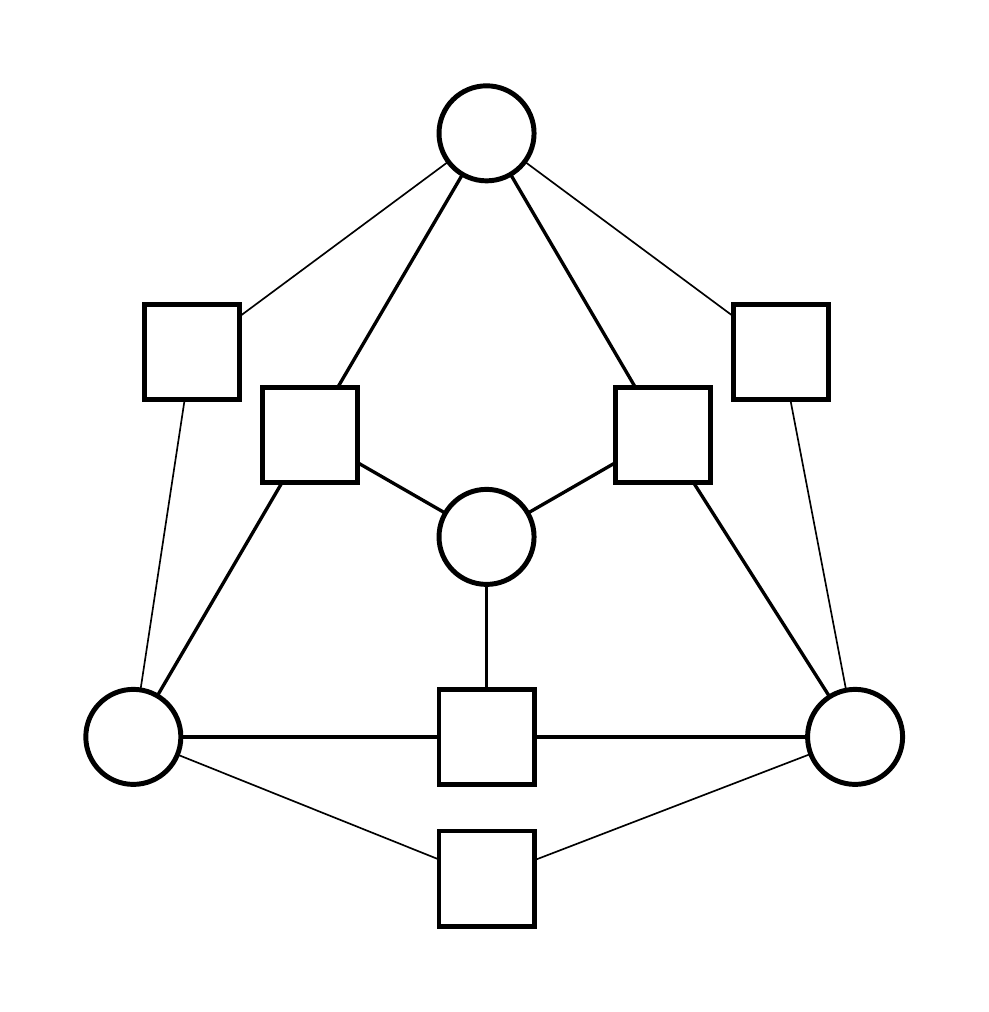}
	\label{fig::HB3}
}\\[1em]
\subfigure[$q=4$]{
	\includegraphics[scale = .18]{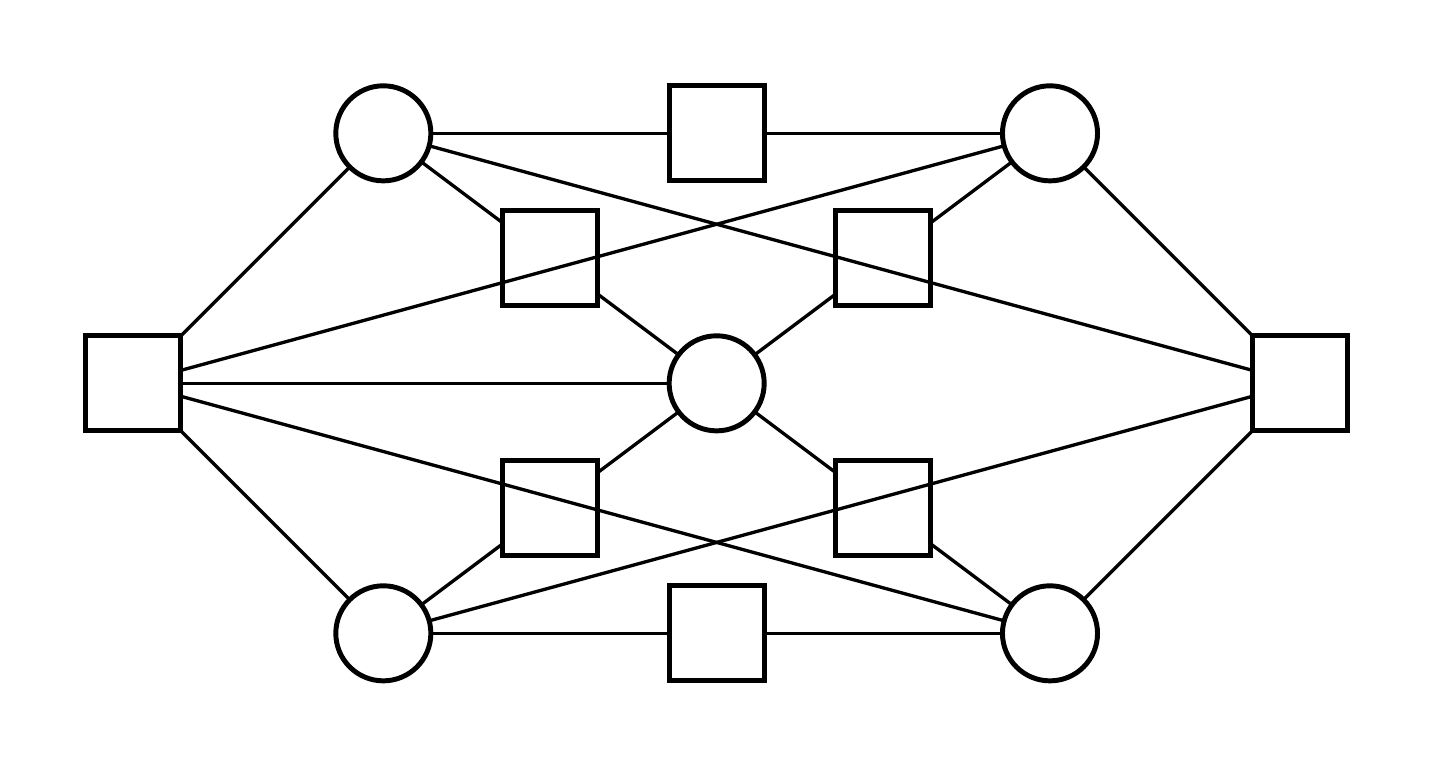}
	\label{fig::HB4}
}\\[1em]
\caption{Smallest non-trivial 2-defect correcting designs.}
\label{fig::twoecc}
\end{figure}

We now present designs which have the fewest number of edges given some fixed number of primary nodes $k$, redundant nodes $m$, and defect tolerance $t$. 

If $k\le q$ then all
primary nodes can have different values and thus one is forced to use the repetition design $kK(1,t)$ to correct $t$
defects. For $k=q+1$ the
question becomes more interesting. First, notice that the minimal possible $m$ equals $qt$ (this is achieved by
the complete design and cannot be reduced \textcolor{black}{since $t$ nodes with the same label can have defects and this can occur for each of the $q$ different labels}). However, some of the edges can be removed from the complete design while still preserving the number of defects corrected.

The optimal designs with $k=q+1$, $m=q$ and $t=1$ are as follows:
\begin{itemize}
\item Binary alphabet $(q = 2)$: $k=3$, $m=2$ with $5$ edges. See \Cref{fig::bint1}.
\item Ternary alphabet $(q = 3)$: $k=4$, $m=3$ with $8$ edges. See \Cref{fig::tert1}. (There exist two non-isomorphic optimal designs. \Cref{fig::tert1} shows the symmetric one.)
\item Quaternary alphabet $(q = 4)$: $k=5$, $m=4$ with $12$ edges. See \Cref{fig::quat1}. (There exist multiple non-isomorphic optimal designs. Only one is shown.)
\end{itemize}

The optimal designs with $k=q+1$, $m=2q$ and $t=2$ are as follows:
\begin{itemize}
\item Binary alphabet $(q = 2)$: $k=3$, $m=4$ with $9$ edges, see \Cref{fig::hammingDesign1}. This design is what we call the \textit{Hamming block}. \Cref{fig::exampleDesign} shows how it can correct $2$ defects. We will discuss its optimality in \Cref{prop::hammingOptimal}.
\item Ternary alphabet $(q = 3)$: $k=4, m=6$ with $15$ edges, see \Cref{fig::HB3}. (There exist two non-isomorphic optimal designs. \Cref{fig::HB3} shows the symmetric one.)

\item Quaternary alphabet $(q = 4)$: $k=5, m=8$ with $21$ edges, see \Cref{fig::HB4}. 
\end{itemize}

Some of these designs were found analytically and others by exhaustive search. %None lead to particularly interestingly corresponding design for higher alphabets or general constructions. 
None of these designs are at the performance boundary of any $\matr_t$ regions. To obtain designs that near the optimal trade-off boundary, we need to use a larger number of primary and redundant nodes (see
\Cref{prop:simpach}). However, a few of these designs, like the Hamming block in \Cref{fig::hammingDesign1}, achieve the best trade-off when restricted to the finite $k$ setting (as we will develop in \Cref{sec::finitek}).

%------------------SUBSET DESIGNS--------------------%
%------------------SUBSET DESIGNS--------------------%
%------------------SUBSET DESIGNS--------------------%
%------------------SUBSET DESIGNS--------------------%

\subsection{Subset designs} \label{sec:subsetdesigns}

Designs that form a key ingredient of our asymptotic (i.e., large $t$) constructions are subset designs. A subset design $S(k, s)$ is a bipartite graph with $k$ primary nodes and $m={k \choose s}$ redundant nodes, each connected to a distinct $s$-subset of $\{1,\ldots,k\}$. Note that the degree of each primary node is ${k-1\choose s-1}$.

In general, we allow subset designs to have multiple and possibly different subset sizes. For two values $s_1$ and
$s_2$, where $s_1, s_2 \in [k]$, a bipartite graph $S(k,s_1) \vee S(k,s_2)$  is defined to be the result of identifying the
$k$ primary nodes in two disjoint copies of
$S(k,s_1)$ and $S(k,s_2)$. The resulting graph has $k$ primary nodes and
$m={k\choose s_1} + {k\choose s_2}$ redundant nodes. We call the operation ($\vee)$ graph merging, which we state more precisely below. We will develop the properties of merging later.

\begin{definition}[Merging]\label{def::merging}
For any collection of designs $G_j$ on the same number of primary nodes $k$, the merging of $G_j$,
denoted $G=\bigvee_j G_j$ is a graph formed by taking disjoint copies of $G_j$ and identifying primary nodes.
\end{definition}
\textcolor{black}{See \Cref{fig::merge} for an illustration of merging.}

\begin{figure}
\centering
\includegraphics[scale = .15]{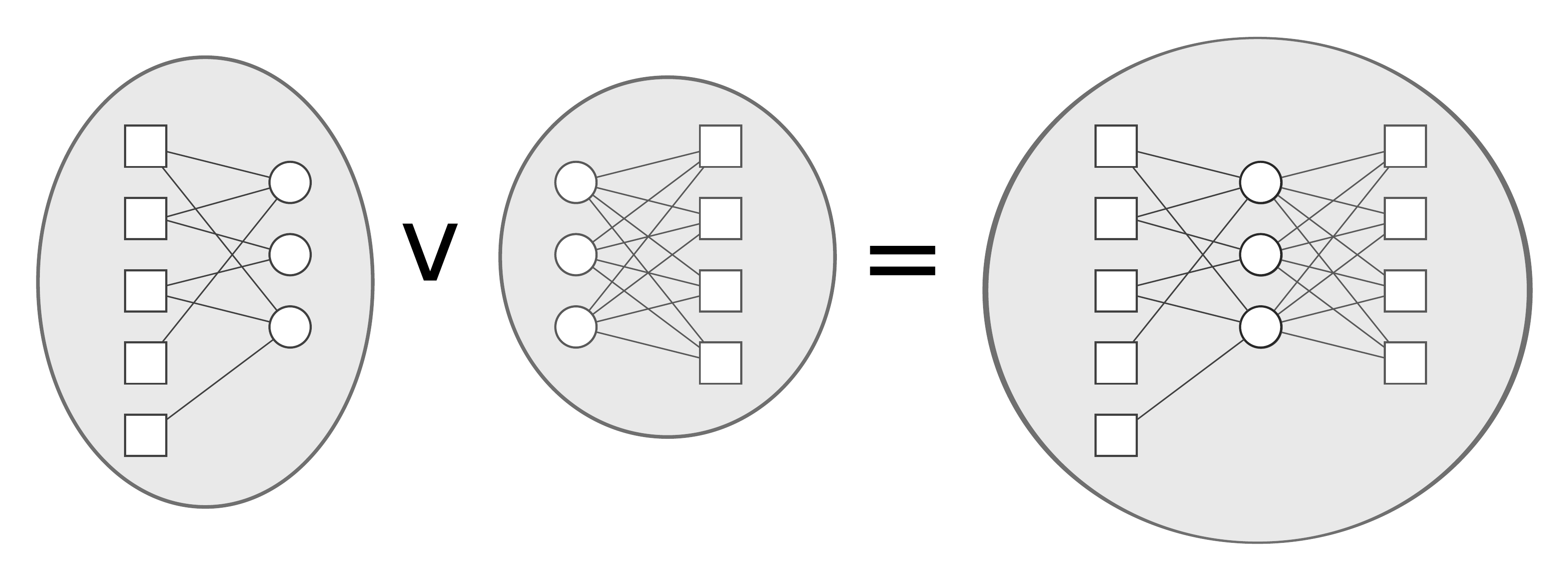}
\caption{\label{fig::merge} \textcolor{black}{Example of merging two designs.}}
\end{figure}

\begin{definition}[Subset design] Given $k$ and (not necessarily distinct) positive integers $s_1,s_2,\ldots,s_r \in [k]$,  
\begin{equation}
S(k,s_1)\vee S(k,s_2)\vee\cdots\vee S(k,s_r)
\end{equation}
is a subset design with $k$ primary nodes and $ m = \sum_{j=1}^r {k\choose s_j} $ redundant nodes.
\end{definition}

For example, the Hamming block, \Cref{fig::hammingDesign1}, is $S(3, 2)\vee S(3,3)$, the repetition design is
$S(k,1)\vee \cdots \vee S(k,1)$ ($t$ times) and the complete design is $S(k, k)\vee \cdots \vee S(k,k)$ ($qt$ times). \textcolor{black}{\Cref{fig::subset} shows the subset design $S(4,3) \vee S(4,2)$.}

\begin{figure}
\centering
\includegraphics[scale = .2]{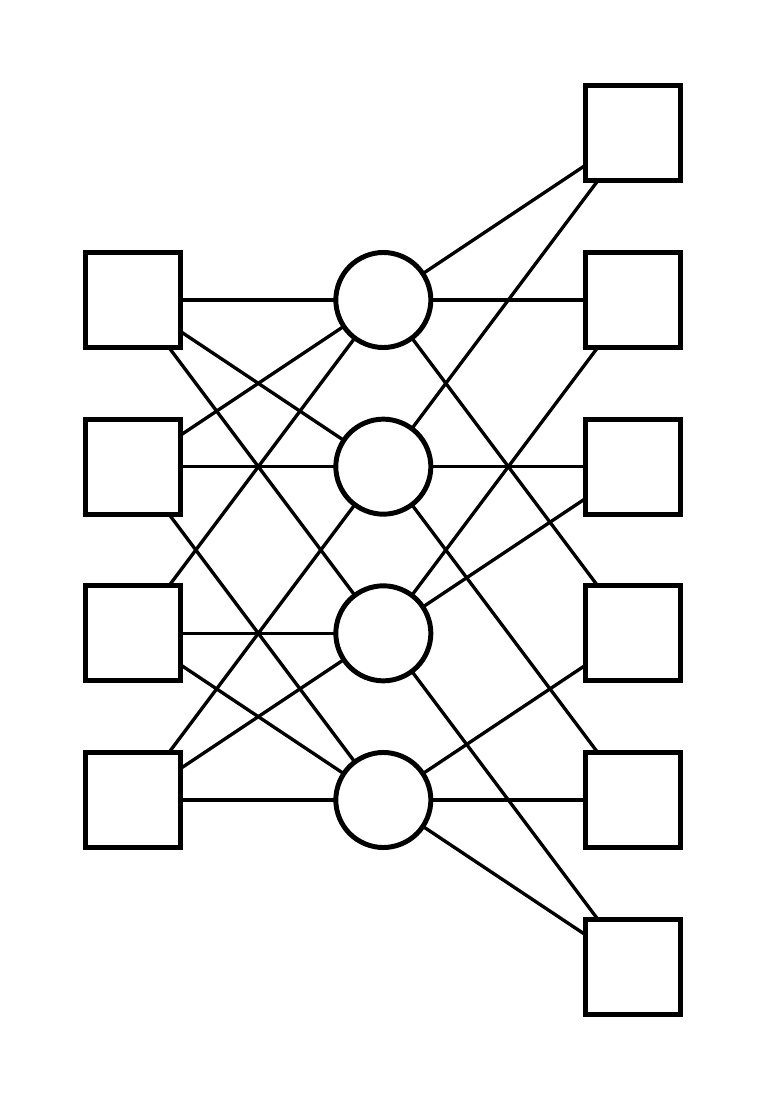}
\caption{\label{fig::subset} \textcolor{black}{Example of a subset design. This design is $S(4,3) \vee S(4,2)$. The redundant nodes corresponding to $S(4,3)$ are shown on the left side and those corresponding to $S(4,2)$ are shown on the right side.}}
\end{figure}

Subset designs are characterized by the following property:\begin{definition}[Permutation invariance]
A design is called \emph{permutation invariant} if there exists a group of bipartite-graph automorphisms (thus
preserving the left/right partition) that acts as the full symmetric group $S_k$ on primary nodes.\footnote{\textcolor{black}{For those not familiar with bipartite-graph automorphisms: Consider identifying each primary node and redundant node in the design with a distinct number. Primary node number $i$ is connected to some set of numbered redundant nodes $M_i$. We can equivalently say a design is permutation invariant if for all possible permutations of the numbers of the primary nodes, there is a way to permute the numbers of the redundant nodes, so that the new design still has primary node $i$ connected to the set of redundant nodes $M_i$.}} 
\end{definition}

\begin{proposition}\label{prop:perminv} A design is permutation invariant if and only if it is a subset design.
\end{proposition}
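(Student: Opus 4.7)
The plan is to prove the two directions separately, with the forward implication being routine and the converse carrying the real content.

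For the ``if'' direction, suppose $G = S(k,s_1) \vee \cdots \vee S(k,s_r)$ is a subset design. On each component $S(k,s_j)$, whose redundant nodes are canonically indexed by the $s_j$-subsets of $[k]$, any $\sigma \in S_k$ acts on redundant nodes by sending the subset $A$ to the subset $\sigma(A)$. This clearly preserves the edge relation (a primary node $i$ is adjacent to the subset $A$ iff $i \in A$ iff $\sigma(i) \in \sigma(A)$), so it is a bipartite automorphism of $S(k,s_j)$. Combining these component-wise actions yields a bipartite automorphism of $G$ whose restriction to primary nodes is $\sigma$. As $\sigma$ ranges over $S_k$, we obtain a group acting as the full symmetric group on the primary side.

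For the ``only if'' direction, let $G$ be permutation invariant with automorphism group $\mathcal{A}$ whose restriction to primary nodes is all of $S_k$. For each $\sigma \in S_k$ choose a lift $g_\sigma \in \mathcal{A}$ (the choice need not be a homomorphism). For any redundant node $v$ with neighborhood $N(v) \subseteq [k]$, the image $g_\sigma(v)$ is a redundant node whose neighborhood is $\sigma(N(v))$, because $g_\sigma$ is a bipartite automorphism. Partition the redundant nodes by their neighborhoods: for each $A \subseteq [k]$ set $V_A = \{v : N(v) = A\}$. I claim $|V_A|$ depends only on $|A|$. Indeed, for any $s$-subsets $A,B$ pick $\sigma$ with $\sigma(A) = B$; then $g_\sigma$ restricts to an injection $V_A \hookrightarrow V_B$ (injection because $g_\sigma$ is a bijection on all redundant nodes). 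Swapping the roles of $A$ and $B$ via $\sigma^{-1}$ yields the reverse inequality, hence $|V_A| = |V_B|$. Call this common value $c_s$.

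Now the redundant nodes of neighborhood-size $s$ are, by construction, exactly $c_s$ copies of each $s$-subset of $[k]$ viewed as a neighborhood, and each such redundant node is adjacent precisely to the primary nodes in its neighborhood; this is exactly the bipartite graph consisting of $c_s$ disjoint copies of $S(k,s)$, identified on primary nodes. Taking the merging across all $s \in [k]$ (with multiplicity $c_s$) recovers $G$ as a subset design. The only subtle step in the argument is the counting claim that the multiplicities $c_s$ are well-defined — i.e., that distinct $s$-subsets appear as neighborhoods with equal multiplicity — and this is precisely what the two-way injection argument via $g_\sigma$ and $g_{\sigma^{-1}}$ establishes without requiring the lifts $\sigma \mapsto g_\sigma$ to form a homomorphism.
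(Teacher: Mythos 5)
Your proof is correct and follows essentially the same route as the paper's: both directions hinge on the observation that an automorphism covering $\sigma$ carries a redundant node with neighborhood $A$ to one with neighborhood $\sigma(A)$, so each $s$-subset occurs as a neighborhood with the same multiplicity $c_s$, exhibiting the graph as a merging of copies of the $S(k,s)$. Your two-way injection argument just makes explicit the equal-multiplicity step that the paper states more briefly.
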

\begin{proof} Invariance of subset designs is clear. Conversely, given a permutation invariant design and an integer
$s\ge 1$, consider the subgraph induced by all degree-$s$ redundant nodes and their neighborhoods. By permutation invariance this subgraph must
contain all $k$ primary nodes and itself be permutation invariant (since automorphisms preserve degrees of nodes).
Therefore, every $s$-subset of the primary nodes must appear as a neighborhood of $n$ redundant nodes for some integer $n$. This degree-$s$ subgraph corresponds to merging of $n$ copies of
$S(k,s)$ and the original graph is a merging of degree-$s$ subgraphs.
\end{proof}

The number of redundant nodes used in subset designs is large and therefore it should be able to correct many defects. We
will find sharp estimates for the defect-correcting properties of subset designs later (\Cref{prop::subsetFiniteKResult} below), but for now we can give a simple order-of-magnitude result:

\begin{proposition}\label{prop:subset0} Fix alphabet $\matx$ and size $s\ge 1$. As $k\to\infty$ the design $S(k,s)$ corrects $t = \Theta(k^{s-1})$ defects.
\end{proposition}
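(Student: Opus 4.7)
The plan is to establish matching $O(k^{s-1})$ and $\Omega(k^{s-1})$ bounds on the number of correctable defects.

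For the \textbf{upper bound}, I would observe that in $S(k,s)$ every primary node has degree exactly $\binom{k-1}{s-1}$. Since a $t$-defect correcting design demands that, under \emph{every} primary coloring, each primary node has at least $t$ same-colored redundant neighbors, we must have $t \le \binom{k-1}{s-1} = O(k^{s-1})$. No redundant-node labeling can conjure extra neighbors beyond what the graph provides.

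For the \textbf{lower bound}, I would use the probabilistic method. Fix any primary coloring $c:[k] \to \matx$. For each redundant node $T$, independently pick a uniformly random element $u_T \in T$ and set $c'(T) := c(u_T)$. For a primary node $v$, let
\[
Y_v \;=\; \sum_{T \ni v} \mathbbm{1}\{c'(T) = c(v)\}.
\]
The summands are mutually independent because distinct $T$'s use independent draws of $u_T$, and each satisfies $\PP(c'(T) = c(v) \mid v\in T) = |T \cap c^{-1}(c(v))|/s \ge 1/s$ since $v$ itself lies in $T$ and has color $c(v)$. Therefore $\EE Y_v \ge \binom{k-1}{s-1}/s = \Theta(k^{s-1})$, uniformly in $v$ and in the coloring $c$. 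For $s \ge 2$ a Chernoff bound gives $\PP(Y_v < \EE Y_v/2) \le e^{-\Omega(k^{s-1})}$, and a union bound over the $k$ primary nodes leaves a vanishing total failure probability. Hence a deterministic labeling $c'$ exists with $Y_v = \Omega(k^{s-1})$ for all $v$ simultaneously, which by Definition~\ref{def:main} shows that $S(k,s)$ corrects at least $\Omega(k^{s-1})$ defects. The edge case $s=1$ is trivial: every redundant node is a singleton $\{u\}$, which we simply label $c(u)$, giving $t = 1 = k^{0}$.

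The main obstacle I anticipate is ensuring the lower bound on $\EE Y_v$ is uniform across all primary colorings, so that the same probabilistic argument works for any $c$. The virtue of the ``pick a uniformly random element of $T$'' randomization is precisely that this $1/s$ bound holds automatically for every coloring, sidestepping any case analysis based on the sizes of the color classes $|c^{-1}(x)|$; a more adaptive scheme would force one to distinguish between, say, colors used $O(1)$ times and colors used $\Theta(k)$ times.
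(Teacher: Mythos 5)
Your proof is correct, and the lower bound takes a genuinely different route from the paper's. The paper labels redundant nodes deterministically via a rare/common dichotomy on colors: a color is \emph{rare} if its class has fewer than $k/(cq)$ primary nodes; a redundant node gets color $x$ if all its neighbors are colored $x$, or if $x$ is the unique rare color in its neighborhood. A non-rare node $v$ then has $\binom{|c^{-1}(c(v))|-1}{s-1}=\Omega(k^{s-1})$ monochromatic neighboring subsets, and a rare node $v$ has $\Omega(k^{s-1})$ neighboring subsets in which $c(v)$ is the only rare color, since at most $k/c$ primary nodes carry rare colors. Your randomized labeling --- color each $s$-subset $T$ by $c(u_T)$ for a uniform $u_T\in T$ --- replaces this case analysis with the single uniform estimate $\PP(c'(T)=c(v))\ge 1/s$ for every $T\ni v$, after which independence across subsets, Chernoff, and a union bound over the $k$ primary nodes finish the job for $s\ge 2$ (and $s=1$ is handled separately, as it must be, since the concentration step needs $\EE Y_v\to\infty$). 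What each buys: the paper's construction is explicit and requires no concentration inequality, while yours is shorter, sidesteps the choice of the constant $c$ and the bookkeeping over rare labels, and in fact yields the cleaner quantitative bound $t\ge\frac{1}{2s}\binom{k-1}{s-1}(1-o(1))$, with a better constant than the paper's $\Omega\bigl((k/(cq))^{s-1}\bigr)$. Both establish exactly the claimed $\Theta(k^{s-1})$.
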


\begin{proof}
We know that $t = O(k^{s-1})$, since each primary node has at most \textcolor{black}{$\binom{k-1}{s-1} = O(k^{s-1})$} neighbors. To show that $t =
\Omega(k^{s-1})$, fix a labeling of the primary nodes with the elements of $\matx$. Consider the following procedure for
labeling redundant nodes. First we declare an element of $x\in\matx$ to be \emph{rare} if the number of primary nodes
labeled $x$ is less than \textcolor{black}{${k\over q}$}. % where $c$ is a suitable constant. 
Now each redundant node is labeled the
value $x\in\matx$ if either all of its neighbors have label $x$ or if $x$ is the only rare label in its
neighborhood. \textcolor{black}{(Some redundant nodes may not be labeled, but the contribution from these nodes can be disregarded for this particular order of magnitude result.)} To see that this is an labeling that corrects $\Omega(k^{s-1})$ defects, simply notice that a non-rare
primary node labeled $x$ has at least \textcolor{black}{$\binom{k/q - 1}{s-1} = \Omega(k^{s-1})$} neighboring redundant nodes with all neighbors labeled $x$. 
\textcolor{black}{Similarly, for any choice of non-rare label $x$, each rare-labeled primary
node has at least $\binom{k/q - 1}{s-1} = \Omega(k^{s-1})$ neighboring redundant nodes connected to it such that all other neighbors of this redundant node is labeled $x$. Since $x$ is non-rare, these $\Omega(k^{s-1})$ are labeled the value of the rare primary node.}
\end{proof}

As we will see, subset designs turn out to be optimal for achieving the boundary of $\matr_\infty$. In other words, they can be
tuned to get the optimal speed of growth for redundancy and wiring complexity as $t\to\infty$.

%----------------------FINTIE BOUNDS-------------------------%
%----------------------FINTIE BOUNDS-------------------------%
%----------------------FINTIE BOUNDS-------------------------%
%----------------------FINTIE BOUNDS-------------------------%

%%% section on limit for finite t

\section{Bounds for finite $t$} \label{sec::finiteT}

In this section we prove a number of basic results, which will lead to the proof of Theorem~\ref{thm::finiteResult}. We will first show how two basic operations, copying and merging, can be used to combine existing designs into a new design with certain properties. Using these operations, we then proceed to prove the claims in \Cref{prop:basic}. 

Using the convexity results from \Cref{prop:basic}, we show achievability for \Cref{thm::finiteResult}. Following the achievability, we show the converse for \Cref{thm::finiteResult} which uses a technique we call covering. 

A similar result for the achievable region for ternary alphabet is stated at the end of the section. 

\subsection{Two basic operations on designs}

\begin{definition}[Copying]\label{def::copying}
\textcolor{black}{A disjoint union, which we call copying, of two designs $G_1$ and $G_2$ is denoted by $G_1+G_2$. A disjoint union of a collection of designs $G_j$ is denoted by $\sum_j G_j$. A disjoint union of $n$ identical designs $G$ is denoted as $nG$. }
\end{definition}

\textcolor{black}{
Copying is simply the idea of creating a new design from two designs where each design is a disjoint component of the new design. Defining this operation formally is useful for our analysis. See \Cref{fig::copy} for an illustration of copying.}

\begin{figure}
\centering
\includegraphics[scale = .15]{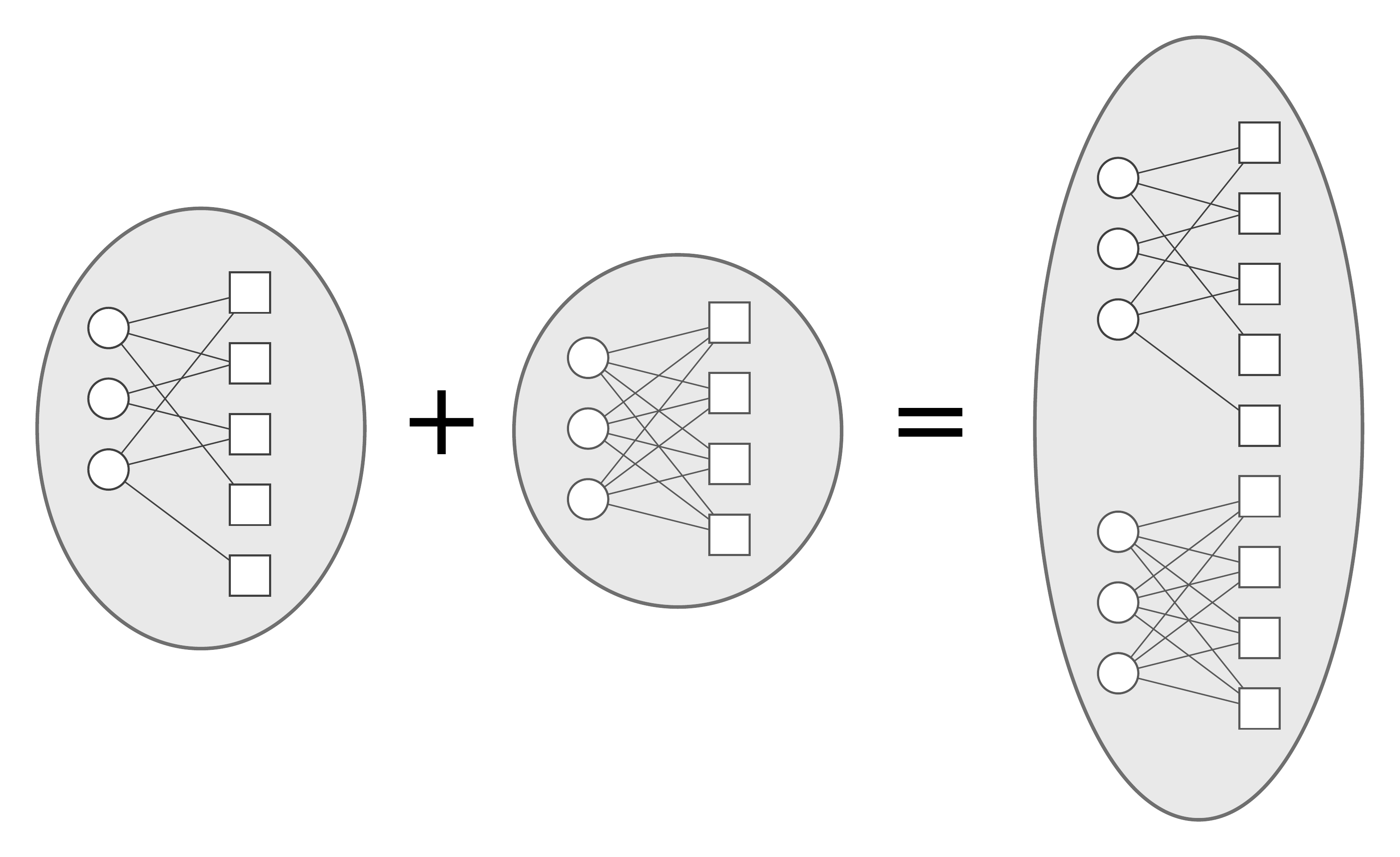}
\caption{\label{fig::copy} \textcolor{black}{Example of the copying operation on two designs.}}
\end{figure}

\begin{proposition}[Copying]\label{prop:copying} Consider $(k_j, m_j, t, E_j)_q$-designs $G_j$. Then $\sum_j G_j$,
forms a $(\sum k_j, \sum m_j, t, \sum E_j)_q$-design.
\end{proposition}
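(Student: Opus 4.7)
The plan is to verify the three separate tallies (primary nodes, redundant nodes, edges) directly from the definition of distinct union, and then verify the $t$-defect correcting property by restricting labelings componentwise.

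First I would observe that by construction $\sum_j G_j$ is a bipartite graph whose primary side is the disjoint union of the primary sides of the $G_j$, whose redundant side is the disjoint union of the redundant sides, and whose edge set is the disjoint union of the edge sets. This immediately gives the parameter counts $\sum_j k_j$, $\sum_j m_j$, $\sum_j E_j$ required by the statement.

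Next I would prove the $t$-defect correcting property. Fix an arbitrary labeling $\ell : \bigsqcup_j [k_j] \to \matx$ of all primary nodes. Let $\ell_j$ denote its restriction to the primary nodes of $G_j$. Since each $G_j$ is a $(k_j, m_j, t, E_j)_q$-design, by \Cref{def:main} there exists a labeling $r_j$ of the $m_j$ redundant nodes of $G_j$ so that, within $G_j$, every primary node labeled $x$ has at least $t$ neighbors labeled $x$. Define $r$ on all redundant nodes of $\sum_j G_j$ by gluing the $r_j$ together (this is unambiguous since the redundant sides are disjoint).

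The key observation, which makes this trivial rather than substantive, is that a primary node belonging to the component $G_j$ has \emph{all} of its neighbors inside $G_j$, because no edges cross components in a distinct union. Hence its count of same-labeled neighbors under $(\ell,r)$ in $\sum_j G_j$ equals its count under $(\ell_j, r_j)$ inside $G_j$, which is at least $t$ by construction. This holds for every primary node, so $\sum_j G_j$ is a $t$-defect correcting design. There is no real obstacle here; the statement is essentially a bookkeeping lemma that isolates the ``disjoint union'' building block for later use in combination with the merging operation of \Cref{def::merging}.
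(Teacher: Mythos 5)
Your proof is correct and matches the paper's intent exactly: the paper simply declares this proposition self-evident, and your argument fills in the routine details (componentwise restriction of the primary labeling, gluing the per-component redundant labelings, and noting that no edges cross components in a distinct union). Nothing further is needed.
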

\textcolor{black}{The proof is clear after realizing that the number of defects corrected does not change while all other parameters must add. }We note here that the values of $\varepsilon$ and $\rho$ for $G_1+G_2$ is a convex combination of those of
$G_j$. That is  
\begin{align} 
\rho &= {k_1\over k_1+k_2}\rho_1 + \frac{k_2}{k_1 + k_2}\rho_2 \label{eq::copy1}\\ 
 \varepsilon &= \frac{k_1}{k_1 + k_2} \varepsilon_1 + \frac{k_2}{k_1 + k_2}\varepsilon_2\label{eq::copy2}
\end{align} 
where $\rho_j$ and $\varepsilon_j$ refer to $\frac{m_j}{k_jt}$ and $\frac{E_j}{k_jt}$ of $G_j$ respectively.

\begin{proposition}[Merging]\label{prop:merge} Consider $(k, m_j, t_j, E_j)_q$-designs $G_j$ and $G=\bigvee_j G_j$ (see \Cref{def::merging}). Then $G$ is
a $(k, \sum_j m_j, \sum_j t_j, \sum_j E_j)_q$-design.
\end{proposition}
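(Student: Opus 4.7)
The plan is to verify the two claims implicit in the statement: first, that the merged graph $G=\bigvee_j G_j$ has exactly the parameters $(k,\sum_j m_j, \sum_j t_j, \sum_j E_j)_q$ stated, and second, that it is actually a $\left(\sum_j t_j\right)$-defect correcting design.

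The first claim is essentially bookkeeping. By Definition~\ref{def::merging}, the merging operation takes disjoint copies of the $G_j$ and then identifies their primary nodes, so the primary-node count stays at $k$. Since the redundant nodes of different $G_j$ are kept disjoint, the total redundant-node count is $\sum_j m_j$, and similarly the edge set of $G$ is the disjoint union of the edge sets of the $G_j$, giving $\sum_j E_j$ edges.

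The second claim is the substantive part, but it is also essentially immediate given Definition~\ref{def:main}. The plan is: fix an arbitrary labeling $\chi:[k]\to\matx$ of the primary nodes. Since each $G_j$ is a $t_j$-defect correcting design and uses the same primary-node set as $G$, there exists, for every $j$, a labeling $\chi_j$ of the $m_j$ redundant nodes of $G_j$ such that every primary node $v$ has at least $t_j$ neighbors in $G_j$ sharing its label $\chi(v)$. Label the redundant nodes of $G$ by using $\chi_j$ on the copy coming from $G_j$; this is well defined because the redundant node sets are disjoint. Under this labeling, for each primary node $v$ the set of same-labeled neighbors in $G$ is the disjoint union over $j$ of the same-labeled neighbors in $G_j$, hence of size at least $\sum_j t_j$. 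This shows $G$ is $\left(\sum_j t_j\right)$-defect correcting, as claimed.

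There is no real obstacle here: the proof is a one-paragraph direct construction from the definitions, and the only care needed is to note that the disjointness of the redundant nodes (and hence of the edges) across the $G_j$ is what allows the neighbor counts to add. I would write it in two or three sentences after stating the parameter count.
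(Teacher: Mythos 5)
Your proof is correct and is exactly the argument the paper has in mind — the paper simply declares the proposition self-evident, and your write-up fills in precisely the intended details (combine the per-$G_j$ redundant labelings, which is well defined by disjointness, and note that same-labeled neighbor counts add). No issues.
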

(Note that it is possible that the merged design $G = \bigvee_j G_j$ can correct more than $\sum_j t_j$ defects.) \textcolor{black}{The proof is clear after realizing that the same labeling $G_j$ used for redundant nodes to be $t_j$ correcting for a specific labeling of the $k$ primary nodes can be used in the merged design. } 
As an example, we note that merging a design with itself, i.e., $G\vee G$, doubles all the parameters except $k$. However, the wiring complexity and redundancy stays constant. This will be the basis for showing
convexity of $\matr_\infty$, cf.~\eqref{eq:rinfty2}.

\subsection{Proof of Proposition~\ref{prop:basic}}\label{sec:proofbasic}

With the help of the two basic operations, we can prove the convexity of $\matr_t$ and $\matr_\infty$, as well as the other
properties claimed in Proposition~\ref{prop:basic}.

\begin{proof}[Proof of Proposition~\ref{prop:basic}]

\emph{Claim~\ref{prop:basic:1}.} From the definition of closure, $(\varepsilon, \rho) \in \matr_t$ if and only if there is a sequence of points  $\{(\varepsilon_i, \rho_i)\}_i \in \matr_t$ approaching $(\varepsilon, \rho)$. Each $(\varepsilon_i, \rho_i)$ must be associated with a $(k_i, m_i, t, E_i)_q$-design $G_i$, where $m_i = \rho_i k_it$ and $E_i = \varepsilon_i k_it$. To show that $k,m,E \rightarrow \infty$, we can copy $G_i$ with itself $n_i$ times, where $n_i$ is chosen so that $n_ik_i, n_im_i, n_iE_i \rightarrow \infty$.

\emph{Claim~\ref{prop:basic:2}.}
\textcolor{black}{For any $(\varepsilon, \rho) \in \mathcal{R}_t$, if there is a $(k, \rho kt, t, \varepsilon kt)$-design $G$, then we can copy $G$ with itself multiple times to get a $(k', \rho k't, t, \varepsilon k't)$-design $G'$ where $k'$ is arbitrarily large. }We can always add more redundant nodes or more edges to $G'$ (this is possible since $\rho k't$ can be arbitrarily large and adding a finite number of redundant nodes does not change the redundancy) to $G'$ to get a design with parameters $(\varepsilon', \rho')$. \textcolor{black}{If $(\varepsilon, \rho)$ is a limit point achieved by a sequence of designs, we can always similarly add more redundant nodes and edges to each design in the sequence that attains the limit. }

\emph{Claim~\ref{prop:basic:3}.}
This holds using copying from \Cref{prop:copying}.

If a pair of values $(\varepsilon_1, \rho_1)$ and $(\varepsilon_2, \rho_2)$ are in $\mathcal{R}_t$, there are sequences $(\varepsilon_{1,i}, \rho_{1,i}) \rightarrow (\varepsilon_{1}, \rho_{1}) $ and  $(\varepsilon_{2,i}, \rho_{2,i}) \rightarrow (\varepsilon_{2}, \rho_{2}) $, where for each $i$ there exists a $(k_{1,i}, \rho_{1,i} k_{1,i}t, t, \varepsilon_{1,i} k_{1,i}t)_q$-design $G_{1,i}$ and a $(k_{2,i,} \rho_{2,i} k_{2,i}t, t, \varepsilon_{2,i} k_{2,i}t)_q$-design $G_{2,i}$. For any $0 \leq \alpha \leq 1$, we can find a sequence of rational numbers $\alpha_i = \frac{p_i}{q_i}$ where $p_i, q_i \in \mathbb{Z}_{+}$ and $\alpha_i \rightarrow \alpha$. The copy $k_{2,i} p_i G_{1,i} + k_{1,i}(q_i - p_i) G_{2,i}$ achieves the point $(\varepsilon_i, \rho_i) = (\alpha_i \varepsilon_{1,i} + (1-\alpha_i) \varepsilon_{2,i}, \alpha_i \rho_{1,i} + (1-\alpha_i) \rho_{2,i})$ in $\matr_t$ and $(\varepsilon_i, \rho_i) \to (\alpha \varepsilon_1 + (1-\alpha) \varepsilon_2, \alpha \rho_1 + (1-\alpha) \rho_2 )$.

\emph{Claim~\ref{prop:basic:4}.}
Any point $({\varepsilon},{\rho})$ in $\matr_\infty$ and any point in  $\mathrm{closure} \left\{ \bigcup _{t = 1}^{\infty} \mathcal{R}_t \right\}$ must both be the limit of some sequence of $(k_i, m_i, t_i, E_i)_q$-designs. 
To see that $\mathcal{R}_\infty = \limsup \matr_t$, by merging in \Cref{prop:merge}, for any $t$, we have $\mathcal{R}_t \subset\mathcal{R}_{2t} \subset \mathcal{R}_{4t} \subset \mathcal{R}_{8t}\dots$.

\emph{Claim~\ref{prop:basic:5}.} This holds using merging from \Cref{prop:merge}. Given two designs $G_1$ and $G_2$, where $G_1$ is a $(k_1,{\rho_1} k_1 t_1 , t_1,{\varepsilon_1} k_1 t_1 )_q$-design and $G_2$ is a $(k_2, {\rho_2}k_2 t_2 , t_2, {\varepsilon_2}k_2 t_2 )_q$-design, if we want to create a design $G$ with the parameter $(\alpha {\varepsilon}_1 + (1-\alpha) {\varepsilon}_2, \alpha {\rho}_1 + (1-\alpha) {\rho}_2)$ for $\alpha = \frac{p}{q}$ where $p,q \in \mathbb{Z}_+$, then we can let 
\begin{equation} \label{eq::mergingConvex}
G = pk_2\left(\bigvee_{i = 1}^{t_2} G_1 \right) + (q-p)k_1\left(\bigvee_{i = 1}^{t_1} G_2 \right) \,.
\end{equation}
From here on, the proof proceeds similarly to the proof of Claim~\ref{prop:basic:3}. 
\end{proof}

\subsection{Elementary achievability}

\textcolor{black}{From the previous propositions, we can immediately make statements on what each region $\matr_t$ must contain. Recall that for any $t$, the point $(1,1)$ in $\matr_t$ is achievable using the repetition design. The point $(q, 0)$ is asymptotically achievable using the complete design. Thus, the line of points between $(1,1)$ and $(q, 0)$ is achievable by interpolating between the repetition
design $K(1,t)$ and the complete design $K(k, qt)$. We summarize this below:}

\begin{proposition}\label{prop:simpach}
The following region is achievable for any $t \ge 1$ and $q \geq 2$:
\begin{equation}\label{eq:interp}
\matr_t^{(K)} \eqdef \{(\varepsilon,\rho): \varepsilon \geq q  + (1 - q)\rho, \varepsilon \geq
1, \rho \ge 0\}\,.
\end{equation}
Furthermore, every point such that $(\varepsilon-1)$ is a multiple of $(q-1)$ can be achieved via a design with
constant degree $\varepsilon$ primary nodes. 
\end{proposition}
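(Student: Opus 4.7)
The region $\matr_t^{(K)}$ is a convex ``wedge'' in the first quadrant whose lower boundary consists of the segment from $(1,1)$ to $(q,0)$ together with the horizontal ray $\{\varepsilon=1,\rho\ge 1\}$ and the vertical ray $\{\varepsilon\ge q,\rho=0\}$. My strategy is to directly realize the two corner points $(1,1)$ and $(q,0)$ via the two elementary designs already introduced, fill in the diagonal segment between them by convexity, and then extend to the whole region by monotonicity.

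For the corner $(q,0)$ I use the complete design $K(k,qt)$: label the $qt$ redundant nodes with $t$ copies of each of the $q$ alphabet letters, so that any primary-labeling already sees $t$ same-labeled redundant neighbors. This yields a $(k,qt,t,kqt)_q$-design with $(\varepsilon,\rho)=(q,q/k)$ and constant primary-degree $qt$; letting $k\to\infty$ gives $(q,0)\in\matr_t$. For the corner $(1,1)$ I take $k$ disjoint copies of $K(1,t)$, a $(k,kt,t,kt)_q$-design with constant primary-degree $t$ and $(\varepsilon,\rho)=(1,1)$. The closed segment between these corners is then achievable by the convex-combination argument of Claim~\ref{prop:basic:3} (via Proposition~\ref{prop:copying}, applied to rational approximations of an arbitrary convex coefficient $\alpha\in[0,1]$), and Claim~\ref{prop:basic:2} (monotonicity, realized by adding extra isolated redundant nodes or extra parallel edges) extends achievability to all of $\matr_t^{(K)}$.

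For the constant-degree claim, observe that $\varepsilon\ge 1$ and $(\varepsilon-1)\equiv 0\pmod{q-1}$ force $\varepsilon=1+j(q-1)$ for some integer $j\ge 0$. The case $j=0$ is handled by the repetition design above, which has constant degree $t=\varepsilon t$. For $j\ge 1$ we have $\varepsilon t\ge qt$, so I use the enlarged complete bipartite graph $K(k,\varepsilon t)$ and label its $\varepsilon t$ redundant nodes so that every letter of $\matx$ appears at least $t$ times (possible precisely because $\varepsilon t\ge qt$); every primary is then adjacent to all $\varepsilon t$ redundants and sees at least $t$ same-labeled ones regardless of the primary labeling. The resulting design is $t$-defect correcting, has constant primary-degree $\varepsilon t$, and attains $(\varepsilon,\varepsilon/k)\to(\varepsilon,0)$ as $k\to\infty$. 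To reach other $\rho$ at the same $\varepsilon$ with constant degree preserved, I copy in additional repetition blocks $K(1,\varepsilon t)$ (each a legitimate $t$-defect correcting design, ``over-designed'' at defect-capacity $\varepsilon t$): since all building blocks share the same primary-degree $\varepsilon t$, the distinct-union retains constant degree, and varying the mixing ratio together with $k$ sweeps $\rho$ over $(0,\varepsilon)$; isolated redundant nodes handle $\rho\ge\varepsilon$.

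I do not anticipate any real obstacle. The only places requiring care are (i) invoking the rational-approximation convexity argument from the proof of Claim~\ref{prop:basic:3} so that the resulting sequence of designs actually converges to the desired $(\varepsilon,\rho)$, and (ii) the trivial counting verification that $\varepsilon t\ge qt$ permits a valid ``$\ge t$ of each color'' labeling of $K(k,\varepsilon t)$.
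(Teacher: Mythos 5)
The first (and main) part of your argument is exactly the paper's: realize the corners $(1,1)$ and $(q,0)$ by the repetition design $kK(1,t)$ and the complete design $K(k,qt)$, then invoke convexity (copying, Claims~\ref{prop:basic:2}--\ref{prop:basic:3} of \Cref{prop:basic}) to fill in the segment and the upward closure. That part is fine and needs no further comment.

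Where you diverge is the ``furthermore'' clause, and there you have read the divisibility condition differently from what the paper intends. You take $\varepsilon-1$ to be an \emph{integer} multiple of $q-1$, i.e.\ $\varepsilon\in\{1,q,2q-1,\dots\}$, and handle it with over-complete designs $K(k,\varepsilon t)$ mixed with $K(1,\varepsilon t)$. Your constructions are all valid as stated, but note that for $\varepsilon\ge q$ the points you produce are already dominated by the complete design and lie in the trivial part of $\matr_t^{(K)}$; your reading yields nothing regular on the interesting Pareto segment strictly between $(1,1)$ and $(q,0)$. The paper's proof makes clear that the intended claim concerns exactly those interior boundary points: it splits $t=t_1+t_2$ and \emph{merges} (in the sense of \Cref{def::merging} and \Cref{prop:merge}) a repetition part with a complete part, $kK(1,t_1)\vee K(k,qt_2)$, giving every primary node the constant degree $t_1+qt_2=\varepsilon t$ and achieving the boundary point $\varepsilon=(t_1+qt_2)/t=1+(q-1)t_2/t$ as $k\to\infty$ --- so the condition should be read as $(\varepsilon-1)t$ divisible by $q-1$ (equivalently $\varepsilon-1$ a multiple of $(q-1)/t$), with $\varepsilon$ ranging over the whole segment $[1,q]$ in steps of $(q-1)/t$. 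The missing idea in your proposal is this merge of the two elementary designs on a common set of primary nodes, which is what produces primary-regular designs at the nontrivial trade-off points; I would add that construction, after which your over-complete construction for $\varepsilon\ge q$ becomes an unneeded special case.
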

\begin{proof}
The corner points $(1,1)$ and $(q,0)$ are achieved by the repetition design and the complete design,
respectively. By Proposition~\ref{prop:basic} the region $\matr_t$ is convex and hence must contain $\matr_t^{(K)}$. \textcolor{black}{All rational points near the boundary of $\matr_t$ are achieved by $r_1 K(1,t)+r_2 K(k, q t)$ for some choice of integers $r_1,r_2$ and $k$. }

In order to get a design where the primary nodes have regular degree, we can combine the repetition design and complete design by merging. Find two integers $t_1,t_2$ where $t_1 + t_2 = t$. The combination $k K(1,t_1) \vee K(k, qt_2)$ also achieves the boundary point at $\varepsilon = (t_1 + qt_2)/t$ as $k \rightarrow \infty$. This proves the last sentence of the \Cref{prop:simpach}.
\end{proof}

\textcolor{black}{The region $\matr_t^{(K)}$ is an inner bound on all achievable regions, but for $q=2$ and $t=1,2$ the region $\matr_t^{(K)}$ happens to be tight and is the region plotted in Figure \ref{fig::AchRegion}.\footnote{\textcolor{black}{Note that in the worst case, the rate of convergence to get $\epsilon$ close to a point on the boundary of $\matr_t^{(K)}$ requires $k$ to be on the order of $\frac{1}{\epsilon}$. This occurs when trying to achieve the boundary point $(qt,0)$. On the other hand, achieving point the boundary point $(1,1)$ can be done with $k=1$. For other points on the boundary away from $(qt,0)$, it is not clear what the best rate of convergence is.}}}

\subsection{Covering converse}\label{sec:covering}

\textcolor{black}{This section presents a general converse bound which holds for all $\matr_t$ and all $q$, but in particular this converse shows that $\matr_2^{(K)}$ is tight for $q = 2$.}

\begin{theorem}\label{thm:covering} Fix $q=|\matx|$, $t$ and suppose $(\varepsilon,\rho)\in\matr_t$. Then there exists
$\pi_t, \pi_{t+1},\ldots,\pi_{qt} \ge 0$ satisfying 
\begin{align}\label{eq:cov0}
   \frac{1}{t}\sum_{j=t}^{qt} j \pi_j &\le \varepsilon\\
   \sum_{j=t}^{qt} \pi_j &= 1\label{eq:cov1}\\
   \sum_{j=t+1}^{qt} \pi_j \log_q \lfloor j/t \rfloor &\ge 1 + (t-1)\pi_t - \rho t\,. \label{eq:cov2} 
\end{align}   
In other words the smallest achievable $\varepsilon$ for a given $\rho$ is lower bounded as
\begin{equation}\label{eq:covering}
	\varepsilon^*(\rho, t) \ge \min \left\{ \frac{1}{t} \sum_{j=t}^{qt} j \pi_j: \pi_j\ge0\mbox{~satisfy~\eqref{eq:cov1}-\eqref{eq:cov2}}
\right\} 
\end{equation}
\end{theorem}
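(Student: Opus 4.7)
The argument is a double-counting estimate on pairs $(\sigma,\lambda)$, where $\sigma:[k]\to\matx$ labels primary nodes and $\lambda:[m]\to\matx$ labels redundant nodes, with the pair called \emph{valid} if every primary $u$ has at least $t$ neighbors $v$ with $\lambda(v)=\sigma(u)$. Write $d_u$ for the degree of $u$ and, for fixed $\lambda$, let $c_\lambda(u)$ denote the number of colors $x\in\matx$ with at least $t$ neighbors of $u$ labeled $x$ under $\lambda$. Pigeonhole gives $c_\lambda(u)\le\min(\lfloor d_u/t\rfloor,q)$, and the number of $\sigma$'s valid for a given $\lambda$ equals $\prod_u c_\lambda(u)$. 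Since the design is $t$-defect correcting, each of the $q^k$ primary labelings admits at least one valid $\lambda$, so
\[ q^k \;\le\; \sum_{\lambda}\prod_u c_\lambda(u). \]

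\textbf{Key structural step.} The crude bound $c_\lambda(u)\le\min(\lfloor d_u/t\rfloor,q)$ alone only yields \eqref{eq:cov2} without the $(t-1)\pi_t$ term. To recover that improvement, I would first establish: \emph{in any valid design, primary nodes of degree exactly $t$ have pairwise disjoint neighborhoods.} Indeed, if degree-$t$ primaries $u_1,u_2$ shared a neighbor $v$, then choosing $\sigma$ with $\sigma(u_1)\ne\sigma(u_2)$ would force $\lambda(v)=\sigma(u_1)$ and $\lambda(v)=\sigma(u_2)$ simultaneously, a contradiction. Let $k_t$ be the number of degree-$t$ primaries; their neighborhoods then occupy $k_t$ disjoint blocks of $t$ redundant nodes each.

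\textbf{Finishing the count.} For a degree-$t$ primary, $c_\lambda(u)\in\{0,1\}$, and $c_\lambda(u)=1$ forces its $t$-block to be monochromatic. By the previous step, these blocks are disjoint, so the number of $\lambda$'s for which $c_\lambda(u)=1$ holds simultaneously for \emph{every} degree-$t$ primary is exactly $q^{k_t}\cdot q^{m-tk_t}=q^{m-(t-1)k_t}$, and for all other $\lambda$ the product $\prod_u c_\lambda(u)$ vanishes. Hence
\[ q^k \;\le\; q^{m-(t-1)k_t}\prod_{u:\,d_u>t}\min(\lfloor d_u/t\rfloor,\,q). \]
Define $\pi_j$ for $j\in\{t,\dots,qt-1\}$ as the fraction of primary nodes of degree exactly $j$, and $\pi_{qt}$ as the fraction of primary nodes of degree $\ge qt$. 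Then $\sum_j\pi_j=1$, the truncation $\min(d_u,qt)\le d_u$ gives $\tfrac{1}{t}\sum_j j\pi_j\le\varepsilon$, and taking $\log_q$ of the displayed inequality and dividing by $k$ rearranges to exactly \eqref{eq:cov2} (using $\log_q\min(\lfloor j/t\rfloor,q)=\log_q\lfloor j/t\rfloor$ for $j\in[t+1,qt]$ and $=1=\log_q q$ for the tail bucket $j\ge qt$). The minimization bound \eqref{eq:covering} then follows immediately, since every achievable $(\varepsilon,\rho)$ produces a feasible $\pi$.

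\textbf{Main obstacle.} The whole sharpening rides on the disjoint-neighborhood lemma for degree-$t$ primaries; without it one only obtains $\sum_{j>t}\pi_j\log_q\lfloor j/t\rfloor\ge 1-\rho t$ and the $(t-1)\pi_t$ gain is lost. Once that structural observation is in place the rest is routine counting plus bookkeeping for the truncation of super-heavy degrees (handled by the tail bucket $\pi_{qt}$).
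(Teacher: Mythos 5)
Your proposal is correct and follows essentially the same route as the paper: the union bound over redundant labelings of the number of primary labelings each covers, the observation that degree-$t$ primary nodes must have pairwise disjoint neighborhoods (which is exactly what yields the $(t-1)\pi_t$ sharpening by restricting to monochromatic blocks), and the per-node bound $\lfloor d_u/t\rfloor$ on the number of admissible colors. The bookkeeping with the tail bucket $\pi_{qt}$ also matches the paper's definition.
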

\begin{proof} 

%By Prop.~\ref{prop:basic} there exists a sequence of designs achieving $(\varepsilon,\rho)$ with $k,m\to\infty$. Given any design $G$ define a new design $G'$ with the same number of primary nodes $k$ and the number of redundant nodes as $m'=m+qt$. The added $qt$ redundant nodes are connected to each of the primary nodes that have degree (in $G$) larger or equal to $qt$. The remaining primary nodes are connected in $G'$ exactly as in $G$. It is clear that $G$ is still $t$-defect correcting, has the same (or smaller) number of edges and (asymptotically in $m$) the same redundancy $\rho$. This shows, that without loss of generality we can assume that \textit{there are no primary all nodes of degree $> qt$  and all nodes of degree equal to $qt$ form a complete bipartite graph disjoint from the rest of the design.} 

\textcolor{black}{The key idea of this proof is to look at how the degree of primary nodes relates to whether a design can correct defects for all sequences of labelings.} Let us define $\pi_j, j = t, t+ 1,...,qt -1$ to be the fraction of primary nodes with degree $j$. (Notice that every primary node clearly should have degree at least $t$.) Define $\pi_{qt}$ to be the fraction of primary nodes of degree $qt$ or larger. 
The fact that this satisfies~\eqref{eq:cov0}-\eqref{eq:cov1} is obvious. We only need to
show~\eqref{eq:cov2}. 

To that end, for each labeling $r^m \in \matx^m$ of redundant nodes let $\matg_t(r^m)$ be the set of primary node labelings for which conditions of Definition~\ref{def:main} are satisfied (we say that
$r^m$ covers $\matg_t(r^m)$ of the labelings). \textcolor{black}{The design is $t$-defect correcting if and only if every possible labeling is covered by some $r^m$. We can count the number of primary node labelings covered by some $r^m$ and make sure this is equivalent to all possible primary node labelings. Thus, a design is $t$-defect correcting if and only if }
\begin{equation}\label{eq:cov3}
	\left|\bigcup_{r^m \in \matx^m} \matg_t(r^m)\right| = |\matx|^k = q^k\,.
\end{equation}
We are aiming to apply the union bound to the right-hand side to get inequality~\eqref{eq:cov2}. Before doing so we make
the following observation.

%First, note that by the earlier remark the $k\pi_q$ primary nodes of degree $q$ are completely separate from the rest of the design and thus we only need to verify~\eqref{eq:cov3} on a sub-design with $k_1 = k(1-\pi_{qt})$ primary nodes and $m_1=m-qt$ redundant nodes. We substitute $k,m$ with $k_1,m_1$ from now on.

Two primary nodes of degree $t$ should have disjoint neighborhoods (otherwise labeling them different values clearly
violates Definition~\ref{def:main}). Thus $\matg_t(r^m)$ is empty unless each such neighborhood has a constant label. \textcolor{black}{This shows that for the $tk \pi_t$ redundant nodes connected to the primary nodes of degree $t$}, we are restricted to
only $q^{k\pi_t}$ choices, while the rest contribute $q^{m-tk\pi_t}$ more choices.

Given any of the $q^{m-(t-1)k\pi_t}$ choices of $r^m$ we can estimate $|\matg_t(r^m)|$ from above by assuming that each
primary node of degree $d$ can take any of the $\lfloor d/t \rfloor$ label in $\matx$ while still satisfying the
$t$-wise coverage condition of Definition~\ref{def:main}. This yields
\begin{equation}
 |\matg_t(r^m)| \le \prod_{j=t}^{qt} \lfloor j/t \rfloor ^{k\pi_j}\,, 
\end{equation}
and thus applying the union bound to~\eqref{eq:cov3}, %taking $m\to\infty$ 
we get~\eqref{eq:cov2}.
\end{proof}

For $t = 1,2$ and $q = 2$, it is only necessary to evaluate \eqref{eq:covering} at three separate points (two of which are $\varepsilon = 1$ and $2$, the third is anywhere inbetween) in order to show that the boundary of $\matr_1$ or $\matr_2$ from $\varepsilon = 1$ to $2$ is linear. In particular, for $t = 2$, we can first choose $\varepsilon = 3/2$. No matter how we choose the values of $\pi_2, \pi_3$ and $\pi_4$, to satisfy \eqref{eq:cov2} we must have $\rho \geq 1/2$. 

\begin{proof}[Proof of Theorem~\ref{thm::finiteResult}]
Achievability follows from \Cref{prop:simpach}. The converse is determined by evaluating~\ref{eq:covering}.  %The result simplifies to $\varepsilon \geq 2 - \rho$.
\end{proof}

\begin{remark}  While the bound~\eqref{eq:covering} is tight for $t=1$ and $t=2$ when $q = 2$, it is not tight in general. It however
allows us to make a general conclusion: since the bound is piecewise linear, it
follows that the slope of $\matr_t$ at the point $(qt, 0)$ of minimal redundancy \textit{is non-zero}. It is also the
best bound known to us for values of $\varepsilon$ near $qt$. 
\end{remark}

In the next section, we will discuss a bound that
is better for $\varepsilon$ away from $q$ and when $t$ is large. This converse outperforms the covering converse (\Cref{thm:covering}) at certain $\rho$ even for $q = 2$ and $t = 3$.
\subsection{Ternary alphabet and $t=1$}
Further progress on computing regions $\matr_t$ for values of $q > 2$ seems to require finer arguments
on graph structure. We can show the following result for $q = 3$ but the proof requires significant casework.

\begin{theorem} For $q=3$ and $t=1$ we have
\begin{equation}
\matr_1 = \{(\varepsilon,\rho): \varepsilon \ge 3-2\rho, \varepsilon \geq 1, \rho\ge 0\} \label{eq::3alphaRegion}
\end{equation}
and is achievable by the interpolation~\eqref{eq:interp}. \label{thm:3alpha}
\end{theorem}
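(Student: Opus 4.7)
The statement as written in~\eqref{eq::3alphaRegion} appears to contain a typographical slip: the bound $\varepsilon \ge 1-2\rho$ is trivially implied by the always-true lower bound $\varepsilon \ge 1$ together with $\rho \ge 0$, and the resulting region would include non-achievable points such as $(\tfrac12,\tfrac12)$. Since the theorem explicitly cites the simple interpolation of \Cref{prop:simpach}, which at $q=3,\,t=1$ traces the line $\varepsilon = 3-2\rho$ between the repetition point $(1,1)$ and the complete-design point $(3,0)$, the intended claim is $\matr_1 = \{(\varepsilon,\rho) : \varepsilon \ge 3-2\rho,\, \varepsilon \ge 1,\, \rho \ge 0\}$. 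I outline a proof of this intended statement.

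\textit{Achievability} is immediate from \Cref{prop:simpach} with $q=3$, $t=1$: convex combinations of the repetition design $K(1,1)$ and the complete design $K(k,3)$, with $k\to\infty$, sweep out $\matr_1^{(K)} = \{\varepsilon\ge 3-2\rho,\,\varepsilon\ge 1,\,\rho\ge 0\}$.

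The \textit{converse} $\matr_1 \subseteq \{\varepsilon + 2\rho \ge 3\}$ is the substantive part. The covering bound of \Cref{thm:covering} is not tight here: solving its LP at $q=3,\,t=1$ matches $\varepsilon = 3-2\rho$ only at $\rho=0$, and for intermediate $\rho$ yields a weaker piecewise-linear curve that dips below $3-2\rho$, essentially because the weight $\log_3 2 \approx 0.63$ in front of $\pi_2$ exceeds the ideal value $\tfrac12$. My plan is to supplement the covering estimate with a structural refinement focused on degree-$2$ primary nodes. Partition primary nodes by degree into $k_1$, $k_2$, and $k_{\ge 3}$, and charge each primary node $3$ units of the combined resource $E + 2m$. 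Degree-$1$ primaries force pairwise distinct unique neighbors (else coloring two such primaries with different labels is uncorrectable), contributing $k_1$ edges plus $k_1$ exclusive redundant nodes, i.e., $3k_1$ units. Degree-$\ge 3$ primaries contribute $\ge 3k_{\ge 3}$ edges directly. For degree-$2$ primaries, introduce the auxiliary multigraph $H$ on the redundant nodes whose edges are the degree-$2$ primaries (joining their two redundant neighbors); the 1-defect correcting property forces the \emph{edge-hitting} condition on $H$, namely that for every $3$-coloring of $E(H)$ there exists a $3$-coloring of $V(H)$ such that every edge has at least one matching endpoint. A structural analysis under this condition should yield $2|V(H)| \ge |E(H)|$, i.e., $2m_2 \ge k_2$, where $m_2$ counts the redundant nodes ``consumed'' by degree-$2$ primaries. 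Summing the three contributions gives $E + 2m \ge 3k_1 + 3k_2 + 3k_{\ge 3} = 3k$, i.e., $\varepsilon + 2\rho \ge 3$.

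The main obstacle is the multigraph analysis of $H$: the generic covering inequality applied to $H$ alone produces only $|V(H)| \ge (1-\log_3 2)|E(H)|$, and the extra slack needed to reach $\tfrac12|E(H)|$ must be extracted by casework on $H$'s local structure --- self-loops (which lock the color of their incident vertex and so cannot accommodate two parallel loops of different colors), parallel edges between two vertices, tree components versus cyclic components --- together with a careful reconciliation when a redundant node is simultaneously incident to degree-$1$ or degree-$\ge 3$ primaries, so that the naive decomposition $m = k_1 + m_2 + \text{rest}$ is not an over-count. This detailed combinatorial casework is consistent with the paper's remark that the proof ``requires significant casework.''
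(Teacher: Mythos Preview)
Your diagnosis of the typo and the achievability via \Cref{prop:simpach} are correct, and your high-level converse strategy --- stratify primary nodes by degree and analyse the degree-$2$ stratum via an auxiliary (multi)graph on the redundant nodes --- is indeed the paper's strategy. However, your execution diverges from the paper's in ways that leave real gaps.

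First, your accounting for degree-$1$ primaries is not valid as stated. You argue that the $k_1$ degree-$1$ primaries have pairwise distinct neighbours (true) and therefore contribute $k_1$ \emph{exclusive} redundant nodes. But nothing prevents a degree-$2$ primary from sharing that same redundant node, so the decomposition $m \ge k_1 + m_2$ is not justified; you flag this at the end but do not resolve it. The paper sidesteps this entirely: after reducing away degree~$\ge 3$ primaries (by reconnecting all of them to $q=3$ fresh redundant nodes --- an asymptotically free move, cleaner than your edge-count $\ge 3k_{\ge 3}$), it shows via an iterative forced-labelling argument that any connected component containing a degree-$1$ primary must be acyclic, and hence already satisfies $m\ge k$ on that component. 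This removes degree-$1$ primaries from the picture without any shared-vertex bookkeeping.

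Second, for the all-degree-$2$ case you aim for $2|V(H)|\ge |E(H)|$ and concede this needs casework you do not carry out. The paper in fact proves the much stronger $|V(H)|\ge |E(H)|$ (with two sporadic $k=4$, $m=3$ exceptions): it shows that if $H$ contains two cycles in the same component --- whether vertex-disjoint and joined by a path, touching at one vertex, or sharing an arc --- then one can construct an \emph{alternating} $3$-labelling of the primaries (adjacent primaries get distinct labels), which by a simple counting lemma defeats $(*)$ whenever $m<k$. Thus each component of $H$ has at most one cycle, giving $m\ge k$ componentwise. This structural conclusion is considerably sharper than your target and is where the ``significant casework'' actually lives; your proposal correctly anticipates casework on cycles but does not supply it.
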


We give the proof in \Cref{apx:3alpha}.

%----------------------ASYMPTOTIC-------------------------%
%----------------------ASYMPTOTIC-------------------------%
%----------------------ASYMPTOTIC-------------------------%
%----------------------ASYMPTOTIC-------------------------%

\section{Fundamental limit for $t\to \infty$} \label{sec::asymSection}

Recall that as $t\to \infty$ the fundamental limit $\matr_\infty \eqdef \limsup  \matr_t$ can be characterized as the set of wiring complexity-redundancy pairs, namely
\begin{equation}
 \varepsilon = {E\over kt}, \quad \rho = {m\over kt}
\end{equation}
over all values of $t$ (see Proposition~\ref{prop:basic}.)
The goal of this section is to prove the following result, that generalizes the binary version stated earlier in 
Theorem~\ref{thm::asymResult}.

\begin{theorem}
\label{thm::asymResultLarger}
Fix alphabet $|\matx|=q$. The region $\matr_\infty$ defined in~\eqref{eq:rinfty} is the closure of the set of points
$(\varepsilon,\rho)$, parameterized by the distribution $P_S$ on a finite support of $\mathbb{Z}_+$, and
\begin{align}\label{eq:asymResultLarger}
	 \varepsilon &= {\EE[S]\over F(P_S)}, \quad  \rho = {1\over F(P_S)}\,,\\
F(P_S) &\eqdef \min_{P_X} \max_{P_{Y|\vect L}} \min_{j\in[q]}{1 \over P_X(j)} \EE[L_j \mathbbm{1}\{Y=j\}]
\label{eq::optfunc2}
\end{align}
where $\EE[\cdot]$ is computed over random variables $S\in \mathbb{Z}_{+}, X\in[q], \vect L=(L_1,\ldots,L_q)\in\{0 \cup \mathbb{Z}_{+}\}^q, Y\in[q]$ with joint distribution
\begin{equation}\label{eq:jdist}
	P_{S,\vect L,Y}(s,\vect{\ell},y) \eqdef P_S(s) P_{\vect{L}|S} (\vect \ell | s) P_{Y|\vect L}(y|\vect \ell)\,. 
\end{equation}
where\footnote{$P_{\vect L|S}$ is the multinomial distribution, $\text{Mult}(s,[P_X(1), \cdots, P_X(q)])$} 
	\begin{equation}\label{eq:multdef}
		P_{\vect L|S}(\vect \ell|s) \eqdef \binom{s}{\ell_1,\cdots,\ell_q} \prod_{j = 1}^q P_X(j)^{\ell_j}\,.
\end{equation}	
\end{theorem}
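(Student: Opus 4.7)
The plan is to establish $\matr_\infty$ by matching achievability and converse bounds that both reduce to the same calculation over subset designs. The unifying idea is that in the $t\to\infty$ limit we may restrict without loss of generality to permutation-invariant (i.e., subset) designs, and the relevant asymptotic random object is the local composition $\vect L$ of a uniformly random redundant node's neighborhood, whose conditional distribution (given the adversary's empirical primary-color type) converges to the multinomial in~\eqref{eq:multdef} as $k\to\infty$.

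For achievability, given $P_S$ with finite support $\{1,\ldots,s_{\max}\}$, choose multiplicities $n_s$ such that the subset design $G=\bigvee_{s} n_s\,S(k,s)$ has the degree of a uniformly random redundant node converging to $P_S$. Given any primary labeling $X^k$ with empirical type $\hat P_X$, derandomize the inner maximum in~\eqref{eq::optfunc2}: select $P^*_{Y|\vect L}$ attaining the max for $P_X=\hat P_X$ and draw $Y(R)\sim P^*_{Y|\vect L}(\cdot\,|\,\vect L(R))$ independently for each $R$. The identity
\[
\sum_{P:X(P)=j}\,|\{R\in N(P):Y(R)=j\}|=\sum_{R:Y(R)=j}L_j(R)
\]
shows that the average over $j$-colored primary nodes of the count of $j$-matching neighbors equals $\tfrac{m}{k\hat P_X(j)}\,\EE[L_j\mathbbm{1}\{Y=j\}]$. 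McDiarmid's inequality applied to $Y^m$ (where a single swap perturbs any primary's matching count by at most $s_{\max}$), followed by a union bound over the $k$ primary nodes, upgrades the average to a pointwise lower bound differing from it by $o(m/k)$. Taking the worst case over $j$ and over $\hat P_X$ yields $t\ge (m/k)F(P_S)(1-o(1))$, so $\rho=m/(kt)\to 1/F(P_S)$ and $\varepsilon=E/(kt)=m\,\EE[S]/(kt)\to \EE[S]/F(P_S)$ along $k\to\infty$.

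For the converse, start from any $(k,m,t,E)_q$-design $G$ and symmetrize by merging all copies indexed by permutations $\sigma$ of the primary node set: $G':=\bigvee_{\sigma}G^{\sigma}$. By \Cref{prop:merge}, $G'$ is a $(k,k!m,k!t,k!E)_q$-design, which preserves $\rho$ and $\varepsilon$; by construction $G'$ is permutation-invariant, hence a subset design (\Cref{prop:perminv}) with some subset-size distribution $P_S$. For any target $P_X$, pick a primary labeling of $G'$ whose empirical type is $\hat P_X\approx P_X$; since every $j$-labeled primary node must have at least $k!t$ $j$-matching neighbors, the same double-counting identity yields $k!t\cdot k\hat P_X(j)\le k!m\,\EE[L_j\mathbbm{1}\{Y=j\}]$ for some $P_{Y|\vect L}$ derived from the design's chosen labeling. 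Optimizing over the designer's strategy and infimizing over $P_X$ gives $t/m\le (1/k)F(P_S)$, whence $\rho\ge 1/F(P_S)$ and $\varepsilon\ge \EE[S]/F(P_S)$.

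The main obstacle is reconciling the multinomial distribution~\eqref{eq:multdef} appearing in $F(P_S)$ with the hypergeometric neighborhood distribution of a finite-$k$ subset design; because we work in $\matr_\infty$ we may take $k\to\infty$ after fixing $P_S$, so this is a routine multinomial-hypergeometric limit, but its rate of convergence must be synchronized with the McDiarmid error and with the $o(1)$ gap $\hat P_X-P_X$. A secondary subtlety is that symmetrization multiplies $t$ by $k!$, which is harmless in $\matr_\infty$ but forecloses this approach for fixed-$t$ regions; and one must verify that the merged labeling on $G'$ can be reduced, after further averaging, to depend only on the neighborhood composition $\vect L$, which is precisely what permutation invariance affords.
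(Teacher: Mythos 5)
Your proposal is correct and follows the paper's architecture almost exactly: symmetrize an arbitrary design into a permutation-invariant (hence subset) design for the converse, use subset designs for achievability, and reconcile the finite-$k$ hypergeometric neighborhood statistics with the multinomial in~\eqref{eq:multdef} by letting $k\to\infty$ (the paper's \Cref{prop::Fnkprops}). The one genuinely different step is your achievability labeling: you draw $Y(R)\sim P^*_{Y|\vect L}$ independently and invoke McDiarmid plus a union bound to convert the average matching count into a pointwise guarantee, which forces you to take $m\gg k\log k$ and to track an $o(m/k)$ concentration error. The paper instead exploits the fact that the design is a merging of $n$ identical copies: it partitions the copies into groups of sizes $c_j=nP_{Y_n|\vect L}(j|\vect\ell)$ and labels all type-$\vect\ell$ redundant nodes in group $j$ with $j$, so every primary node receives \emph{exactly} the prescribed fraction of matching neighbors with no error term (this is the content of \Cref{prop::subsetFiniteKResult}, and the paper's footnote notes that the original random-coding argument was replaced by this derandomization). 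Both routes work; the deterministic one is cleaner and yields the two-sided bound $F_{k,n}\le tk/m\le F_k$ exactly. One point you undersell: in the converse you need $F_k(P_S)\le F(P_S)$ at each \emph{fixed} $k$, not merely $F_k\to F$; since the min over $P_X$ in $F_k$ is taken over a restricted lattice and the kernel is hypergeometric, this inequality is not a routine limit but is proved in the paper via a doubling argument ($F_k\le F_{2k}\le\cdots\to F$ monotonically, \Cref{lem::fklessf2k}), and your write-up should incorporate that step explicitly.
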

%The function $P_{Y|\vect L}$ can be understood as proportion of redundant nodes whose neighbors have type $L$ that are assigned to a label $y$. The function $P_X$ can be understood as the proportion of primary nodes given each label.
\Cref{thm::asymResultLarger} gives \Cref{thm::asymResult} by substituting $P_{Y|\vect L}(0|(L_0, L_1))$ with $f(L_0,L_1)$, $P_X(0) $ with $\lambda$, and $P_X(1)$ with $1 - \lambda$. Also, the multinomial distribution is replaced by the binomial distribution. 

We start the section by proving relevant properties of $F(P_S)$. We then use these properties to prove the achievability  (i.e., upper bound) of \Cref{thm::asymResultLarger}. (This achievability proof explains why the quantities used in \Cref{thm::asymResultLarger} are important.) Next, we present a symmetrization property which is the key idea of the converse argument of \Cref{thm::asymResultLarger}. Putting these elements together gives the complete proof. 

Following the proof, we present a number of observations about \Cref{thm::asymResultLarger}. These include a section about how we compute \Cref{thm::asymResultLarger} numerically and a section on the achievable region for designs where $k$ is finite, but $t$ and $m$ are allowed to go to infinity. This result follows from the proof of \Cref{thm::asymResultLarger}. We also discuss how the Hamming block is optimal in this context.

\subsection{Auxiliary results about $F(P_S)$}

Before proceeding further, we need to describe some technical properties of $F(P_S)$ and related quantities.\footnote{The notation $\frac{1}{k} \mathbb{Z}$ refers to the set of fractions with denominator $k$.}

\begin{definition}[Finitary $F$]
We define $F_{k,n}(P_S)$ and $F_k(P_S)$ as follows:
\begin{align*} F_{k,n}(P_S) \eqdef& \min_{P_X \in {1\over k}\ZZ} \,\,\,\max_{P_{Y|\vect L^{(k)}} \in {1\over n}\ZZ}\,\,\, \min_{j\in[q]}\,\,\\& \quad \quad 
{1 \over P_X(j)} \EE[L_j^{(k)} \mathbbm{1}\{Y=j\}]\,, \numberthis \label{eq:fkndef}\\
 F_{k}(P_S) \eqdef& \min_{P_X \in {1\over k}\ZZ}\,\,\, \max_{P_{Y|\vect L^{(k)}}}\,\, \min_{j\in[q]} \,\,\, \\
 & \quad \quad{1 \over P_X(j)} \EE[L_j^{(k)}
 \mathbbm{1}\{Y=j\}]\,, \numberthis \label{eq:fkdef}
\end{align*} 
where $\EE[\cdot]$ is computed over random variables $S\in [k], X\in[q], \vect L^{(k)}=(L_1,\ldots,L_q)\in\{0 \cup \mathbb{Z}_+\}^q, Y\in[q]$ with joint distribution
\begin{equation}\label{eq:jdist_k}
	P_{S,\vect L^{(k)},Y}(s,\vect{\ell},y) \eqdef P_S(s) P_{\vect{L}^{(k)}|S} (\vect \ell | s) P_{Y|\vect L^{(k)}}(y|\vect \ell)\,. 
\end{equation}
where\footnote{$P_{\vect L^{(k)}|S}$ is the multivariate hypergeometric distribution, $\text{HyperGeom}(s,k,[P_X(1), \cdots, P_X(q)])$} 
\begin{equation}\label{eq:hyperdef}
	P_{\vect L^{(k)}|S}(\vect \ell|s) \eqdef {{kP_X(1) \choose \ell_1} \cdots {kP_X(j) \choose \ell_j } \cdots  {kP_X(q)\choose \ell_q} \over {k \choose s}}\,.  
\end{equation}
\end{definition}

Note that the definition of $F_{k,n}$ is similar to that of $F(P_S)$, see~\eqref{eq::optfunc2}, but with two changes: 1) values of $P_X$ and $P_{Y|\vect L^{(k)}}$ (instead of $P_{Y|\vect L}$)
are required to be integer multiples of ${1\over k}$ and $1\over n$, respectively; and b) $P_{\vect L^{(k)}|S}$ is (multivariate) hypergeometric, instead of multinomial. \textcolor{black}{The function $F(P_S)$ which we are ultimately interested in for this section is the limit of $F_k(P_S)$ for $k \to \infty$, which itself is a limit of $F_{k,n}(P_S)$. The function $F_k(P_S)$ is an important quantity which bounds the rate region for designs with finite $k$, which we will discuss later in~\ref{sec::finitek}.}

\begin{proposition} \label{prop::Fnkprops} For any $P_S$ with finite expectation we have
	\begin{equation}\label{eq:fnk_bounds}
		F_k(P_S) -  \frac{\mathbb{E}[S]}{n} \leq F_{k,n}(P_S) \leq F_{k}(P_S)\,.
\end{equation}
Also, there exists a sequence $\epsilon_{k}\to 0$ such
that for any $P_S$ on $\ZZ_+$ with finite third moment we have 
\begin{equation}\label{eq:fk_bounds}
		|F_{k}(P_S) - F(P_S)| \le {\EE[S^3]\over 2k} + \epsilon_{k}\,.
\end{equation}
\end{proposition}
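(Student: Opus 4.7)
I split the proof along the two displayed inequalities.

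For (\ref{eq:fnk_bounds}), the upper bound $F_{k,n}\le F_k$ is immediate from the definitions, since requiring $P_{Y|\vect L^{(k)}}$ to have denominator $n$ shrinks the feasible set of the inner maximization. For the matching lower bound $F_{k,n}\ge F_k-\EE[S]/n$, I start from a near-optimal pair $(P_X^\star,P_{Y|\vect L^{(k)}}^\star)$ for $F_k$ and round each row $P_{Y|\vect L^{(k)}}^\star(\cdot\mid\vect\ell)$ into $\tfrac{1}{n}\ZZ$ by flooring each coordinate and adjusting a single entry per row to restore the unit sum. Each coordinate shifts by at most $1/n$, hence $\EE[L_j^{(k)}\mathbbm{1}\{Y=j\}]$ changes by at most $\EE[L_j^{(k)}]/n=P_X^\star(j)\EE[S]/n$. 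Dividing by $P_X^\star(j)$ shows that each $v_j$, and hence $\min_j v_j$, changes by at most $\EE[S]/n$.

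For (\ref{eq:fk_bounds}), I compare $F_k$ (hypergeometric draws, $P_X\in\tfrac{1}{k}\ZZ$) to $F$ (multinomial draws, $P_X\in\Delta_q$) by combining a coupling step with a discretization step. For the coupling, draw $S$ i.i.d.\ samples from $P_X$ to obtain multinomial counts $\vect L$; whenever the $S$ samples land at pairwise distinct indices of the population of $k$ items, reuse them as a without-replacement sample with counts $\vect L^{(k)}$, otherwise resample the duplicates. Then $\PP[\vect L\neq\vect L^{(k)}\mid S]\le S(S-1)/(2k)$. Keeping a common $P_{Y|\vect L}$, the integrand $L_j\mathbbm{1}\{Y=j\}$ agrees on the no-collision event and is bounded by $S$ on its complement, so
\[
\bigl|\EE_{\mathrm{mult}}[L_j\mathbbm{1}\{Y=j\}]-\EE_{\mathrm{hyp}}[L_j^{(k)}\mathbbm{1}\{Y^{(k)}=j\}]\bigr|\le \EE[S^3]/(2k),
\]
which is the source of the $\EE[S^3]/(2k)$ term. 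For the discretization, any $P_X\in\Delta_q$ admits $\tilde P_X\in\tfrac{1}{k}\ZZ\cap\Delta_q$ with $\|P_X-\tilde P_X\|_1\le q/k$, and the $P_S$-free part of the continuity modulus of the max-min value in $P_X$ produces $\epsilon_k\to 0$.

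The principal obstacle is that converting the coupling's bound on $\EE[L_j\mathbbm{1}\{Y=j\}]$ into a bound on $v_j=\EE[L_j\mathbbm{1}\{Y=j\}]/P_X(j)$ naively produces a factor of $1/P_X(j)$, which blows up at the boundary of $\Delta_q$. I control this by arguing that near the optimum of the outer minimization, each coordinate $j$ active in $\min_j v_j$ has $P_X(j)$ bounded below by a constant depending only on $q$: whenever $P_X(j)$ is very small, the helper saturates $v_j\approx\EE[S]$ by choosing $P_{Y|\vect L}(j\mid\vect\ell)=\mathbbm{1}\{\ell_j>0\}$, so $j$ drops out of the effective min. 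The coupling bound then transfers to the game value with only a $P_S$-independent correction, which is absorbed into $\epsilon_k$ together with the discretization error, leaving $\EE[S^3]/(2k)$ as the dominant, $P_S$-dependent term.
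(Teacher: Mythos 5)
Your treatment of \eqref{eq:fnk_bounds} is essentially the paper's argument: the observation that $\EE[L_j^{(k)}]=P_X(j)\,\EE[S]$ makes the $1/P_X(j)$ normalization cancel exactly, which is the content of the paper's identity \eqref{eq::binomSimp} specialized to a constant perturbation. One small slip: rounding by flooring every coordinate and dumping the entire deficit on a \emph{single} entry can move that entry by up to $(q-1)/n$; you should instead distribute the deficit (add $1/n$ to $d$ of the floored coordinates, where $d/n$ is the total deficit), after which every coordinate moves by at most $1/n$ and your computation gives the claimed $\EE[S]/n$.

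The second half has a genuine gap, and you have correctly located where it is. Your coupling controls $\bigl|\EE_{\mathrm{mult}}[L_j\mathbbm{1}\{Y=j\}]-\EE_{\mathrm{hyp}}[L_j^{(k)}\mathbbm{1}\{Y=j\}]\bigr|$ by $\EE[S^3]/(2k)$, but the quantity entering the game is $v_j$, which carries the factor $1/P_X(j)$; for $P_X(j)$ as small as $1/k$ the naive transfer gives an error of order $\EE[S^3]/2$, which does not vanish. Your proposed repair --- that coordinates active in the min have $P_X(j)$ bounded below by a constant $c_q$ --- is only sketched: the max--min comparison requires controlling $|v_j^{\mathrm{hyp}}-v_j^{\mathrm{mult}}|$ for \emph{all} $j$ when you plug the multinomial-optimal $P_{Y|\vect L}$ into the hypergeometric game, not just the active ones, so you would have to modify the helper's strategy on the rare coordinates and track the resulting loss on the others. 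More importantly, even if completed, this route yields a correction of order $\EE[S^3]/(2k c_q)$, which depends on $P_S$ and therefore cannot be absorbed into the $P_S$-independent sequence $\epsilon_k$; the stated bound would not follow. The paper sidesteps the issue entirely with the size-biasing identity \eqref{eq::binomSimp}--\eqref{eq::binomSimp2}: $\frac{1}{P_X(j)}\EE[L_j f(\vect L)]=\EE[S\, f(\vect M+\vect e_j)]$ with $\vect M\sim\mathrm{Mult}(S-1,P_X)$ given $S$ (and its hypergeometric analogue), so the normalized quantity $v_j$ is itself the expectation of a function bounded by $S$ against the law of a \emph{reduced} sample, and the total-variation coupling of Lemma~\ref{lem::TVmult} applies to it directly, giving $|h-h_k|\le\EE[S^3]/(2k)$ uniformly in $P_X$ with no $1/P_X(j)$ anywhere. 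That identity is the missing ingredient; once you have it, the remaining discretization of $P_X$ into $\tfrac1k\ZZ$ is handled by uniform continuity exactly as you propose.
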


See \Cref{sec::fnkpropsProof} for proofs.

\subsection{Subset design achievability and upper bound}
The next proposition gives bounds on the performance of subset designs in terms of $F_{k,n}(P_S)$ and $F_{k}(P_S)$.\footnote{This proposition initially used random coding as an argument. Random coding has since been replaced.}
% upper bound on t for single subset
\begin{proposition}\label{prop::subsetFiniteKResult}
Let $q = |\matx|$ and fix $k \in \mathbb{Z}$. Let $G = \bigvee_{i = 1}^n G'$, where $G'$ is a subset design with $P_S(s)$ as the proportion of redundant nodes with degree $s$ for $s \in [k]$.  If $G$ is a $(k, m, t, E)_q$-design, where $E = m\mathbb{E}[S]$ and $t$ is the maximum number of defects $G$ can correct, then 
\begin{equation}
\frac{m}{k} F_{k,n}(P_S)  \leq  t \leq \frac{m}{k} F_k(P_S)\,.
\end{equation}
\end{proposition}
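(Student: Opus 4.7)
The plan relies on the permutation invariance of subset designs (Proposition~\ref{prop:perminv}) and a simple swap-of-sums identity for the average number of matching neighbors.  First I would note that, once a primary labeling with empirical distribution $P_X\in\tfrac{1}{k}\ZZ$ is fixed, the empirical joint law of the degree and label-count vector of a uniformly random redundant node of $G=\bigvee_{i=1}^n G'$ is exactly $P_S(s)P_{\vect L^{(k)}|S}(\vect\ell\mid s)$ from~\eqref{eq:hyperdef}: a degree-$s$ redundant node of a subset design is incident to a uniformly random $s$-subset of the $k$ primary nodes, so its label-count vector is multivariate hypergeometric.

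For the converse bound $t\le\tfrac{m}{k}F_k(P_S)$, I would view an arbitrary labeling of the redundant nodes as a conditional distribution $P_{Y|\vect L^{(k)}}$ and swap the sum over primary nodes with the sum over redundant nodes to obtain, for each label $j$,
\begin{equation*}
\frac{1}{k_j}\sum_{P:\,\mathrm{label}(P)=j} N_j(P)=\frac{1}{k_j}\sum_{R}\ell_j(R)\,\mathbbm{1}\{\mathrm{label}(R)=j\}=\frac{m}{kP_X(j)}\,\EE[L_j^{(k)}\mathbbm{1}\{Y=j\}],
\end{equation*}
where $k_j=kP_X(j)$ and $N_j(P)$ is the number of matching-label neighbors of $P$.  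Since a $t$-defect correcting design forces every individual $N_j(P)\ge t$, the same lower bound holds for the average.  Picking the adversarial primary labeling that attains the minimum in $F_k(P_S)$, and then enlarging the designer's feasible set to all $P_{Y|\vect L^{(k)}}$, gives the claimed upper bound.

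For achievability, I would fix $P_X\in\tfrac{1}{k}\ZZ$, choose $P_{Y|\vect L^{(k)}}\in\tfrac{1}{n}\ZZ$ attaining the max in~\eqref{eq:fkndef}, and construct a deterministic labeling of the redundant nodes as follows: for each count vector $\vect\ell$, prescribe a fixed multiset of $n$ labels with frequencies $nP_{Y|\vect L^{(k)}}(\cdot\mid\vect\ell)$, and use this \emph{same} multiset for the $n$ copies sitting over every subset $T$ with $\vect\ell(T)=\vect\ell$.  This labeling is invariant under every permutation of primary nodes that preserves primary labels, so by permutation invariance of $G$ each primary node $P$ with $\mathrm{label}(P)=j$ has identical count $N_j(P)=\tfrac{m}{kP_X(j)}\EE[L_j^{(k)}\mathbbm{1}\{Y=j\}]$, exactly the average that appeared in the converse.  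Taking $\min_j$, then $\max$ over $P_{Y|\vect L^{(k)}}\in\tfrac{1}{n}\ZZ$, then $\min$ over $P_X$ yields $t\ge\tfrac{m}{k}F_{k,n}(P_S)$.

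The main obstacle I anticipate is the discretization bookkeeping: verifying that the $n$-fold merging genuinely supplies enough copies of each subset to realize any $P_{Y|\vect L^{(k)}}\in\tfrac{1}{n}\ZZ$ (guaranteed because every $s$-subset supporting a degree-$s$ redundant node in $G'$ appears with multiplicity at least $n$ in $G$), and making the invariance step precise so that the minimum of $N_j(P)$ over primary nodes labeled $j$ really coincides with the average rather than merely bounding it.  Once these combinatorial details are handled, both directions follow from the single identity above, differing only in whether the designer's distribution is forced to lie in $\tfrac{1}{n}\ZZ$ (achievability) or is allowed to be arbitrary (converse).
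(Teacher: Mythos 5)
Your proposal is correct and follows essentially the same route as the paper: both directions rest on the double-counting identity equating the per-label average number of matching neighbors with $\tfrac{m}{kP_X(j)}\EE[L_j^{(k)}\mathbbm{1}\{Y=j\}]$, with the converse using min $\le$ average under an adversarial $P_X$, and achievability using the $n$-fold merging to realize any $P_{Y|\vect L^{(k)}}\in\tfrac{1}{n}\ZZ$ symmetrically so every like-labeled primary node attains exactly that average (the paper phrases this as partitioning the $n$ copies of $G'$ into groups of sizes $c_1,\dots,c_q$ per type, which is equivalent to your fixed-multiset prescription).
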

\begin{proof}
First we show the upper bound that $t \leq \frac{m}{k} F_k(P_S)$.

Consider any labeling $w^k \in \matx ^k$ of the $k$ primary nodes of $G$. Let the frequency which each label occurs in the labeling have empirical distribution $P_X$ (that is, if $k_i$ of the $k$ primary nodes have label $i$, then $P_{X}(i) = \frac{k_i}{k}$). \textcolor{black}{Given this labeling, we define the \emph{type} of each redundant node $v$ to be $\vect \ell = (\ell_1, \cdots,\ell_q)$, where $\ell_j$ is the number of primary nodes with label $j$ which is a neighbor of redundant node $v$. }(If the degree of the redundant node is $s$, then $\sum_{i = 1}^q \ell_i = s$.) Because $G$ is a subset design, the proportion of degree $s$ redundant nodes in $G$ with type $\vect \ell$ is 
\begin{equation}
P_{\vect L^{(k)}|S}(\vect \ell|s) = {{kP_X(1) \choose \ell_1} \cdots {kP_X(q)\choose \ell_q} \over {k \choose s}}\,.
\end{equation}

Now, for any choice of labeling $r^m \in \matx^m$ of the $m$ redundant nodes, let $P_{Y|\vect L^{(k)}}(j|\vect \ell)$ represent the proportion (empirical distribution) of redundant nodes of type $\vect \ell$ which are labeled $j$. 
For each label $j$, we can count the \emph{average} number of matching redundant node neighbors a primary node $u$ with label $j$ has by summing up all the edges between primary and redundant nodes both with label $j$, and then dividing this by the total number of primary nodes with label $j$. This average is 
\begin{align}
\tilde{t}(j) &\eqdef \frac{1}{kP_X(j)} \sum_{s} m P_S(s) \sum_{\vect \ell}P_{\vect L^{(k)}|S}(\vect \ell|s) \ell_j P_{Y|\vect L^{(k)}}(j|\vect \ell) \\
&= \frac{m}{k}{1 \over P_X(j)} \EE[L_j \mathbbm{1}\{Y=j\}]\,.
\end{align}

The label $j$ where this average is lowest determines the upper bound on the number of defects $G$ with labeling $w^k$ and $r^m$ can correct. This upper bound is given by $\min_j \tilde{t}(j)$. We have the freedom to pick the redundant node labeling $r^m$ with the empirical distribution $P_{Y|\vect L^{(k)}}$ which maximizes the average. The defect correcting number needs to hold for all possible $w^k$, so the empirical distribution $P_X$ which gives the lowest value of $ \max_{P_{Y|\vect L^{(k)}}} \min_j \tilde{t}(j)$ determines $t$. This gives the upper bound on $t$.

We now show the lower bound $\frac{m}{k} F_{k,n}(P_S)  \leq  t $.

Given any labeling $w^k \in \matx^k$ of the primary nodes with empirical distribution $P_X$, let 
\begin{equation}
 P_{Y_n|\vect L^{(k)}} = \argmax_{P_{Y|\vect L^{(k)}} \in \frac{1}{n}\mathbb{Z}} \,\, \min_{j\in[q]}{1 \over P_X(j)} \EE[L_j \mathbbm{1}\{Y=j\}] \,.
\end{equation}

For each $\vect \ell$, $P_{Y_n|\vect L^{(k)}}(j|\vect \ell) = \frac{c_j}{n}$ for some $c_j \in \mathbb{Z}_{+} \cup 0$, and $\sum_j c_j = n$.
Because $G$ is a merging of $n$ copies of $G'$, we can partition the copies of $G'$ in $G$ into sets of size $c_1, ..., c_q$. The $j$th set is a set of $c_j$ copies of $G'$. Label all redundant nodes of type $\vect \ell$ in the $j$th set the value $j$. 
We can determine that each primary node $u$ with label $j$ has a total of $P_S(s) m \frac{\ell_j}{P_X(j)k} P_{\vect L^{(k)}|S}(\vect \ell |s)$ redundant nodes of type $\vect \ell$ in its neighborhood. This redundant node labeling scheme assigns exactly $ P_{Y_n|\vect L^{(k)}}(j|\vect \ell)$ of these neighbors the label $j$.

Repeat this labeling process for each redundant node type $\vect \ell$. Summing over all $\vect \ell$ and all $s$ will get that the total number of redundant nodes with label $j$ in the neighborhood of primary node $u$ is $ \sum_{s,\vect\ell}P_S(s) m \frac{\ell_j}{P_X(j)k} P_{\vect L^{(k)}|S}(\vect \ell |s) P_{Y_n|\vect L^{(k)}}(j|\vect \ell)$.

Using this scheme, $G$ can correct at least 
\begin{align}
t &\geq \min_{P_X \in \frac{1}{k} \mathbb{Z}} \min_{j\in[q]} \sum_{s,\vect\ell}P_S(s) m \frac{\ell_j}{P_X(j)k} P_{\vect L^{(k)}|S}(\vect \ell |s) P_{Y_n|\vect L^{(k)}}(j|\vect \ell)\\
& = \frac{m}{k} \min_{P_X \in \frac{1}{k}  \mathbb{Z}} \,\, \max_{P_{Y|\vect L^{(k)}} \in \frac{1}{n}\mathbb{Z}} \,\,\min_{j\in[q]}{1 \over P_X(j)} \EE[L_j \mathbbm{1}\{Y=j\}]\\
& = \frac{m}{k}F_{k,n}(P_S)
\end{align}
defects. 
\end{proof}

\subsection{Converse and proof of \Cref{thm::asymResultLarger}}

The converse needed to show \Cref{thm::asymResultLarger} is surprisingly simple. The main idea is the following:

\begin{proposition}[Symmetrization]\label{prop:sym}
If there exists a $(k,m,t,E)_q$-design then there exists a permutation-invariant $(k, m \cdot k!, t \cdot k!, E \cdot k!)_q$-design.
\end{proposition}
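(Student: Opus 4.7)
The plan is to take the given $(k,m,t,E)_q$-design $G$, create one relabeled copy $G_\sigma$ for each permutation $\sigma \in S_k$ of the primary nodes, and then merge all $k!$ copies together along the primary node set. Concretely, I would fix the primary node set as $[k]$ for every copy, take disjoint copies $R_\sigma$ of the redundant node set $R$ of $G$, and in $G_\sigma$ connect primary node $\sigma^{-1}(i)$ to the copy $r_\sigma\in R_\sigma$ whenever $(i,r)$ is an edge of $G$. Call the resulting merged graph $G^\ast = \bigvee_{\sigma\in S_k} G_\sigma$.

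Next I would check the parameter counts. Each $G_\sigma$ is isomorphic to $G$ as an unlabeled bipartite graph (we have only relabeled which primary node is called which), so each $G_\sigma$ is itself a $(k,m,t,E)_q$-design. Applying Proposition~\ref{prop:merge} to the family $\{G_\sigma\}_{\sigma\in S_k}$ immediately gives that $G^\ast$ is a $(k, m\cdot k!, t\cdot k!, E\cdot k!)_q$-design, which delivers the numerical parameters claimed in the statement.

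The only remaining task is to verify the permutation-invariance of $G^\ast$. Given $\tau\in S_k$, I would define a map on $G^\ast$ that acts on primary nodes as $j\mapsto\tau(j)$ and on each copy $R_\sigma$ by the natural identification with $R_{\sigma\tau^{-1}}$. A short computation shows that an edge $(\sigma^{-1}(i),\,r_\sigma)$ in $G_\sigma$ is carried to $(\tau\sigma^{-1}(i),\,r_{\sigma\tau^{-1}}) = ((\sigma\tau^{-1})^{-1}(i),\,r_{\sigma\tau^{-1}})$, which is exactly the edge corresponding to $(i,r)$ in $G_{\sigma\tau^{-1}}$. Since right multiplication by $\tau^{-1}$ is a bijection on $S_k$, this map is a bipartite-graph automorphism of $G^\ast$ realizing $\tau$ on the primary side, so the whole group $S_k$ acts by automorphisms.

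The only mildly delicate point, and the one I would be most careful with, is the bookkeeping in the previous paragraph: making sure that the ``relabel primary nodes by $\tau$'' action corresponds to a genuine permutation of the index set $\{G_\sigma:\sigma\in S_k\}$ (and not merely to a graph isomorphism to some different merged graph). Once this is set up cleanly, everything else is immediate from Proposition~\ref{prop:merge} and Proposition~\ref{prop:perminv}.
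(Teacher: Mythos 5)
Your proposal is correct and follows essentially the same route as the paper: merge the $k!$ permuted copies $G_\sigma$ along the primary nodes and invoke Proposition~\ref{prop:merge} for the parameters. Your explicit verification that $\tau\in S_k$ acts by permuting the copies via $\sigma\mapsto\sigma\tau^{-1}$ is a more careful rendering of the invariance claim that the paper only asserts in passing.
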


\begin{proof}
Let $G$ be a $(k,m,t,E)_q$-design. 
We will merge $G$ exactly $k!$ number of times. The key is that each copy will be merged by identifying with a permutation of the original primary nodes. 

Start with an ordering of the primary nodes in the design $G$. For each $\sigma \in S_k$ (the full symmetric group of $k$ elements), let $G_\sigma$ be isomorphic to the design $G$, with the order of its primary nodes transformed by $\sigma$. Then merge $G_\sigma$ for all $\sigma \in S_k$ identifying primary nodes in the same order. 

Let the result be 
\begin{equation}
G_\mathrm{PERM} = \bigvee_{\sigma \in S_k} G_\sigma \,.
\end{equation}
$G_\mathrm{PERM}$ is constructed to be permutation invariant. (For any redundant node $v$ in $G$, if $v$ has degree $s$, every set of $s$ nodes in $G_\mathrm{PERM}$ needs to be connected together by a copy of $v$.) By \Cref{prop:merge} $G_\mathrm{PERM}$ is a $(k, m \cdot k!, t \cdot k!, E \cdot k!)_q$-design.
\end{proof}

\textcolor{black}{See \Cref{fig::permMerge} for an example of merging permutations to obtain a subset design.} 

\begin{figure}
\centering
\includegraphics[scale = .12]{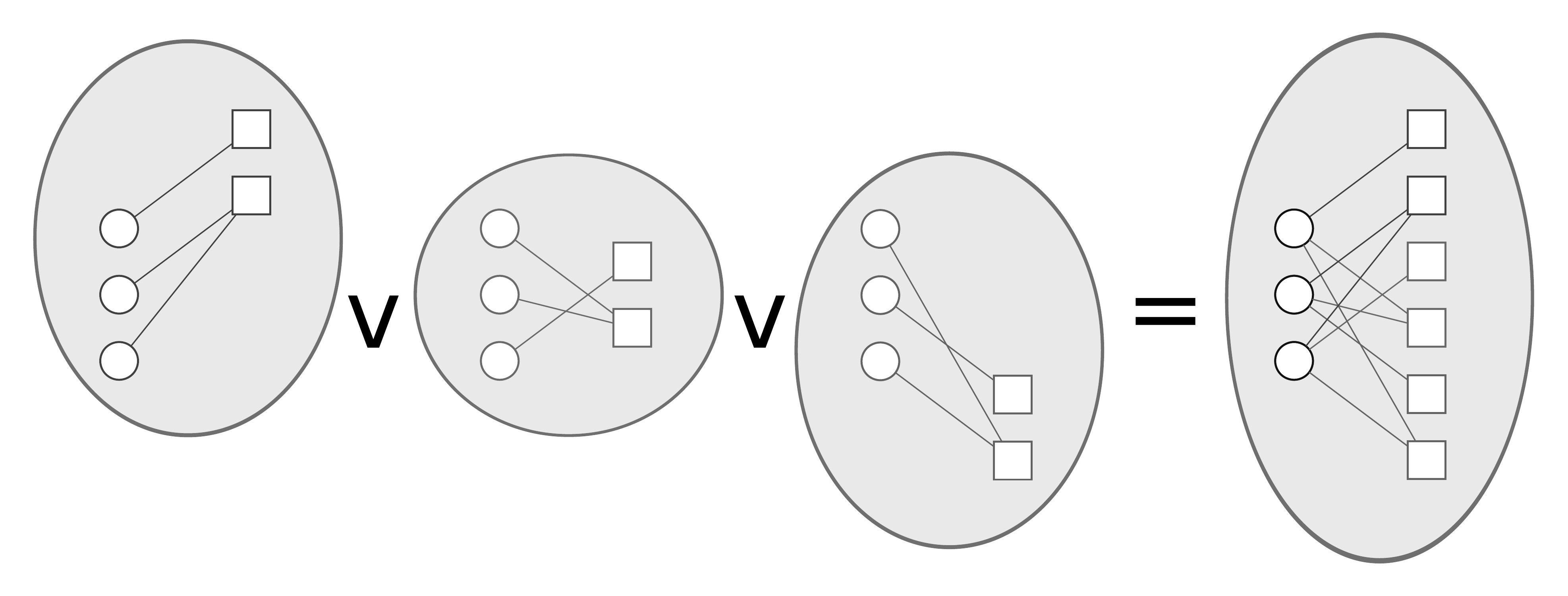}
\caption{\label{fig::permMerge}\textcolor{black}{ Example merging permutations of the same design. The resulting design is a subset design. In this example, for clarity, we did not show all $6$ possible permutations of the $3$ primary nodes in the original design. The $3$ distinct permutations shown was enough to create a subset design.}}
\end{figure}

In view of Propositions~\ref{prop:perminv} and~\ref{prop:sym}, we see that in terms of the values ${E\over
kt}, {m\over kt}$ every design on $k$ primary nodes is at most as good as a subset design on $k$ primary nodes \textcolor{black}{(meaning the pair of values an arbitrary design will achieve has the same or a worse trade-off than what a subset design can achieve)}. Performance of the latter is completely characterized by \Cref{prop::subsetFiniteKResult}. 
Now we can combine the results to prove \Cref{thm::asymResultLarger}.

\begin{proof}\emph{Proof of \Cref{thm::asymResultLarger}}

\emph{Achievability}: Fix $P_S \in \mathbb{Q}$ with finite support. For each $k$ and $n$, it is always possible to construct a subset design $G'$ on $k$ primary nodes where the proportion of redundant nodes of degree $s$ are given by $P_S(s)$. Let $G = \bigvee_{i = 1}^n G'$ so that by \Cref{prop::subsetFiniteKResult}, subset design $G$ is a $(k, m, t, E)_q$-design so that $\frac{tk}{m} \geq F_{k,n}(P_S)$ and $E = m \mathbb{E}[S]$. Since $F_{k,n}(P_S) \to F(P_S)$, there must exist a sequence of subset designs $G_i$ which are $(k_i, m_i, t_i, E_i)_q$-designs where 
\begin{align}
{\varepsilon} = \frac{E_i}{k_i t_i} \to \frac{\mathbb{E}[S]}{F(P_S)}\,, {\rho} = \frac{m_i}{k_i t_i} \to \frac{1}{F(P_S)} \,.
\end{align}
Thus 
\begin{equation} \label{eq::achPointInRInfty}
\left(\frac{\mathbb{E}[S]}{F(P_S)}, \frac{1}{F(P_S)}\right) \in \matr_\infty\,.
\end{equation}
Since $F(P_S)$ is continuous in $P_S$, \eqref{eq::achPointInRInfty} holds for any $P_S$ with finite support. 

\emph{Converse}: For any design $G$ which is a $(k, m, t, E)_q$-design, there exists a subset design $G'$ which is a $(k, m \cdot k!, t \cdot k!, E \cdot k!)_q$-design by \Cref{prop:sym}. Let $P_S$ be so that $P_S(s)$ represents the proportion of redundant nodes in $G'$ with degree $s$. Then $E = m \mathbb{E}[S]$. Let $t'$ be the number of defects $G'$ can correct. Using \Cref{prop::subsetFiniteKResult} and $F_k(P_S) \leq F(P_S)$ (cf. \Cref{lem::fklessf2k} in \Cref{sec::fkboundf}), 
\begin{equation}
t \cdot k! \leq t' \leq \frac{m \cdot k!}{k} F_k(P_S) \leq \frac{m \cdot k!}{k}F(P_S) \,.
\end{equation}

Then for design $G$, ${\varepsilon} = \frac{E}{tk}  \geq \frac{\mathbb{E}[S]}{F(P_S)}$ and ${\rho} = \frac{m}{tk}  \geq \frac{1}{F(P_S)}$. Thus, the limit of $(\frac{E_i}{k_it_i}, \frac{m_i}{k_it_i})$ for any sequence of $(k_i, m_i, t_i, E_i)_q$-designs must be in the closure of $(\frac{\mathbb{E}[S]}{F(P_S)}, \frac{1}{F(P_S)})$ for all $P_S$ with a finite support.
\end{proof}

\subsection{Observations about~\Cref{thm::asymResultLarger}} 

\subsubsection{Threshold solution} \label{sec:thresholdSolution}

The optimal value for $P_{Y|\vect L}$ tells us what the optimal labeling of redundant nodes should be. It turns out that for most values of $\vect \ell$, $P_{Y|\vect L}(j|\vect\ell)$ is either $0$ or $1$. 

For an illustration of this, consider the binary alphabet (or $q = 2$) case and the design $S(k, s)$. The types are $\vect \ell = (\ell_0, \ell_1)$. Given any empirical distribution  $P_X$ of the primary node labels, the optimal labeling of the redundant nodes must be so that redundant nodes with larger values of $\ell_1$ are assigned the label $1$ instead of redundant nodes with smaller values of $\ell_1$. Otherwise, we can always swap the labelings and increase the number of defects corrected. In fact, even when there are multiple subset sizes, it is possible to find an optimal solution where the value of $P_{Y|\vect L}$ depends only on the ratio of $\ell_0$ to $\ell_1$.  

\begin{proposition} \label{prop::splittingRatio}
For $\matx = \{0, 1\}$, the solutions $P_{Y|\vect L}$ which attain the maximum in \eqref{eq::optfunc2} must have the following form
\begin{align}\label{eq::splittingRatio}
P_{Y|\vect L}(0|\vect \ell) = \left \{ 
\begin{tabular}{cl} 
	1 &  if $\frac{\ell_0}{\ell_0 + \ell_1} > \gamma$\\
	0 &  if $\frac{\ell_0}{\ell_0 + \ell_1} < \gamma$ \\
	$\mu(\ell_0 + \ell_1)$ & if $\frac{\ell_0}{\ell_0 + \ell_1}= \gamma$  
	\end{tabular}
\right.
\end{align} 
where $\gamma \in [0,1]$ and $\mu(s) \in [0,1]$ for each $s \in \mathbb{Z}_+$. \footnote{There is not necessarily a unique solution for $\mu(s)$. One such solution has $\mu(s_a) = \mu(s_b)$ for all $s_a, s_b$.}
\end{proposition}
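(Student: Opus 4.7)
The plan is to establish the threshold form via a standard exchange argument applied to the inner maximization over $P_{Y|\vect L}$ with $P_X$ fixed. Assume $P_X(0), P_X(1) > 0$ (the case where one vanishes is vacuous). Let $p(\vect\ell) := P_{Y|\vect L}(0 \mid \vect\ell) \in [0,1]$ and write
\begin{equation*}
A_0 = \frac{1}{P_X(0)}\sum_{\vect\ell}P_{\vect L}(\vect\ell)\,\ell_0\,p(\vect\ell), \qquad A_1 = \frac{1}{P_X(1)}\sum_{\vect\ell}P_{\vect L}(\vect\ell)\,\ell_1\,\bigl(1-p(\vect\ell)\bigr),
\end{equation*}
where $P_{\vect L}$ is the marginal of $\vect L$ under $P_S$ and $P_X$. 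The inner maximum equals $\max_p \min(A_0, A_1)$.

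First I would reduce to the case $A_0 = A_1$ at any optimum: if $A_0 > A_1$, pick any type $\vect\ell$ with $\ell_1 > 0$ and $p(\vect\ell) > 0$ (some such $\vect\ell$ must exist in nontrivial cases, otherwise $A_1$ has nothing to lose) and decrease $p(\vect\ell)$ by a small amount. This strictly raises $A_1$ while keeping $A_0 \geq A_1$, contradicting optimality; the case $A_1 > A_0$ is symmetric.

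The core step is then a pairwise swap. Suppose for contradiction there exist types $\vect\ell_1, \vect\ell_2$ with $\tfrac{\ell_1^{(0)}}{\ell_1^{(0)}+\ell_1^{(1)}} > \tfrac{\ell_2^{(0)}}{\ell_2^{(0)}+\ell_2^{(1)}}$ and $p(\vect\ell_1) < 1$, $p(\vect\ell_2) > 0$. Raise $p(\vect\ell_1)$ by small $\epsilon > 0$ and lower $p(\vect\ell_2)$ by $\epsilon\, P_{\vect L}(\vect\ell_1)\ell_1^{(1)} / (P_{\vect L}(\vect\ell_2)\ell_2^{(1)})$, chosen to leave $A_1$ invariant. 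A short computation shows the induced change in $A_0$ is proportional to $\ell_1^{(0)}\ell_2^{(1)} - \ell_2^{(0)}\ell_1^{(1)}$, which is strictly positive by the ratio hypothesis. This strictly increases $\min(A_0, A_1)$, contradicting optimality.

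Defining $\gamma$ as the supremum of $\ell_0/(\ell_0+\ell_1)$ over types with $p(\vect\ell) > 0$ then forces $p(\vect\ell) = 1$ for all types with ratio above $\gamma$ and $p(\vect\ell) = 0$ for all types with ratio strictly below $\gamma$. Boundary types satisfy $\ell_0 = \gamma s$, $\ell_1 = (1-\gamma)s$ with $s = \ell_0+\ell_1$, so each is uniquely indexed by $s$ and we may set $\mu(s) := p(\vect\ell)$, yielding the form \eqref{eq::splittingRatio}. The footnote follows because swaps among boundary types (which share the same ratio) preserve both $A_0$ and $A_1$, so their total fractional mass can be redistributed arbitrarily; in particular one can take $\mu$ constant. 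The main obstacle is bookkeeping for corner cases: types with $\ell_0 = 0$ or $\ell_1 = 0$ (assigned ratios $0$ and $1$ respectively), degree-zero redundant nodes that contribute nothing and can be labeled arbitrarily, and verifying that the perturbation step $\epsilon$ is small enough to keep both adjusted $p$-values in $[0,1]$.
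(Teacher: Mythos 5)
Your overall strategy (an exchange argument on the inner maximization) is the same in spirit as the paper's, but your core swap step has a genuine logical gap. After your first reduction you are at an optimum with $A_0 = A_1$. Your swap is calibrated to leave $A_1$ \emph{invariant} while strictly increasing $A_0$; starting from $A_0 = A_1$ this yields $A_0' > A_1' = A_1$, so $\min(A_0',A_1') = A_1' = \min(A_0,A_1)$ and the minimum does \emph{not} strictly increase. The claimed contradiction with optimality therefore does not follow as written, and since the proposition asserts that \emph{every} maximizer has threshold form, a value-preserving swap is not enough. The gap is repairable in either of two ways: (i) choose the decrease $\delta$ of $p(\vect\ell_2)$ strictly between the rate $\delta_1 = \epsilon P_{\vect L}(\vect\ell_1)\ell_1^{(1)}/(P_{\vect L}(\vect\ell_2)\ell_2^{(1)})$ that preserves $A_1$ and the rate $\delta_0 = \epsilon P_{\vect L}(\vect\ell_1)\ell_1^{(0)}/(P_{\vect L}(\vect\ell_2)\ell_2^{(0)})$ that preserves $A_0$; the ratio hypothesis gives exactly $\delta_1 < \delta_0$, so any intermediate $\delta$ strictly increases \emph{both} $A_0$ and $A_1$ and does contradict optimality; or (ii) after your swap note that $A_0' > A_1'$ and then reapply your first rebalancing step, which strictly increases the minimum above the original value.

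For comparison, the paper sidesteps this issue by first passing to the size-biased distribution $\hat P_S(s)\propto sP_S(s)$ and rewriting the two terms with $L_j/S$ in place of $L_j$; under that normalization a single symmetric mass exchange between the two types changes each term by a quantity proportional to $\nu_a-\nu_b$ (the difference of the ratios), so both terms strictly increase at once without any calibration. Your endgame (defining $\gamma$ as a threshold, indexing boundary types by $s$, and the value-preserving redistribution among equal-ratio types that justifies the footnote) matches the paper and is fine once the strict-improvement step is fixed.
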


(See \Cref{sec::splittingRatioProof} for proof.)
Generalizing to larger alphabet sizes, the space of all possible $\vect \ell$ will be partitioned into $q$ pieces \textcolor{black}{depending on the relative ratios of values in $\vect \ell$}. The interior of each piece will have all types assigned to the same label, that is $P_{Y|L}(j|\vect \ell) = 1$ for some $j$.  The values of $\vect \ell$ on the boundary may be split between 2 or more values. 

Notice that in light of \Cref{thm::asymResultLarger} and \Cref{prop::splittingRatio}, computing the optimal redundant node labeling for subset designs given a fixed primary node labeling is easy. For general designs, this is NP-Hard. 

\subsubsection{Worst-case $P_X$}

The worst-case distribution of primary node labels which gives the result in~\Cref{thm::asymResultLarger} is not obvious, even in the binary alphabet case. When $\matx = \{0, 1\}$, we can easily determine that for subset designs $S(k, s)$ with even $s$, the worst-case $P_X$ is when $P_X(0) = P_X(1) = \frac{1}{2}$. However, when $s$ is odd, this is not true. When $s = 3$, the worst-case $P_X$ is determined by a solution to a cubic polynomial. When a merging of different subset designs are used or a larger alphabet is used, it is unclear how to find the worst-case $P_X$ analytically. This makes finding the worst-case $P_X$ the main difficulty in evaluating the optimization equation in~\Cref{thm::asymResultLarger} for given values of $P_S$. (The equation is non-convex in $P_X$.)

\subsection{Numerical upper and lower bounds}\label{sec::numSection}

Since the optimization presented in \Cref{thm::asymResult} is difficult to evaluate exactly, instead,
we give an approximation for the boundary by establishing computable almost tight upper and lower bounds for when $q = 2$. The details can be found in Appendix~\ref{sec::numerical} and a comparison is presented on \Cref{fig::maxProbConverse}. As can be seen, the gap between the bounds is on the order of $10^{-3}$ and virtually indistinguishable on the plot. The best known achievable point in $\matr_\infty$ for selected fixed values of $\EE[S]$ are given in~\Cref{fig:achNumbers}. These points are found by searching and using weights from the converse bound method presented in \Cref{sec::numerical}.  

\begin{figure}
\centering
\includegraphics[scale = .45]{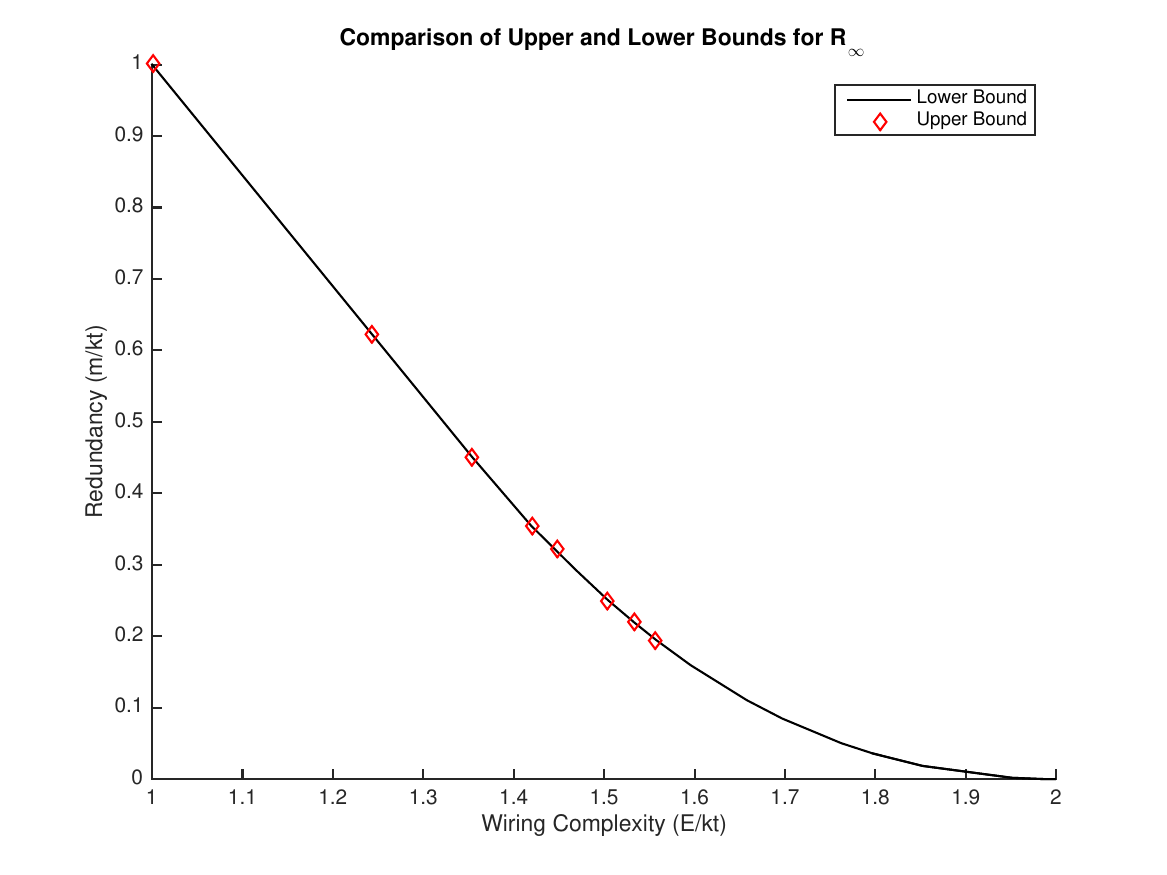}
\caption{Approximate converse bound compared with achievable points for $q = 2$. %The second plot zooms in one part of the plot (the fifth diamond marker from the left) to emphasize the difference between the converse and achievability.
\label{fig::maxProbConverse}}
\end{figure}

\begin{table}
\caption{Achievable points\label{fig:achNumbers}}
\centering
\begin{tabular}{|c||l|l|}\hline
$\EE[S]$ & Support set with corresponding $P_S$ & Point $({\varepsilon}, {\rho})$ in $\matr_\infty$ \\ \hline \hline
2 &  [1,3,4,5] , [0.62, 0.21, 0.10, 0.07] & (1.24, 0.61) \\ \hline
3 &  [1,3,4,5] , [0.24, 0.41, 0.20, 0.14] & (1.35, 0.45) \\ \hline
4 &  [3,4,5,6,7] , [0.52, 0.21, 0.13, 0.02, 0.12] & (1.42, 0.35) \\ \hline
5 &  [3,4,5,7] , [0.31, 0.23, 0.28, 0.18] & (1.40, 0.29) \\ \hline
6 &  [5,6,7,9] , [0.45, 0.31, 0.14, 0.10] & (1.47, 0.29) \\ \hline
7 &  [5,6,7,8,9] , [0.35, 0.01, 0.13, 0.32, 0.19] & (1.53, 0.22) \\ \hline
8 &  [7,8,9,11] , [0.40, 0.36, 0.16, 0.08] & (1.56, 0.19) \\ \hline
\end{tabular}

\end{table}

We observed the following effects about designs near the boundary of $\matr_\infty$ while experimenting with~\Cref{thm::asymResult}:
\begin{itemize}
\item The design has 4 or 5 distinct subset sizes
\item Odd number subset sizes are more common
\item The subset sizes which make up most of the design are consecutive, possibly skipping even subset sizes
\end{itemize} 

\subsection{Results for finite $k$}\label{sec::finitek}

To develop the proof for \Cref{thm::asymResultLarger}, we showed intermediate results on designs with $k$ primary nodes and observed what occurs when $k \to \infty$. We can use these intermediate results to determine the achievable regions for designs on $k$ primary nodes.

\begin{definition}
For fixed $q$ and $k \in \mathbb{Z}_+$, we define the region $\matr_\infty ^k$ as the closure of the set of all achievable pairs $ (\frac{E}{kt}, \frac{m}{kt}):$
\begin{equation}\label{eq::rinftyFiniteK}
 \matr_\infty ^k \eqdef \mathrm{closure} \left\{\left(\frac{E}{kt}, \frac{m}{kt}\right): \exists (k,m,t,E)_q\text{-design}\right\}\,.
\end{equation}
\end{definition}

Similar to regions $\matr_t$ and $\matr_\infty$, the region $\matr_{\infty}^k$ is convex. We can apply the proof for Claim~\ref{prop:basic:5} in \Cref{prop:basic} replacing the expression \eqref{eq::mergingConvex} with 
\begin{equation}
G = \left(\bigvee_{i = 1}^{pt_2} G_1 \right) \vee \left(\bigvee_{i = 1}^{(q-p)t_1} G_2 \right)
\end{equation}
to show this.

Claims~\ref{prop:basic:1} and Claim~\ref{prop:basic:2} of \Cref{prop:basic} also hold for $\matr_\infty^k$.

\begin{theorem} \label{thm::finiteAsymptoticResult}
Fix alphabet $|\matx|=q$. The region $\matr_\infty^k$ defined in \eqref{eq::rinftyFiniteK} is the closure of the set of points
$(\varepsilon,\rho)$, parameterized by the distribution $P_S$ on $[k]$, where
\begin{align}\label{eq:asymResultLarger_k}
	\varepsilon &= {\EE[S]\over F_k(P_S)}, \quad  \rho = {1\over F_k(P_S)}\,,
\end{align}
and $F_k(\cdot)$ is defined in~\eqref{eq:fkdef}.
\end{theorem}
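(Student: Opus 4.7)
The plan is to mirror the proof of \Cref{thm::asymResultLarger}, but hold $k$ fixed throughout so that $F_k(P_S)$ (rather than its limit $F(P_S)$) governs the answer. The two existing black boxes, \Cref{prop::subsetFiniteKResult} (subset-design performance via $F_{k,n}$ and $F_k$) and \Cref{prop:sym} (symmetrization to a permutation-invariant, hence subset, design), already do essentially all the heavy lifting; the work is to combine them correctly and to take the right limit in $n$ and $t$ without touching $k$.

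For achievability, fix any $P_S$ supported on $[k]$ with rational values (the rational case suffices by continuity of $F_k$ in $P_S$ and closure in the definition of $\matr_\infty^k$). For each $n$ large enough that $nP_S(s)\in\ZZ_+$ for all $s\in[k]$, build the subset design $G'$ on $k$ primary nodes having $nP_S(s)$ degree-$s$ redundant nodes for each $s$, and set $G=\bigvee_{i=1}^n G'$ (actually $G'$ itself, scaled appropriately, since merging is exactly what lets us land on $n$-ary denominators in the $P_{Y|\vect L^{(k)}}$ optimization). By \Cref{prop::subsetFiniteKResult}, $G$ is a $(k,m,t,E)_q$-design with $m = n\sum_s P_S(s)\binom{k}{s}\cdot(\text{const})$, $E=m\EE[S]$, and $t \ge (m/k)F_{k,n}(P_S)$. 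So the pair $(E/(kt),m/(kt))$ can be pushed arbitrarily close to $(\EE[S]/F_{k,n}(P_S),1/F_{k,n}(P_S))$. Letting $n\to\infty$ and invoking $F_{k,n}(P_S)\to F_k(P_S)$ from \Cref{prop::Fnkprops} places $(\EE[S]/F_k(P_S),1/F_k(P_S))$ in $\matr_\infty^k$, which is closed.

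For the converse, let $G$ be any $(k,m,t,E)_q$-design. By the symmetrization \Cref{prop:sym}, there is a permutation-invariant $(k,mk!,tk!,Ek!)_q$-design $G_{\mathrm{PERM}}$, and by \Cref{prop:perminv} this $G_{\mathrm{PERM}}$ is a subset design on $k$ primary nodes. Let $P_S$ be its degree distribution on the redundant side; then $E=m\EE[S]$ (this identity is preserved by the symmetrization since every permuted copy contributes the same degree on each side). Applying the upper bound half of \Cref{prop::subsetFiniteKResult} to $G_{\mathrm{PERM}}$ gives
\begin{equation}
tk! \le \frac{mk!}{k}\,F_k(P_S),
\end{equation}
so $\varepsilon = E/(kt)\ge \EE[S]/F_k(P_S)$ and $\rho = m/(kt)\ge 1/F_k(P_S)$. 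Taking closures over all sequences of designs and all admissible $P_S$ on $[k]$, combined with monotonicity (Claim~\ref{prop:basic:2} of \Cref{prop:basic}, which carries over to $\matr_\infty^k$ as noted just before the theorem), shows $\matr_\infty^k$ is contained in the closure of $\{(\EE[S]/F_k(P_S),1/F_k(P_S)): P_S \text{ on } [k]\}$ together with its upper-right shadow.

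The main obstacle, as in the unbounded-$k$ case, is the converse: one must argue that symmetrizing $G$ does not lose information about the $(\varepsilon,\rho)$ trade-off. The point is that merging $k!$ permuted copies multiplies $m$, $t$, and $E$ all by $k!$ while leaving $k$ unchanged, so both ratios $E/(kt)$ and $m/(kt)$ are preserved, and the symmetrized design can only correct \emph{more} defects than $tk!$, which only strengthens the bound. Unlike in \Cref{thm::asymResultLarger}, no further passage from $F_k$ to $F$ is needed, which is precisely what lets us state the finite-$k$ region exactly in terms of $F_k(\cdot)$.
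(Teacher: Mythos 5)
Your proposal is correct and follows essentially the same route as the paper: the paper's own proof of \Cref{thm::finiteAsymptoticResult} is exactly the two-step combination of \Cref{prop::subsetFiniteKResult} (with $F_{k,n}\to F_k$ from \Cref{prop::Fnkprops} and continuity of $F_k$ in $P_S$) for achievability and \Cref{prop:sym} for the converse, with no passage from $F_k$ to $F$ needed. Your only wobble is the brief confusion over whether the $n$-fold merging lives in $G'$ or in $G=\bigvee_{i=1}^n G'$, but you resolve it correctly — the merging is what supplies the $\tfrac1n$-quantized labelings in $F_{k,n}$ — so there is no gap.
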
 

\begin{proof}
The achievability and converse of this theorem follows from \Cref{prop::subsetFiniteKResult} 
(with $F_k(P_S)$ is continuous in $P_S$) and \Cref{prop:sym} respectively.
\end{proof}

Using \eqref{eq:asymResultLarger_k}, we can plot the achievable region $\matr_\infty^3$ when $q = 2$ (see \Cref{fig::R3}).
The most salient aspect of $\matr_{\infty}^3$ is that the point achieved by the Hamming block (see \Cref{fig::hammingDesign1}) is a corner point of this region.  It is the only corner other than the usual corner point $(1, 1)$ achieved by the repetition design. 
\begin{figure}
\centering
\includegraphics[scale = .45]{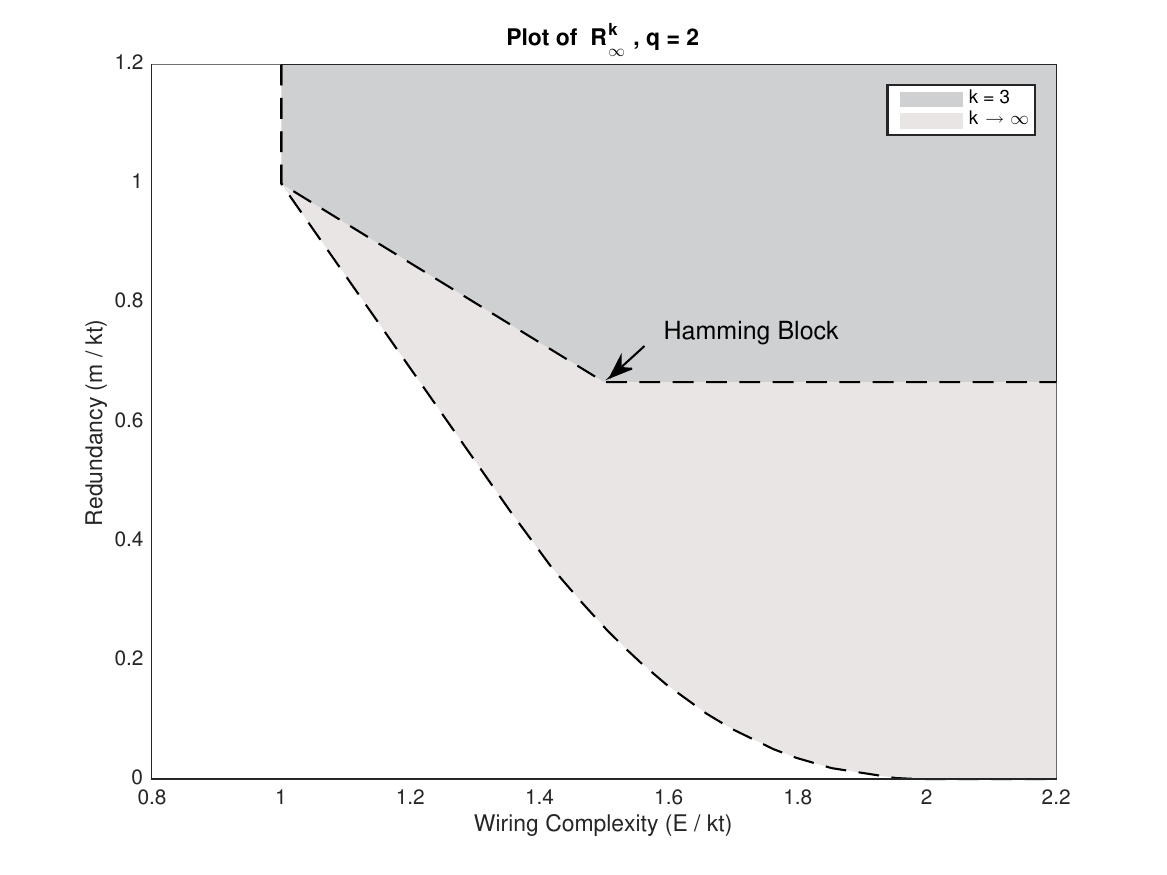}
\caption{\label{fig::R3} Region $\matr^3_\infty$ (see \eqref{eq::rinftyFiniteK}) compared with $\matr_\infty$ for $q = 2$. }
\end{figure}

\begin{corollary}[Hamming block corner point]\label{prop::hammingOptimal}
The value given by the Hamming block is a corner point of $\matr_\infty^3$ for $\matx = \{0, 1\}$.  
\end{corollary}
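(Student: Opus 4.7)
The plan is to show the Hamming block is a corner of $\matr_\infty^3$ by exhibiting two supporting hyperplanes of distinct slopes that both pass through the associated point $(3/2, 2/3)$. By \Cref{thm::finiteAsymptoticResult}, the region $\matr_\infty^3$ is the upward closure of the parametric set
\[
\bigl\{ (\EE[S]/F_3(P_S),\,1/F_3(P_S)) : P_S \in \Delta([3]) \bigr\},
\]
and the Hamming block $S(3,2)\vee S(3,3)$ corresponds to $P_S^{\mathrm H} = (0, 3/4, 1/4)$ with $\EE[S] = 9/4$, so I must also verify $F_3(P_S^{\mathrm H}) = 3/2$. Consequently, establishing an inequality of the form $\alpha \EE[S] + \beta \ge c\, F_3(P_S)$ valid for every $P_S$ on $[3]$ translates directly into a supporting hyperplane $\alpha \varepsilon + \beta \rho \ge c$ for $\matr_\infty^3$.

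The first step is to obtain a closed-form expression for $F_3(P_S)$. Using the $0\leftrightarrow 1$ label symmetry, the outer minimization in~\eqref{eq:fkdef} can be restricted to $P_X = (1/3, 2/3)$ (the degenerate choices $(0,1), (1,0)$ give values $\ge \EE[S]$ and are dominated). For this $P_X$, the neighborhood types $\vect{\ell}$ live in the finite set $\{(1,0),(0,1),(1,1),(0,2),(1,2)\}$, and by \Cref{prop::splittingRatio} the optimal $P_{Y|\vect L}$ has threshold form. Setting the trivial choices for $\vect\ell\in\{(1,0),(0,1),(0,2)\}$ (each a forced label-match) and then balancing the two inner-minimum quantities over the remaining two fractional parameters $\lambda_{(1,1)}, \lambda_{(1,2)}\in[0,1]$ yields the piecewise linear formula
\[
F_3(P_S) = \min\bigl(p_1 + \tfrac{4}{3}p_2 + 2p_3,\; p_1 + \tfrac{3}{2}p_2 + \tfrac{3}{2}p_3\bigr),
\]
with the two pieces agreeing on the plane $p_3 = p_2/3$ which contains both $P_S^{\mathrm H}$ and the pure-repetition distribution $\delta_1$.

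From this expression the two desired supporting hyperplanes follow by elementary algebra on the simplex $p_1 + p_2 + p_3 = 1$. Maximizing each piece shows $F_3(P_S) \le 3/2$ with equality on the segment $\{p_1 = 0,\ p_2 + p_3 = 1,\ p_3 \ge p_2/3\}$, a set containing $P_S^{\mathrm H}$; this yields the supporting inequality $\rho = 1/F_3(P_S) \ge 2/3$. In each piece, substituting $p_1 + p_2 + p_3 = 1$ reduces the inequality $5 F_3(P_S) \le 2\EE[S] + 3$ to exactly the defining condition of that piece ($p_3 \le p_2/3$ or $p_3 \ge p_2/3$), with equality precisely when $p_3 = p_2/3$; this yields $2\varepsilon + 3\rho \ge 5$. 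Both inequalities are attained at $P_S^{\mathrm H}$, and since they have distinct slopes $0$ and $-2/3$, the point $(3/2, 2/3)$ is a corner of $\matr_\infty^3$.

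The main obstacle is the explicit computation of $F_3$ in the first step. Although \Cref{prop::splittingRatio} reduces the inner maximization to a threshold choice over only two free variables and label symmetry collapses the outer minimization to $P_X = (1/3, 2/3)$, balancing the two inner-minimum quantities splits into the two cases $p_3 \le p_2/3$ and $p_3 \ge p_2/3$ depending on whether $\lambda_{(1,1)}$ or $\lambda_{(1,2)}$ saturates its box constraint, and this bookkeeping must be done carefully. Once the piecewise linear formula is in hand, both supporting hyperplanes—and hence the corner conclusion—follow immediately.
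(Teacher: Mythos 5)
Your proof is correct, and it reaches the conclusion by a genuinely different (and cleaner) route than the paper, even though both rest on the same core computation: restricting the outer minimization to $P_X=(2/3,1/3)$ by label symmetry and using the threshold structure of the optimal redundant labeling. The paper computes the entire lower boundary of $\matr_\infty^3$ parametrically: it fixes $c=\EE[S]$, solves $\max_{P_S:\,\EE[S]=c}F_3(P_S)$ by calculus on the two free labeling variables for each $c$, handles $c=3$ separately, and observes that the resulting curve consists of two segments meeting at $(3/2,2/3)$. You instead derive the global closed form $F_3(P_S)=\min\bigl(p_1+\tfrac43 p_2+2p_3,\ p_1+\tfrac32(p_2+p_3)\bigr)$ — which I verified: the balance point of the two inner objectives leaves the box exactly at $p_2=3p_3$, forcing one of $\lambda_{(1,1)},\lambda_{(1,2)}$ to saturate and producing the two pieces — and then exhibit the two linear certificates $F_3(P_S)\le \tfrac32$ and $5F_3(P_S)\le 2\EE[S]+3$, each tight at $P_S^{\mathrm H}=(0,3/4,1/4)$, yielding the supporting lines $\rho\ge 2/3$ and $2\varepsilon+3\rho\ge 5$ of distinct slopes through $(3/2,2/3)$. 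This dual, certificate-based argument is more self-contained and checkable than the paper's parametric sweep, and it makes the corner property immediate (two distinct supporting lines through a point of a convex set force that point to be extreme). One caveat you share with the paper: \Cref{prop::splittingRatio} is stated for the multinomial functional $F$ rather than the hypergeometric $F_3$, but its swapping proof applies verbatim, and your closed form can in any case be confirmed by direct case analysis on the two free variables.
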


The proof for \Cref{prop::hammingOptimal} and the methods used to calculate $\matr_{\infty}^3$ are in \Cref{sec::HammingOptimalProof}. The implication of this result is that for any design on $k = 3$ primary nodes, no design has a better trade-off between redundancy $\frac{m}{kt}$ and wiring complexity $\frac{E}{kt}$ than the Hamming block, even if we allow the design to have arbitrarily many edges and redundant nodes.

%-------------------------DISCUSSION---------------------------%
%-------------------------DISCUSSION---------------------------%
%-------------------------DISCUSSION---------------------------%
%-------------------------DISCUSSION---------------------------%

\section{Discussion}\label{sec::discussion}

We conclude with a discussion of some implications of our results, some extensions and future work.

\subsection{Implications on practical designs}

The result for $\mathcal{R}_1$ and $\mathcal{R}_2$ (for $q = 2$) demonstrates that for correcting small defects, the best solution in the limit of a large number of primary nodes is a linear combination of two basic designs, the repetition design and the complete design. (Though this design is not optimal for finite $k$. Slight improvements can be made \textcolor{black}{by removing a few edges.})

\Cref{thm::asymResultLarger} gives a result for asymptotic $t$, and while practically no application is going to need to correct asymptotically many defects, the region defined by the result gives a converse bound for $\matr_t$ for all finite $t$ by virtue of Claim \ref{prop:basic:4} from \Cref{prop:basic}. All regions $\matr_t$ must lie between $\matr_1$ and $\matr_\infty$, approaching the latter as $ t \to \infty$. Hence \Cref{thm::asymResultLarger}  describes the fundamental limit for the trade-off between redundancy and wiring complexity. 

The numerical results for asymptotic $t$ and $q = 2$ imply that the designs which are close to optimal for large $t$ use redundant nodes with a limited set of degrees. The best achievable points found for $\matr_\infty$ for fixed values of $\EE[S]$ each use redundant nodes with degrees within $2$ or $3$ values of $\EE[S]$.

Results for $\matr_{\infty}^k$ define what is optimal for finite $k$ in terms of the number of defects correctable per use of redundancy and edges. We know exactly what this region looks like for $k = 3$ and can determine that the Hamming block is in fact the optimal design. Evaluating $\matr^k_\infty$ for larger values of $k$ gives exactly what trade-offs are realizable.

Also note that in the asymptotic $t$ results, the optimal trade-off is obtainable by designs which has regular primary node degree \textcolor{black}{(since subset designs are always regular). Not only that, but finding the best labeling of redundant nodes for subset desgins corresponds to finding $P_{Y|\vect{L}}$ in the statement of \Cref{thm::asymResultLarger}, which is easy to compute.}

\subsection{Comparison to other models for defect tolerance} \label{sec::otherScenarios}

This paper studies the defect-tolerance model where steps proceed as follows:
\begin{enumerate}
\item[a.] bipartite graph (interconnect) is designed;
\item[b.] primary nodes get $q$-ary labeling;
\item[c.] redundant nodes are assigned $q$-ary labels (so that each primary node has $t$ neighbors with matching
label).
\end{enumerate}
There are two natural variations where the sequence of steps are interchanged:
\begin{itemize}
\item \textit{adaptive graph}: b.$\to$a.$\to$c.
\item \textit{non-adaptive redundancy}: a.$\to$c.$\to$b.
\end{itemize}
In the first case, the design of the edges of the graph is a function of the $q$-ary labels, while in the second case the redundant nodes are not
allowed to depend on the labeling of primary nodes.

It is clear that the setting considered in this paper (a.$\to$b.$\to$c.) is an intermediate case. That is, any
$t$-defect correcting design in the sense of Definition~\ref{def:main} is also $t$-defect correcting in the sense of the
\textit{adaptive graph}. Similarly every design with \textit{non-adaptive redundancy} should work in the sense of
Definition~\ref{def:main}.

The fundamental redundancy-wiring complexity trade-off is defined similarly to~\eqref{eq:rtdef}. However, for both cases it is rather easy to determine this trade-off for any $t\ge 1$:
\begin{itemize}
\item \textit{adaptive graph}: Clearly the number of edges $E\ge kt$. This can be attained with (asymptotically)
zero-redundancy by adding $t$ redundant nodes of each label (for a total of $m=qt$) and connecting every primary
node only to relevant $t$ redundant nodes. Consequently, here
\begin{equation}
 \mathcal{R}_t = \{(\varepsilon, \rho): \varepsilon \ge 1, \rho \ge 0\}\,. 
\end{equation}
\item \textit{non-adaptive redundancy}: Again, clearly the number of edges is $E\ge qkt$. This can be attained with (asymptotically)
zero-redundancy by adding $t$ redundant nodes of each label (for a total of $m=qt$) and connecting every primary
nodes to all of $qt$ redundant ones (i.e., using $K(k, qt)$ design). Consequently
\begin{equation}
 \mathcal{R}_t = \{(\varepsilon, \rho): \varepsilon \ge q, \rho \ge 0\}\,. 
\end{equation}
\end{itemize}

These observations are summarized in \Cref{fig::scenarioCompare} for $q = 2$.

\begin{figure}
\centering
\includegraphics[scale = .45]{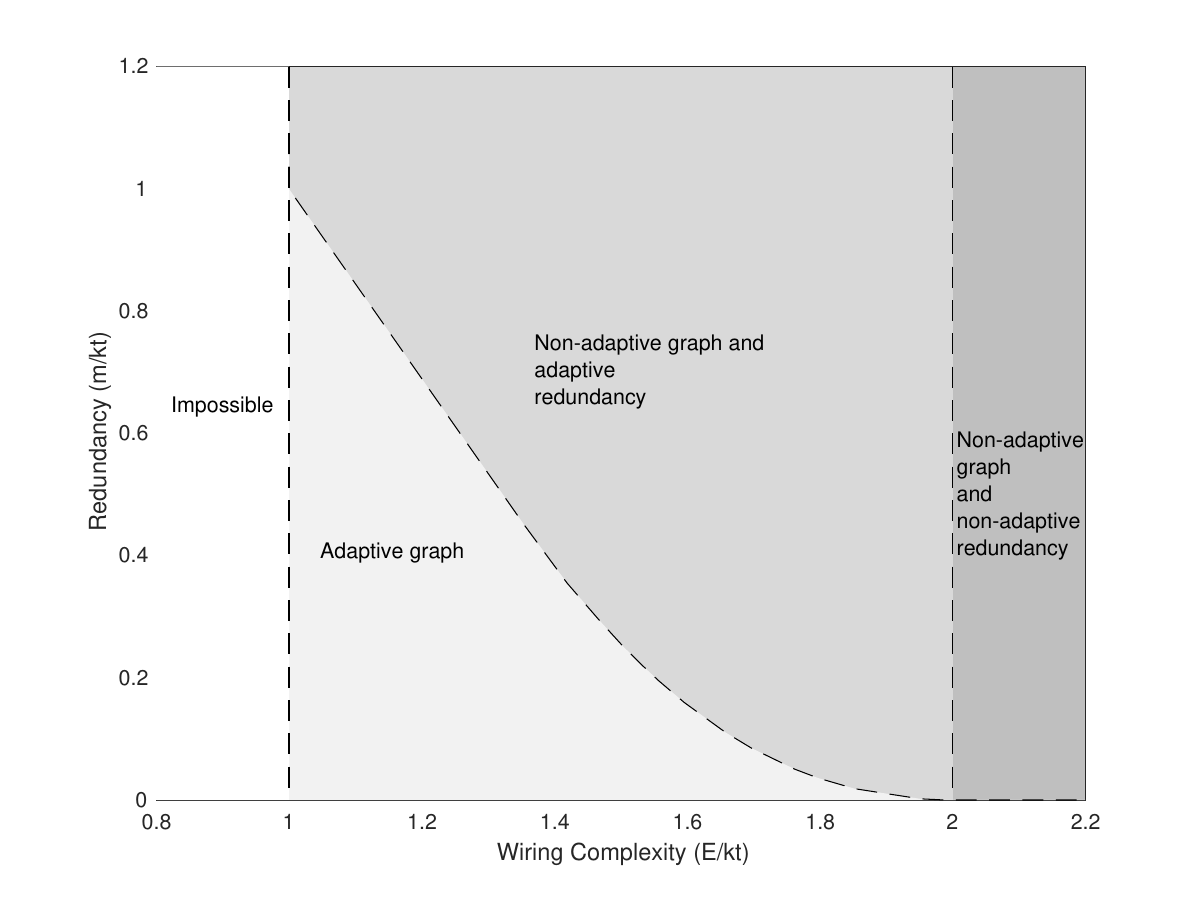}
\caption{Comparison of redundancy-wiring complexity trade-offs for different levels of adaptivity for $q = 2$.} \label{fig::scenarioCompare}
\end{figure}
\subsection{Relation to $(t,t)$-colorable hypergraphs}\label{sec:hyper}

There is a purely graph-theoretic way to look at our problem. For this we bring up the concept of a $(t,t)$-graph coloring
introduced in~\cite{alon2009power}.  A hypergraph is called $(t,t)$-colorable if for every $\{0,1\}$-coloring of
hyperedges there exists a $\{0,1\}$-coloring of vertices so that each edge contains $t$ vertices of its color.  Define
\begin{multline}
 d_t(k,m) = \min (\text{average edge-size: } \\ \text{all } (t, t) \text{ colorable hypergraphs on } \\ m \text{ vertices and } k
\text{ hyperedges})\,.
\end{multline}

It is not hard to see that our problem with binary $\matx$ and $(t,t)$-coloring are one-to-one related: the vertices correspond to primary nodes and the hyperedges correspond to redundant nodes. More precisely we have

\begin{proposition} \label{prop::hyperLimit}Fix binary $\matx$.
The boundary of $\mathcal{R}_t$ is given by 
\begin{equation}
\liminf_{k\to\infty} \frac{1}{t}d_t(k, \lceil \rho kt\rceil) \,. 
\end{equation}
The boundary of $\mathcal{R}_\infty$ is given by 
\begin{equation}
 \liminf_{t \rightarrow \infty} \liminf_{k \rightarrow \infty} \frac{1}{t} d_t(k, \lceil \rho k t\rceil)\,. 
\end{equation}
\end{proposition}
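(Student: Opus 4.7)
The plan is to establish a direct bijection between $(k,m,t,E)_2$-designs and $(t,t)$-colorable hypergraphs, then translate each side of the claimed boundary characterization through this bijection. Given a $(k,m,t,E)_2$-design with primary nodes $u_1,\ldots,u_k$ and redundant nodes $v_1,\ldots,v_m$, I would associate the hypergraph $H$ on vertex set $\{v_1,\ldots,v_m\}$ whose $i$-th hyperedge consists of the neighbors of $u_i$. Under the identification $\matx=\{0,1\}$, a binary labeling of primary (resp.\ redundant) nodes becomes a coloring of hyperedges (resp.\ vertices), and the defect-correcting condition of \Cref{def:main} becomes precisely the $(t,t)$-colorability condition; since $\sum_i |e_i|=E$, the average edge-size of $H$ equals $E/k$. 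This bijection yields
\[
  d_t(k,m) \;=\; \min\{E/k : \exists\,(k,m,t,E)_2\text{-design}\}.
\]

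For the $\matr_t$ statement, achievability comes from choosing, for each $k$, an extremal $(t,t)$-colorable hypergraph on $k$ hyperedges and $m_k=\lceil \rho k t\rceil$ vertices; the corresponding design realizes the point $\bigl(d_t(k,m_k)/t,\, m_k/(kt)\bigr)$ in $\matr_t$, and passing to a $\liminf$-achieving subsequence places $\bigl(\liminf_k d_t(k,\lceil \rho kt\rceil)/t,\, \rho\bigr)$ into the closed set $\matr_t$. For the converse, take $(\varepsilon,\rho)\in\matr_t$ and a realizing sequence of $(k_i,m_i,t,E_i)$-designs supplied by Claim~\ref{prop:basic:1} of \Cref{prop:basic}. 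For any fixed $\rho'>\rho$, one has $m_i\le \lceil \rho' k_i t\rceil$ eventually, so appending $\lceil \rho' k_i t\rceil-m_i$ isolated redundant nodes yields a $(k_i,\lceil \rho' k_i t\rceil,t,E_i)$-design, whence $d_t(k_i,\lceil \rho' k_i t\rceil)/t \le \varepsilon_i\to \varepsilon$. Letting $\rho'\downarrow \rho$ and using continuity of the boundary of the closed convex set $\matr_t$ (Claim~\ref{prop:basic:3} of \Cref{prop:basic}) matches the two sides at $\rho$. The $\matr_\infty$ statement then follows by combining the same bijection with Claim~\ref{prop:basic:4}: $\matr_\infty$ is the closure of $\bigcup_t \matr_t$, so iterating the $\matr_t$-level calculation while also letting $t$ vary introduces the extra outer $\liminf_{t\to\infty}$.

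The main obstacle is the converse direction. Since $d_t(k,m)$ is non-increasing in $m$ (precisely by the isolated-vertices operation itself), the inequality $d_t(k_i,m_i)/t\le \varepsilon_i$ cannot be converted directly into a bound on $d_t(k_i,\lceil \rho k_i t\rceil)/t$ when $m_i$ happens to slightly overshoot $\lceil \rho k_i t\rceil$. The detour through $\rho'>\rho$ combined with closedness and convexity of $\matr_t$ circumvents this; aside from that subtlety the proof is a routine translation between two isomorphic combinatorial problems.
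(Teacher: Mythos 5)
Your reduction is the same one the paper uses, and your write-up is in fact more careful than the paper's two-sentence argument. One small point in your favor: your orientation of the dictionary (vertices $\leftrightarrow$ redundant nodes, hyperedges $\leftrightarrow$ neighborhoods of primary nodes) is the one consistent with the definition of $d_t(k,m)$ on $m$ vertices and $k$ hyperedges and with the order of quantifiers in $(t,t)$-colorability, even though the paper's surrounding prose states the correspondence the other way around. The achievability half and your identity $d_t(k,m)=\min\{E/k:\exists(k,m,t,E)_2\text{-design}\}$ are fine.

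The one step that does not yet close is the last line of your converse. Write $g(\rho)=\liminf_k d_t(k,\lceil\rho kt\rceil)/t$ and $\varepsilon^*(\rho)$ for the boundary of $\matr_t$. Your detour gives $g(\rho')\le\varepsilon^*(\rho)$ for every $\rho'>\rho$, and continuity of the boundary of the convex set $\matr_t$ gives continuity of $\varepsilon^*$, hence (with achievability) $\lim_{\rho'\downarrow\rho}g(\rho')=\varepsilon^*(\rho)$. But the quantity to be identified is $g(\rho)$ itself, and since $g$ is only known at this point to be non-increasing, it could a priori satisfy $g(\rho)>\lim_{\rho'\downarrow\rho}g(\rho')$; "continuity of the boundary of $\matr_t$" is continuity of $\varepsilon^*$, not of $g$, and $g$ is exactly what you have not yet shown to be the boundary. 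Two ways to repair this: (i) show $g$ is itself convex, via the copying inequality $(k_1+k_2)\,d_t(k_1+k_2,m_1+m_2)\le k_1\,d_t(k_1,m_1)+k_2\,d_t(k_2,m_2)$ together with superadditivity of the ceiling, so that $g$ is continuous on the interior of its domain and the right-limit identity forces $g=\varepsilon^*$ there; or (ii) avoid the detour entirely by diluting each realizing design $G_i$ with $c_i=o(k_i)$ copies of the complete design $K(k',2t)$ for a fixed $k'>2/\rho$, which reduces the redundant-node count to at most $\lceil\rho k t\rceil$ (so padding applies) while changing $E/(kt)$ only by $o(1)$, yielding $g(\rho)\le\varepsilon^*(\rho)$ directly. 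The same rounding issue is present, unacknowledged, in the paper's own one-line "by copying" argument, so identifying it explicitly is to your credit; but as written your proof stops one lemma short. The $\matr_\infty$ part also deserves a sentence on why merging makes $\liminf_k d_t(k,\lceil\rho kt\rceil)/t$ essentially monotone along $t\to 2t\to 4t\to\cdots$, which is what lets the outer $\liminf_{t\to\infty}$ pick out the boundary of $\bigcup_t\matr_t$.
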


\begin{proof} 

Note that for a fixed $t$, if for some pair $(\varepsilon, \rho)$ we have $\frac{1}{t}d_t(k, \lceil \rho kt\rceil) = \varepsilon$ for some $k$, then by copying (\Cref{prop:copying}), there exists infinitely many values of $k' > k$, where $\frac{1}{t}d_t(k', \lceil \rho k't\rceil) \leq \varepsilon$. It follows from \Cref{prop:basic} that $\liminf_{k\to\infty} \frac{1}{t}d_t(k, \lceil \rho kt\rceil)$ must correspond to the boundary of $\matr_t$. 

Similarly, by merging (\Cref{prop:merge}), any pair $(\varepsilon, \rho)$ where there is some $t$ such that $$ \liminf_{k \rightarrow \infty} \frac{1}{t} d_t(k, \lceil  \rho k t\rceil) = \varepsilon $$ for some $t$, must also have infinitely many values of $t' > t$ where $\liminf_{k \rightarrow \infty} \frac{1}{t'} d_{t'}(k, \lceil  \rho k t'\rceil) \leq \varepsilon $.

\end{proof}

\textcolor{black}{Hypergraphs were used \cite{alon2009power} to show a specific achievability scheme for storing data with bitprobes. This achievability scheme thresholds of the number of neighbors with value $0$ to determine values of data points. Using the connection our defect correcting designs have with hypergraphs, we can use \Cref{thm::asymResult} to show a converse bound on the size of the encoded vector for bitprobes that use thresholding. However, the constants we get from applying our work to bitprobes does not do better than those cited in \cite{alon2009power}. For instance, for $3$ bitprobes and vectors where at most $1/3$ of the entries are $1$, our result gives that the ratio of the length of the encoded vector to the length of the original vector must be greater than $.21$ whereas the method cited in \cite{alon2009power} gives that the ratio must be greater than $.48$.}

\subsection{Stochastic defects}

This work considered correcting arbitrary (worst-case) defect patterns. Suppose that instead we are interested in correcting fraction $\alpha$ of defects (i.e., $t = \alpha (k + m)$) on $k$ primary and $m$ redundant nodes. In this scenario, the number of redundant nodes $m$ would need to grow as a function of $k$ in order to keep up with the number of defects needed to be corrected. If $\alpha$ is too large, it is not possible to find designs which corrects $\alpha(k+ m)$ defects for arbitrarily large $k$. 

To see this, note that correcting worst case $t$ defects with alphabet size $q$ requires at least $qt$ redundant nodes. 
\begin{align}
m & \geq qt \\
m & \geq q \alpha (k+m)\\
m (1-q\alpha) & \geq q\alpha k\,. \label{eq::stoc_q_bound}
\end{align}  

The quantity on the right-hand side of \eqref{eq::stoc_q_bound} needs to be positive, so it must be that $\alpha < \frac{1}{q}$.  

Additionally, the only designs which can correct fraction $\alpha < \frac{1}{q}$ of defects for growing $k$ are designs with the same redundancy and wiring complexity as complete designs. From our results in \Cref{thm::asymResultLarger}, we know that there exists $(k,m,t,E)_q$-designs so that 

\begin{equation}
\frac{m}{kt} \rightarrow c
\end{equation}
for some constant $c$. When $t = \alpha(k+m)$, 

\begin{align}
\frac{m}{k\alpha(k + m)} &> c  \\
m(1 - ck \alpha) &> c k^2 \alpha \,.\label{eq::stoc_c_zero}
\end{align} 
 
In order for \eqref{eq::stoc_c_zero} to hold, the right-hand side must be positive, so it must be that $c \rightarrow 0$ as $k$ becomes arbitrarily large. The point in $\matr_\infty$ where $\frac{m}{kt} \rightarrow 0$ corresponds to the complete design. 
 
In light of these results, it is natural to ask what happens if instead we relaxed the requirement to correcting i.i.d. Bernoulli($\alpha$) defects in the sense of high probability (computed over distribution of defects and primary assignments). It turns out that in such probabilistic model, correcting fraction-$\alpha$ of defects is possible with designs possessing $O(k \log k)$ edges and $O(k)$ redundant nodes. See~\cite{dathesis} Theorems 4.10 and 4.15 in Section 4.4 for more (pp. 63-66).

\subsection{Future work}\label{sec::future}

One direction for future work involves extensions beyond the bipartite graph. We chose to study the one-level bipartite graph model for simplicity, but experiments like Teramac~\cite{teramac98} have demonstrated the effectiveness of multi-level hierarchical designs. This leads to the question of what are the optimal trade-offs when hierarchical models of redundancy are used. The hierarchical model would include intermediate nodes which can facilitate connections of edges. The presence of the intermediate nodes can greatly reduce the number of edges. To correct $t$ defects, we can connect each primary node to $t$ intermediate nodes. Regardless of the number of primary nodes, the intermediate nodes can connect to finitely many redundant nodes. This way, we are able to achieve a wiring complexity of $t$ and redundancy of $0$ (asymptotically). In such a case, we would be interested in finding the fundamental trade-offs with the number of intermediate nodes as a parameter.

\subsection{Open Problems}

Regions which are still to be determined include:
\begin{itemize}
\item $\matr_t$ for $ t > 2$ and $q = 2$ 
\item $\matr_t$ for $ t > 1 $ and $q \geq 3$
\end{itemize}

For $q = 2$, it is also unknown what the smallest value of $t$ is for which $\matr_t$ does not equal the region defined in \Cref{eq::linRegion}.

\section*{Acknowledgement}

YP would like to thank Prof. Jaikumar Radhakrishnan for interesting discussions at the Simons Institute for the Theory
of Computing (UC Berkeley), in particular for bringing~\cite{alon2009power} to our attention.

%----------------------APPENDIX-------------------------%
%----------------------APPENDIX-------------------------%
%----------------------APPENDIX-------------------------%
%----------------------APPENDIX-------------------------%

\appendix

%----------------------PROOF FOR TERNARY-------------------------%
%----------------------PROOF FOR TERNARY-------------------------%
%----------------------PROOF FOR TERNARY-------------------------%
%----------------------PROOF FOR TERNARY-------------------------%
\section{}

\subsection{Proof of \Cref{thm:3alpha}}
\label{apx:3alpha}

\begin{proof}

Define $\matr_1$ as in \eqref{eq::3alphaRegion}. We will show that all $(k,m,1,E)_3$-designs must lie in $\matr_1$. Let the primary nodes have labels in $\mathcal{X} = \{0,1,2\}$. 

Instead of saying that a given bipartite graph is $1$-defect correcting for alphabet of size $q = 3$, we will say (for brevity) that a graph satisfies property (*).

%\begin{definition}
\emph{(*) is the property that for any labeling of the primary nodes in $\mathcal{X}^k$, where $k$ is the number of primary nodes, there exists a labeling of the redundant nodes so that each primary node has at least one redundant node neighbor with the same labeling.} 
%\end{definition}

The steps for this proof are:
\begin{enumerate}
\item Show that designs with primary nodes of degree $3$ and greater can be disregarded. \label{t3step::deg3}
\item Show that in order to satisfy (*), designs with any primary nodes of degree $1$ must be in $\mathcal{R}_1$ . \label{t3step::deg1}
\item Show that in order to satisfy (*), designs with primary nodes all of degree $2$ must be in $\mathcal{R}_1$ . \label{t3step::deg2}
\begin{enumerate}
	\item Show designs containing two disjoint cycles connected by a path (see \Cref{fig::case1}) violate (*)\label{t3step::deg2::case1}
	\item Show designs containing two cycles which intersect at one point (see \Cref{fig::case2}) violate (*)  \label{t3step::deg2::case2}
	\item Show designs containing two cycles which intersect at multiple points (see \Cref{fig::case3}) violate (*) \label{t3step::deg2::case3}
\end{enumerate}
\end{enumerate}

\textbf{Step (\ref{t3step::deg3})} 
The key to this step is to make a graph with (almost) equivalent parameters where nodes of degree $3$ or more are in a separate component. 
For any $(k, m, 1, E)_3$-design $G$, define a new design $G'$ with the same number of primary nodes $k$ and the number of redundant nodes as $m'=m+qt = m + 3$. The added redundant nodes are connected to each of the primary nodes that have degree (in $G$) larger or equal to $3$. The remaining primary nodes are connected in $G'$ exactly as in $G$. It is clear that $G'$ still satisfies (*), has the same (or smaller) number of edges and (asymptotically in $m$) the same redundancy $\rho$. This shows, that without loss of generality we can assume that there are no primary nodes of degree greater than $3$  and all nodes of degree $3$ form a complete bipartite graph disjoint from the rest of the design. We can ignore this separate component.

%we can find an equivalent design where all primary nodes of degree $3$ or more are in a separate component with an addition of finitely many more redundant nodes. Thus, it is sufficient to prove (*) for connected graphs with primary nodes of degree $1$ and $2$ only. 

\textbf{Step (\ref{t3step::deg1})}
The main argument of this step is to show that if the design has any primary node of degree $1$, the design must be a tree. 

%\begin{definition}
{We will say a primary node is \emph{adjacent} to another primary node if the two primary nodes share a redundant node as a common neighbor.}
%\end{definition}
 
Suppose the design has a primary node $u_0$ of degree $1$ and no primary nodes of degree $3$ or more. For all primary and redundant nodes, we will consider the node's shortest distance to $u_0$. If a node is distance $i$ from $u_0$, we say that the node is at level $i$. Since the design is bipartite, even levels have primary nodes, and odd levels have redundant nodes. Let tier $n$ mean the levels $2n$ and $2n+1$.

Consider the labeling of the primary nodes where all primary nodes in even tiers are labeled $0$'s and all primary nodes in odd tiers are labeled $1$'s. Since $u_0$ only has one neighbor, in order to satisfy (*), we must label the one redundant node in tier $0$ the value $0$. The primary nodes in tier $1$ each have the redundant node in tier $0$ and some redundant node in tier $1$ as neighbors. The redundant node in tier $0$ is already labeled a $0$, so all redundant nodes in tier $1$ must be labeled $1$'s in order to satisfy (*). Since primary nodes in tier $2$ are labeled with $0$'s, then all redundant nodes in tier $2$ must also be labeled $0$'s. Continuing this argument by induction, all redundant nodes in a tier must be labeled the same value as the primary nodes in that tier. If the design is a tree, then this labeling scheme satisfies (*).

Now suppose the design has cycles. Find the lowest tiered primary node which completes a cycle, that is the lowest tiered primary node $u_c$ in tier $c$ which has one redundant node in tier $c - 1$ and the other redundant node it has is shared by a primary node in tier $b$, where $b \leq c$. (It could be that two nodes in tier $c$ share the same redundant node in tier $c$. Pick either as $u_c$). Now switch the label of $u_c$ to $2$ and keep all the other labels the same. Both redundant neighbors of $u_c$ must be labeled a $0$ or $1$, so we do not satisfy (*).  

Condition (*) is not satisfied unless the design has no cycles and is a tree. If the design is a tree, it must have at least the same number of redundant nodes as primary nodes, so the design lies in $\mathcal{R}_1$. We can now assume that all primary nodes are of degree $2$. 

%\begin{remark}
\emph{Notice having $q \geq 3$ is important to avoid existence of even cycles.}
%\end{remark}

\textbf{Step (\ref{t3step::deg2})}
Our goal is to prove that if all primary nodes have degree $2$, then $\frac{m}{k} \geq \frac{1}{2}$. We will instead prove something stronger: For $k > 4$ we must have $m \geq k$. For $k = 4$ we must have $m \geq 3$.

Because of copying (see \eqref{eq::copy1}-\eqref{eq::copy2}), it is sufficient to prove the above for designs on a single component. If there is a redundant node with degree $1$, we can remove this redundant node with its neighboring primary node from the design and make it to a separate component. We can assume all redundant nodes must also have degree $2$ or more. 

%\begin{definition}
{We will call a labeling of primary nodes \emph{alternating} if adjacent primary nodes have different labels.}
%\end{definition} 

\begin{lemma}\label{lem::alternating}
{If a design with $k$ primary nodes, all of degree $2$, and $k-1$ or fewer redundant nodes, all of degree 2 or 3, can be labeled alternatingly, then the design cannot satisfy (*).}
\end{lemma}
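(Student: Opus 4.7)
The plan is to pass to the ``redundant multigraph'' $H$ obtained by taking one vertex per redundant node and, since every primary has degree exactly $2$, one multigraph edge per primary joining its two redundant neighbors. Then $H$ has at most $k-1$ vertices, exactly $k$ edges, and vertex degrees in $\{2,3\}$. Two primaries are adjacent precisely when the corresponding edges of $H$ share an endpoint; an alternating primary labeling is therefore a proper edge-coloring of $H$ with the alphabet $\matx = \{0,1,2\}$; and property~$(*)$ asks for a vertex-coloring of $H$ in $\matx$ such that every edge has at least one endpoint matching its own color.

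The core step will be a short pigeonhole. Fix an alternating labeling of primaries, which exists by hypothesis, and suppose for contradiction that a compatible redundant labeling exists. For each primary $u$ with label $x_u$, at least one of its two redundant neighbors carries label $x_u$; pick one such neighbor and call it the \emph{witness} $w(u)$. I will then argue that the map $u \mapsto w(u)$ is injective: if $w(u) = w(v) = r$ for distinct primaries $u \ne v$, then $u$ and $v$ share the redundant $r$ and are thus adjacent, so the alternating hypothesis forces $x_u \ne x_v$; yet both equal the label of $r$, a contradiction. Injectivity of $w$ then forces $k \le m \le k-1$, which is impossible, so no compatible redundant labeling can accompany the fixed alternating primary labeling and $(*)$ must fail.

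The one sanity check I will need is that alternating labelings are genuinely compatible with degree-$3$ redundants: the three primaries meeting at such a node are pairwise adjacent, hence must receive three pairwise distinct labels, which is only possible because $|\matx|=3$. Modulo this check, and the observation that simple bipartite designs produce no self-loops in $H$ (so each primary really corresponds to an edge between two distinct vertices), the pigeonhole above is essentially the whole argument, and I do not anticipate any significant obstacle.
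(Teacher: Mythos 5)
Your argument is correct and is essentially the paper's own proof: each redundant node can match at most one of its pairwise-adjacent (hence differently-labeled) primary neighbors, so at most $k-1$ of the $k$ primaries can be witnessed, and your injective witness map is just this counting phrased differently. The multigraph framing and the degree-$3$ sanity check are harmless extras that the two-line pigeonhole does not actually need.
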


\begin{proof}
If there is an alternating labeling, at most each redundant node can only match the labeling of one of its neighboring primary nodes. There can only be at most $k-1$ matches, so there exists one primary node which does not have a neighboring redundant node with the same label as itself. 
\end{proof}

Suppose a design with all primary nodes of degree $2$ is such that $m < k$. Then, some redundant node in the design must have degree $3$ or more. Pick the separate component with this redundant node, and let $A$ be a cycle in this separate component (if this component does not have cycles, then $m \geq k$ as in Step \ref{t3step::deg1}). In order for cycle $A$ to be in this component, a redundant node with degree more than $2$ must also be in $A$. Call this redundant node $v_0$. Call the neighbor of $v_0$ which is not in $A$ $u_0$. 

Build a path $B$ in the design starting at primary node $u_0$ as follows: The second node in path $B$ will be the neighbor of $u_0$ which is not $v_0$. We can pick the next node in the path arbitrarily. The path ends when we reach a node in $A$ or a node already in $B$. To show that (*) does not hold on the design, it is sufficient to show that (*) is not satisfied on the subgraph $A \cup B$. 

Depending on the endpoint of path $B$, we have several cases:

%\begin{enumerate}
%\item
%\vskip5mm

\textbf{Case (\ref{t3step::deg2::case1}): Endpoint of $B$ coincides with an intermediate point of
   $B$ (see \Cref{fig::case1})}
   
\textcolor{black}{Let $v_1$ be the redundant node in path $B$ where the path $B$ ends. Rename the cycle created by path $B$ to cycle $C$. The subgraph $A \cup B \cup C$ satisfies the conditions of \Cref{lem::alternating} so we need only show that we can find an alternating label. The nodes $v_0$ and $v_1$ are the only redundant nodes with degree $3$. It is clear that by starting with a alternating label of the neighbors of $v_0$, we can find an alternating label for the path between $v_0$ and $v_1$, and then an alternating label of the neighbors of $v_2$. After this, it is easy to find an alternating label for the rest of $A$ and $C$.}

%\item
\textbf{Case (\ref{t3step::deg2::case2}): Endpoint of $B$ is node $v_0$ (see \Cref{fig::case2})}

Let cycle $C$ be the cycle formed using path $B$ and $v_0$. As long as one of cycle $A$ and cycle $C$ have more than $2$ primary nodes, the design violates (*). 

Consider when the labeling is so that the two primary nodes in the larger cycle (assume this to be cycle $A$) neighboring $v_0$ are labeled the same value, say $0$. The rest of the primary nodes of $A$ are labeled alternatingly, which is possible because cycle $A$ has at least three primary nodes. Let the two primary nodes neighboring $v_0$ in cycle $C$ be labeled $1$ and $2$. Then $v_0$ must be labeled $0$ in order for each node in cycle $A$ to have a neighbor with the same label. Then if cycle $C$ is labeled alternatingly, we will violate (*).

If both cycles have only $2$ primary nodes, it is possible for this design to have $k = 4$ and $m = 3$ and satisfy (*). See \Cref{fig::L3t1k4}.

%\item
\textbf{Case (\ref{t3step::deg2::case3}): Endpoint of $B$ is some node of $A$ different from $v_0$ (see \Cref{fig::case3})}

Two redundant nodes in the design have degree at least $3$. Call them $v_0$ and $v_1$. Cycle $A$ and path $B$ make up three distinct paths which go from $v_0$ to $v_1$, which we will refer to as $E, F$ and $G$. As long as no two paths have only $1$ primary node, then we can find an alternating label and use \Cref{lem::alternating}.

Label $u_{i, X}$ to be the primary node neighboring $v_i$ and in path $X$. If all paths $E,F,$ and $G$ have two or more primary nodes, assign labels $0, 1, 2$ to $u_{0, E}, u_{0, F}, u_{0, G}$ and $1, 2, 0$ to $u_{1, E}, u_{1, F}, u_{1, G}$. Each path can be labeled alternatingly.     

If there is one path with only one primary node, say path $E$, assign labels $0, 1, 2$ to $u_{0, E}, u_{0, F}, u_{0, G}$ and $2, 1$ to $u_{1, F}, u_{1, G}$. Each path can be labeled alternatingly. 

If two paths both have one primary node, say $E$ and $F$, as long as the third path $G$ has at least $3$ primary nodes, the design can be labeled alternatingly. We can label the two primary nodes cycle created by paths $E$ and $F$ the values $0$ and $1$. Then since $G$ has at least $3$ primary nodes, we can label $u_{0, G}$ and $u_{1, G}$ the value $2$ and label the rest of $G$ alternatingly. 

If $G$ only has $2$ primary nodes, then this is a design on $k = 4$ and $m = 3$ which satisfies (*). See \Cref{fig::L3t1k4_2}.

%\end{enumerate}

\begin{figure}
\centering
\subfigure[Case (\ref{t3step::deg2::case1})]{
	\includegraphics[scale = .1]{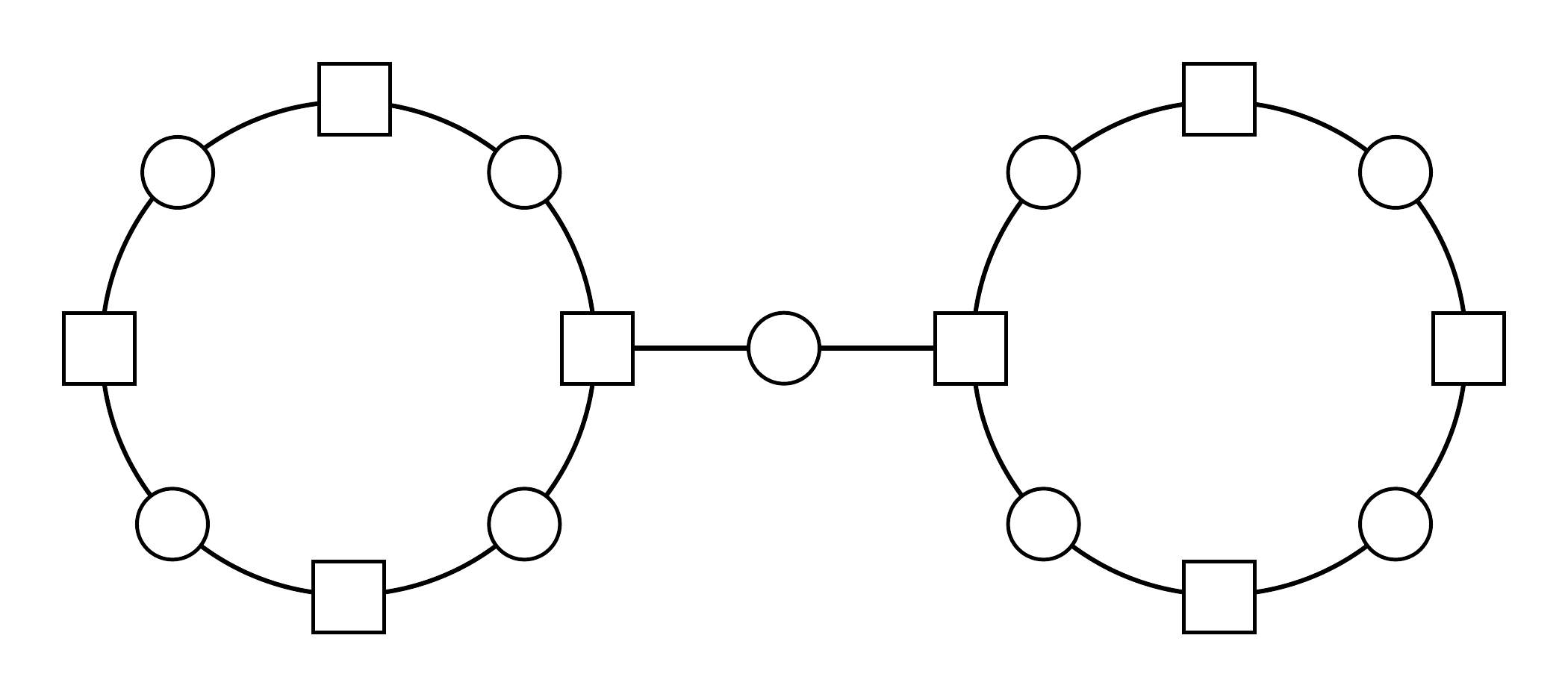}
	\label{fig::case1}
} \\[1em]
\subfigure[Case (\ref{t3step::deg2::case2})]{
	\includegraphics[scale = .1]{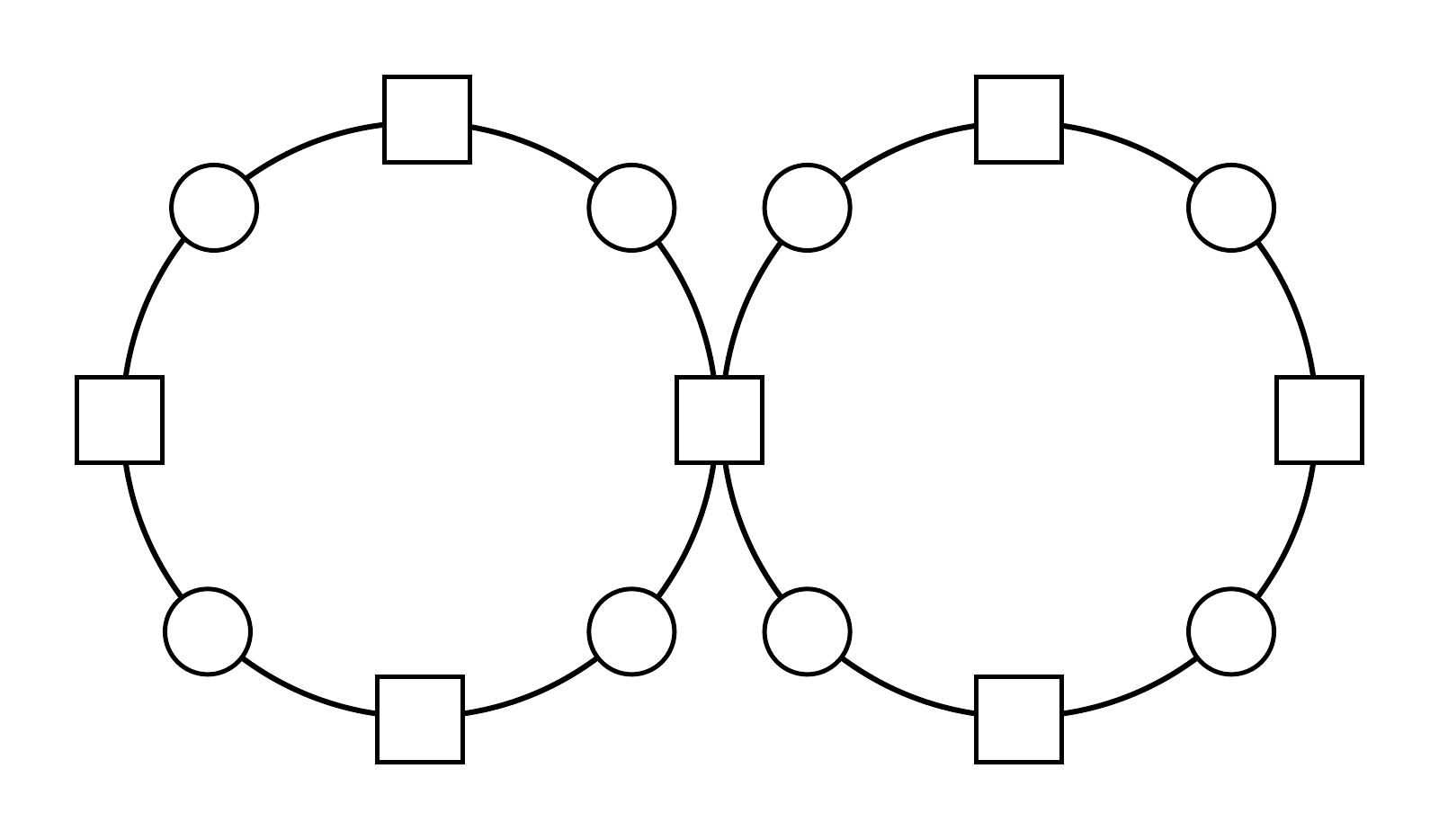}
	\label{fig::case2}
} \\[1em]
\subfigure[Case (\ref{t3step::deg2::case3})]{
	\includegraphics[scale = .1]{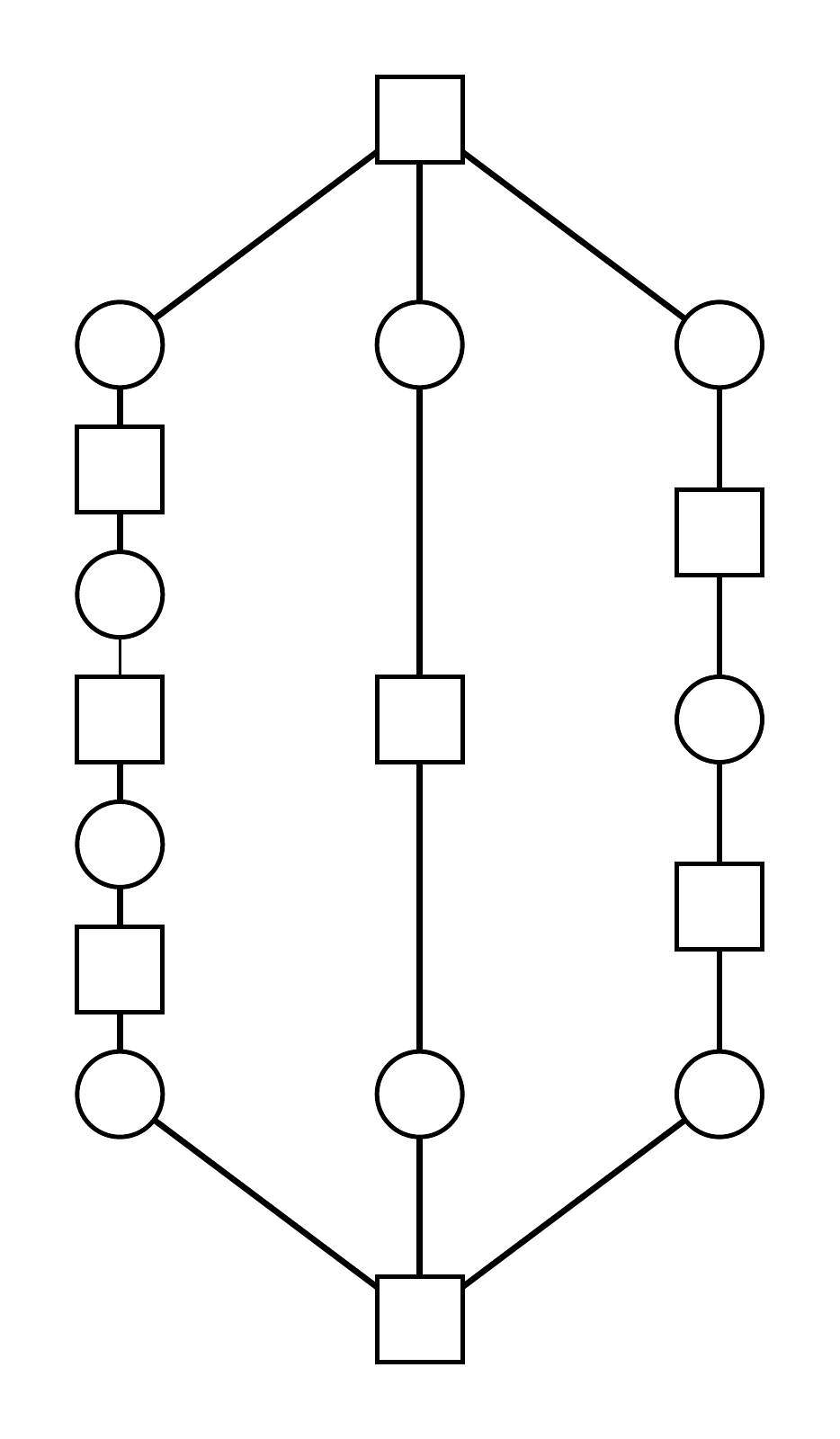}
	\label{fig::case3}
}
\caption{Example designs for the different cases considered in the proof of \Cref{thm:3alpha}.}
\label{fig::L3cases}
\end{figure}

\begin{figure}
\centering
\includegraphics[scale = .2, angle = 90]{plots/L3t1k4.pdf}
\caption{Design which satisfies (*). Exception to case (\ref{t3step::deg2::case2}). \label{fig::L3t1k4}}
\end{figure}

\begin{figure}
\centering
\includegraphics[scale = .2]{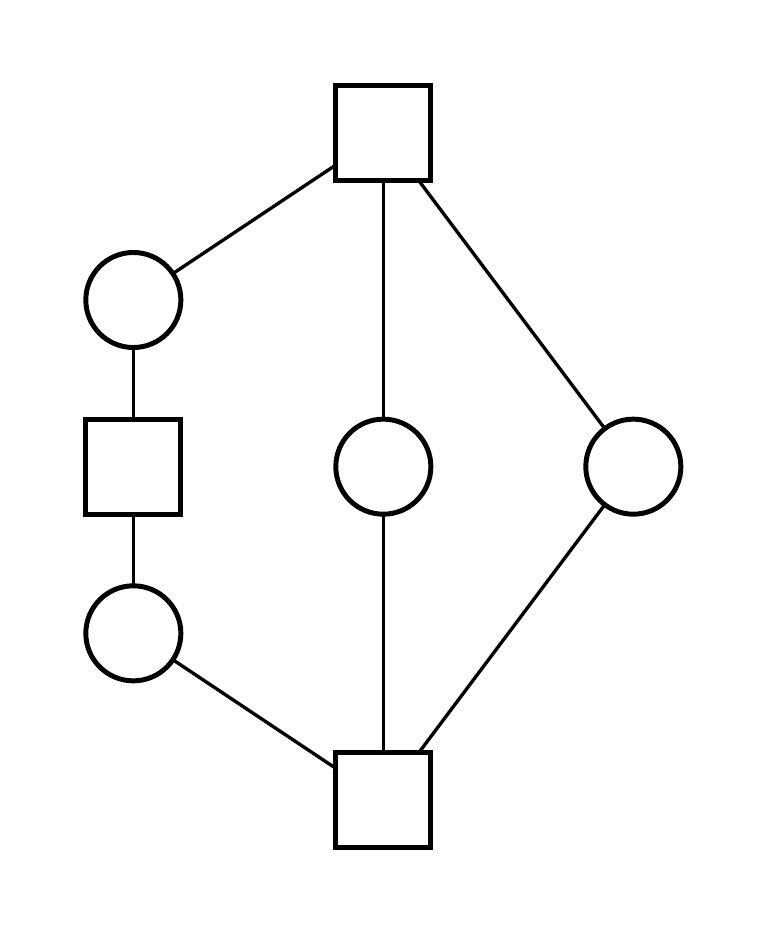}
\caption{Design which satisfies (*). Exception to case (\ref{t3step::deg2::case3}). \label{fig::L3t1k4_2}}
\end{figure}

Note that the two exceptions with $k= 4$ are precisely the minimal non-trivial $1$ defect correcting designs. One of these designs was discussed in~\Cref{sec:smallOptimal}.

\end{proof}

%----------------------Asymptotic Lemmas-----------------------%
%----------------------Asymptotic Lemmas-----------------------%
%----------------------Asymptotic Lemmas-----------------------%
%----------------------Asymptotic Lemmas-----------------------%

\subsection{Proof of \Cref{prop::Fnkprops}}\label{sec::fnkpropsProof}

Before we present the proof of \Cref{prop::Fnkprops}, we will first show the following lemma:

% show TV difference between multinomial and hypergeometric
\begin{lemma}\label{lem::TVmult}
Fix $P_X \in \frac{1}{k}\mathbb{Z}$ and $s\in \ZZ_+$. Let $\vect L \sim \mathrm{Mult}(s,P_X)$, cf.~\eqref{eq:multdef},
and $\vect L^{(k)}\sim \mathrm{HyperGeom}(k,s,P_X)$, cf.~\eqref{eq:hyperdef}. Then we have the following total variation
estimate:\footnote{Total variation distance $\TV$ for probability measures $P$ and $Q$ on sigma algebra $\matf$ defined
as $\TV(P, Q) = \sup_{A \in \matf} |P(A) - Q(A)|$.}
\begin{equation}\label{eq:tv1}
	\TV(P_{\vect L^{(k)}}, P_\vect L) \le {s^2\over 2k}\,. 
\end{equation}
Similarly, if $\vect M \sim \mathrm{Mult}(s-1,P_X)$ and $\vect M^{(k-1)}$ has distribution
\begin{equation}
\PP[\vect M^{(k-1)} = \vect m] = {{k\pi_1 \choose m_1} \cdots {k\pi_j-1 \choose m_j} \cdots  {k\pi_q\choose m_q} \over
{k-1 \choose s-1}}
\end{equation}
for an arbitrary $j$, we also have
\begin{equation}\label{eq:tv2}
	\TV(P_{\vect M}, P_{\vect M^{(k-1)}}) \le {s^2\over 2k}\,.
\end{equation}
\end{lemma}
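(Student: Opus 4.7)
The plan is to prove both bounds via a coupling argument based on the birthday paradox. Think of a population of $k$ labeled balls in which exactly $kP_X(j)$ balls carry color $j$, for each $j \in [q]$. Then sampling $s$ balls uniformly at random \emph{with} replacement and recording the color counts yields a vector distributed as $\vect L \sim \mathrm{Mult}(s,P_X)$, while sampling $s$ balls uniformly \emph{without} replacement and recording the color counts yields $\vect L^{(k)} \sim \mathrm{HyperGeom}(k,s,P_X)$.

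For~\eqref{eq:tv1}, I would couple the two distributions as follows: draw $s$ balls uniformly with replacement, and whenever all $s$ draws happen to be distinct balls, use this same tuple as the without-replacement sample (which is then uniformly distributed over $s$-subsets of the $k$ balls, as required). On the event that all $s$ with-replacement draws are distinct, the two color-count vectors agree, so the TV distance is bounded by the probability of coupling failure. A standard birthday estimate gives
\begin{equation*}
\PP[\text{all $s$ draws distinct}] \;=\; \prod_{i=0}^{s-1}\left(1 - \frac{i}{k}\right) \;\ge\; 1 - \sum_{i=0}^{s-1}\frac{i}{k} \;=\; 1 - \frac{s(s-1)}{2k},
\end{equation*}
hence the failure probability is at most $s(s-1)/(2k) \le s^2/(2k)$, which yields~\eqref{eq:tv1}.

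For~\eqref{eq:tv2}, the same construction applies after conditioning on the first draw being some fixed ball of color $j$. Under this conditioning the remaining $s-1$ draws are i.i.d.\ $P_X$ on the with-replacement side (giving $\vect M$), while on the other side they form a uniform sample without replacement from the remaining $k-1$ balls whose color composition matches exactly the numerator of the stated formula for $\vect M^{(k-1)}$; the particular ball of color $j$ we condition on is irrelevant by symmetry. Coupling again by identity whenever the $s$ with-replacement draws (one fixed, $s-1$ free) are all distinct gives exactly the same failure bound $s(s-1)/(2k) \le s^2/(2k)$.

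The main obstacle is purely bookkeeping: verifying that the joint coupling on the $s$-tuple of balls induces the claimed marginal distributions on the color-count vectors, and checking in the second case that the conditioning on a specific ball of color $j$ can be handled by symmetry so that the identical birthday estimate still applies. Beyond these checks, the argument is routine.
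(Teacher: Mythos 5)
Your proposal is correct and follows essentially the same route as the paper: both couple the with-replacement (multinomial) and without-replacement (hypergeometric) ball samples, bound the total variation by the probability that the coupling fails, and control that probability by the birthday estimate $\sum_{i=1}^{s-1} i/k = s(s-1)/(2k) \le s^2/(2k)$, with the second claim handled by fixing the first draw to a ball of color $j$. The only difference is cosmetic: the paper realizes the without-replacement sample by skipping repeats in an i.i.d.\ index sequence, whereas you identify the two samples on the all-distinct event, but the disagreement event and the bound are identical.
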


\begin{proof}
Standard estimate for total variation via coupling states that for any joint distribution $P_{\vect L, \vect L^{(k)}}$:
\begin{equation}
\TV(P_{\vect L^{(k)}}, P_\vect L) \leq \PP[\vect L^{(k)} \neq \vect L]\,.
\end{equation}

Notice that $\vect L^{(k)}$ encodes the color distribution after sampling $s$ balls from a collection of $k$ colored
balls (with composition given by $P_X$) without replacement, while $\vect L$ is the color distribution for sampling $s$ balls
with replacement. Let us couple these two samples as follows. Number all balls from $1$ to $k$ and define infinite 
string of i.i.d. uniform $X_i \in [k]$. Let our sample with replacement be the balls with indices $X_1,\ldots, X_s$, while the
sample without replacement be the balls $X_1, X_{i_2}, \ldots, X_{i_s}$ where ${i_t}$ denotes the first element of the
sequence where $t$-th unique index appeared (e.g., for $X=(1,2,2,3,\ldots)$ we have $i_2=2$, $i_3=4$, etc).
Now the two samples are going to be different only if $X_1,\ldots,X_s$ are not distinct and this happens with
probability at most
\begin{equation}\label{eq:tv3}
	\sum_{i = 1}^{s-1} \frac{i}{k} = \frac{s^2 - s}{2k} < \frac{s^2}{2k}\,.
\end{equation}
This proves~\eqref{eq:tv1}. For~\eqref{eq:tv2} modify distribution of $X$ sequence by setting $X_1=j$ and the rest are
still i.i.d. uniform on $[k]$. Then $\vect M$ is the color composition of $X_2,\ldots,X_s$ while $\vect M^{(k-1)}$ is the
color composition of $X_{i_2},\ldots,X_{i_s}$. Again, $X_2,\ldots,X_s$ are not all distinct with probability at
most~\eqref{eq:tv3}.
\end{proof}

\begin{proof}
Proof of~\eqref{eq:fnk_bounds}: Simply by definition we have $ F_{k,n}(P_S) \leq F_k(P_S)$, so we focus on the
opposite direction. 
First, we show that if $\vect L^{(k)} \sim \text{HyperGeom}(s,k,[\pi_1, \cdots, \pi_q])$, for any function $f: \vect L^{(k)}\to
\mathbb{R}$ and any fixed $j\in[q]$ we have 
\begin{equation} \label{eq::binomSimp}
\frac{1}{\pi_j} \mathbb{E}[L^{(k)}_j  f(\vect L^{(k)})] = \mathbb{E}[S \cdot f(\vect M^{(k-1)} + \vect e_j)]\,,
\end{equation}
where $\vect e_j$ is a vector with one in $j$-th position and the rest zeros, and $\vect M^{(k-1)}$ has hypergeometric
distribution
\begin{equation}
 \PP[\vect M^{(k-1)} = \vect m] = {{k\pi_1 \choose m_1} \cdots {k\pi_j-1 \choose m_j } \cdots  {k\pi_q\choose m_q} \over
{k-1 \choose s-1}}\,.
\end{equation}
To that end, simply notice that
\begin{align*}
&\frac{1}{\pi_j} \mathbb{E}[L^{(k)}_j  f(\vect L^{(k)})]  \\
&\quad =  \sum_{s} P_S(s)\sum_{\vect \ell} \frac{\ell_j}{\pi_j} {{k\pi_1 \choose \ell_1}
\cdots {k\pi_q\choose \ell_q} \over {k \choose s}}  f(\vect \ell) \numberthis\\
&\quad =   \sum_{s} P_S(s) \sum_{\vect m} s{{k\pi_1 \choose m_1} \cdots {k\pi_j - 1\choose m_j} \cdots {k\pi_q\choose m_q} \over {k - 1 \choose
s - 1}}  f(\vect m + \vect e_j) \label{eq::hypergom_s-1} \numberthis\,.
\end{align*}

Now fix $P_X \in \frac{1}{k}\mathbb{Z}$ and $P_{Y^*|\vect L^{(k)}}$ to be the optimal distributions achieving $F_k(P_S)$ in~\eqref{eq:fkdef}. By rounding there must exist $P_{Y_n|\vect L^{(k)}} \in \frac{1}{n} \mathbb{Z}$ so
that $| P_{Y_n|\vect L^{(k)}}(j|\vect \ell) - P_{Y^*|\vect L^{(k)}}(j|\vect \ell)| \leq \frac{1}{n}$ for every $\vect
\ell$. Then for any fixed $j \in [q]$ we have  in view of~\eqref{eq::binomSimp}
\begin{align*}\label{eq:rdd1}
 \left| \frac{1}{P_X(j)} \mathbb{E}[L^{(k)}_j \mathbbm{1}\{Y^*=j\} ] -  \frac{1}{P_X(j)} \mathbb{E}[L^{(k)}_j \mathbbm{1}\{Y_n=j\}] \right| \\
  \le \frac{\mathbb{E}[S]}{n} \numberthis\,.
\end{align*}
Taking $\min_j$ of~\eqref{eq:rdd1} recovers the lower bound in~\eqref{eq:fnk_bounds}.

We proceed to proving~\eqref{eq:fk_bounds}. Fix $P_S$ and let 
\begin{equation}
 h(P_X, P_{Y|\vect L}, j) \eqdef {1\over P_X(j)}\EE[L_j \mathbbm{1}\{Y=j\}]\,, 
\end{equation}
where given $S=s$ we have $\vect L \sim \mathrm{Mult}(s, P_X)$, cf.~\eqref{eq:jdist}. Similar to~\eqref{eq::binomSimp} we have
\begin{equation}\label{eq::binomSimp2}
	h(P_X, P_{Y|\vect L}, j) = \EE[S \cdot P_{Y|\vect L}(j|\vect M + \vect e_j)]\,,
\end{equation}
where this time given $S=s$ we have $\vect M \sim \mathrm{Mult}(s-1, P_X)$. Now, for $P_X\in {1\over k}\ZZ$ define also 
\begin{equation}
h_{k}(P_X, P_{Y|\vect L},j)  \eqdef {1\over P_X(j)}\EE[L_j^{(k)} \mathbbm{1}\{Y=j\}]\,,  
\end{equation}
where given $S=s$ we have $\vect L^{(k)} \sim \mathrm{HyperGeom}(s,k, P_X)$.
From~\eqref{eq::binomSimp},~\eqref{eq::binomSimp2} and Lemma~\ref{lem::TVmult} (namely~\eqref{eq:tv2}) we have then

\begin{equation}\label{eq:rdd2}
	|h(P_X, P_{Y|\vect L}, j) - h_k(P_X, P_{Y|\vect L}, j)| \le {\EE[S^3]\over 2k}\,.
\end{equation}
Finally, since 
\begin{equation}
 (P_X, P_{Y|\vect L}) \mapsto \min_j h(P_X, P_{Y|\vect L}, j) 
\end{equation}
is uniformly continuous on a compact set, we also have
\begin{equation}
 (P_X, P_{Y|\vect L}) \mapsto \max_{P_{Y|\vect L}} \min_j h(P_X, P_{Y|\vect L}, j) 
\end{equation}
is uniformly continuous by \Cref{prop::maxUnifCont}. Hence for some $\epsilon_k\to0$ we have 
\begin{align*} 
&\left| \min_{P_X \in {1\over k}\ZZ} \max_{P_{Y|\vect L}} \min_j h(P_X, P_{Y|\vect L}, j) - \right.\\
&\left. \quad\quad\quad \min_{P_X} \max_{P_{Y|\vect L}} \min_j h(P_X, P_{Y|\vect L}, j)\right| \le \epsilon_k \numberthis \,. 
\end{align*}
Using~\eqref{eq:rdd2} to replace $h$ with $h_k$ in the first term of the latter we get~\eqref{eq:fk_bounds}.
\end{proof}

\begin{proposition} \label{prop::maxUnifCont}
Let $f:X \times Y \to \mathbb{R}$ where $X$ and $Y$ are compact and $f$ is uniformly continuous. Then $\max_{y} f(x,y)$ is uniformly continuous on $X$. 
\end{proposition}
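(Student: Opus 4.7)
The plan is to establish the ``max-Lipschitz'' type inequality
\[
|g(x_1) - g(x_2)| \le \sup_{y \in Y} |f(x_1,y) - f(x_2,y)|,
\]
where $g(x) \eqdef \max_{y \in Y} f(x,y)$, and then deduce the conclusion immediately from (uniform) continuity of $f$. First I note that $g$ is well-defined, since compactness of $Y$ together with continuity of $f(x, \cdot)$ guarantees the maximum is attained for each $x \in X$.

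To prove the inequality, I would use a symmetric two-line argument. Fix $x_1, x_2 \in X$ and pick $y_1^* \in \argmax_{y} f(x_1,y)$ (which exists by compactness). Then
\[
g(x_1) - g(x_2) \;=\; f(x_1, y_1^*) - g(x_2) \;\le\; f(x_1, y_1^*) - f(x_2, y_1^*) \;\le\; \sup_{y \in Y} |f(x_1,y) - f(x_2,y)|,
\]
where the first inequality uses $g(x_2) \ge f(x_2, y_1^*)$. Swapping the roles of $x_1$ and $x_2$ and using $y_2^* \in \argmax_y f(x_2, y)$ gives the reverse bound, so the absolute value inequality holds.

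From here the continuity claims are immediate. If $f$ is uniformly continuous on $X \times Y$, then for any $\epsilon > 0$ there is a $\delta > 0$ such that $d_X(x_1, x_2) < \delta$ implies $|f(x_1, y) - f(x_2, y)| < \epsilon$ for \emph{every} $y \in Y$ (apply uniform continuity with the second coordinate fixed). Taking the supremum over $y$ and applying the inequality above yields $|g(x_1) - g(x_2)| \le \epsilon$, which proves uniform continuity of $g$. For pointwise continuity, the same argument works locally at each $x_0 \in X$; alternatively, since $X$ is compact, continuity and uniform continuity of real-valued functions on $X$ are equivalent, so no separate argument is needed.

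There is no real obstacle here: the only subtlety is confirming that the maximum is attained so that $g$ makes sense pointwise, which is handled by compactness of $Y$, and that one must not try to control $g(x_1) - g(x_2)$ using a single optimizing $y$ for one of the two sides without also invoking symmetry.
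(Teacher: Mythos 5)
Your proof is correct and is essentially the paper's argument: both hinge on evaluating $f$ at a maximizer $y^*$ for the larger of the two values and invoking uniform continuity of $f$ with the second coordinate held fixed (the paper phrases this as a contradiction, you as a direct two-sided inequality). No substantive difference.
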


\begin{proof}
Let $h(x) = \max_{y} f(x,y)$. Because $f$ is uniformly continuous, for every $\epsilon > 0$ there exists a $\delta$ so that if the distance between $(x_1, y_1)$ and $(x_2, y_2)$ is less than $\delta$, then $|f(x_1,y_1) - f(x_2, y_2)| < \epsilon$ for all $x_1,x_2 \in X$ and $y_1, y_2 \in Y$. We want to show that for $h$, the same $\delta$ can be used for each $\epsilon$. 
Suppose there exists values $x, x' \in X$ where $|h(x') - h(x)| > \epsilon$ and $|x - x'| <\delta$. Assume that $h(x') > h(x)$. There exists a value of $y$ so that $f(x', y) = h(x')$. Since $|f(x', y) - f(x, y)| \leq \epsilon$ then $h(x') = f(x',y) \leq f(x,y) + \epsilon \leq h(x) + \epsilon$, which is a contradiction. 
\end{proof}

\subsection{Upper bound on $F_k(P_S)$}\label{sec::fkboundf}

\begin{lemma}\label{lem::fklessf2k}
For any $P_S \in \mathbb{Q}$ with finite support, 
\begin{equation}
F_k (P_S) \leq F (P_S)\,.
\end{equation}
\end{lemma}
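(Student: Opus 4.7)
My plan combines LP duality on the inner max-min, a convex-order comparison between hypergeometric and multinomial sampling, and a mixed-strategy argument for the outer minimum over $P_X$.

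Since the payoff $h_j = \EE[L_j \mathbbm{1}\{Y=j\}]/P_X(j)$ is linear in $P = P_{Y|\vect L}$, one may write $\min_j h_j = \min_\lambda \sum_j \lambda_j h_j$ (min over the probability simplex on $[q]$) and invoke Sion's minimax theorem to swap with the outer $\max_P$. The Player's best response places all mass on $\arg\max_j \lambda_j \ell_j/P_X(j)$, yielding dual representations
\begin{equation*}
V_{\mathrm{mult}}(P_X) = \min_\lambda \EE_{\vect L\sim\mathrm{Mult}(s,P_X)}\!\left[\max_j \tfrac{\lambda_j L_j}{P_X(j)}\right],\quad V_{\mathrm{hyp},k}(P_X) = \min_\lambda \EE_{\vect L^{(k)}\sim\mathrm{HyperGeom}(s,k,P_X)}\!\left[\max_j \tfrac{\lambda_j L_j^{(k)}}{P_X(j)}\right].
\end{equation*}
The integrand is convex in $\vect\ell$; combined with the classical convex-order domination of sampling without replacement by sampling with replacement (Hoeffding's reduction), this yields the pointwise inequality $V_{\mathrm{hyp},k}(P_X) \le V_{\mathrm{mult}}(P_X)$ on $\tfrac{1}{k}\mathbb{Z}$.

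To bridge the two outer minima (restricted to $\tfrac{1}{k}\mathbb{Z}$ for $F_k$ vs.\ unrestricted for $F$), take a minimizer $P_X^*$ of $V_{\mathrm{mult}}$ (so $V_{\mathrm{mult}}(P_X^*) = F(P_S)$), draw $k$ i.i.d.\ labels from $P_X^*$, and let $\hat P_X^{(k)} \in \tfrac{1}{k}\mathbb{Z}$ be the empirical distribution. Because a mixed strategy is at least as good as any pure strategy for the outer minimizer, $F_k(P_S) \le \EE[V_{\mathrm{hyp},k}(\hat P_X^{(k)})]$. The key structural identity is that, under this i.i.d.\ coupling, the marginal of $\vect L^{(k)}$ (over the combined randomness of the labeling and subsequent size-$s$ hypergeometric draw) is exactly $\mathrm{Mult}(s, P_X^*)$, since every position in the sample inherits an independent label from $P_X^*$.

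Combining the dual representation with Jensen's inequality $\EE[\min_\lambda(\cdot)] \le \min_\lambda \EE(\cdot)$ reduces the task to proving
\begin{equation*}
\min_\lambda \EE\!\left[\max_j \tfrac{\lambda_j L_j^{(k)}}{\hat P_X^{(k)}(j)}\right] \;\le\; \min_\lambda \EE_{\mathrm{Mult}(s,P_X^*)}\!\left[\max_j \tfrac{\lambda_j L_j}{P_X^*(j)}\right] = F(P_S).
\end{equation*}
This last comparison is the main obstacle: the random denominator $\hat P_X^{(k)}(j)$ is correlated with its numerator $L_j^{(k)}$, so one cannot separately substitute $P_X^*(j)$ for $\hat P_X^{(k)}(j)$. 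My plan is to exploit exchangeability---each of the $k$ labeled positions is uniformly likely to lie in the size-$s$ sample---to obtain $\EE[L_j^{(k)}/\hat P_X^{(k)}(j)\mid \hat P_X^{(k)}] = s$ almost surely (matching the multinomial ratio's mean), and then to establish a joint convex-order comparison between $(L_j^{(k)}/\hat P_X^{(k)}(j))_j$ under the coupling and $(L_j/P_X^*(j))_j$ under $\mathrm{Mult}(s,P_X^*)$ to close the gap. Verifying this joint (rather than merely marginal) convex-order bound is substantially more delicate than the scalar comparison underlying the pointwise Step~1 inequality, and is where I anticipate the bulk of the technical work.
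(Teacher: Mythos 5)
Your Steps 1--2 are sound: the minimax/dual representation $V(P_X)=\min_{\lambda}\EE\bigl[\max_j \lambda_j L_j/P_X(j)\bigr]$ is correct, and the multivariate Hoeffding comparison of sampling without versus with replacement does give $V_{\mathrm{hyp},k}(P_X)\le V_{\mathrm{mult}}(P_X)$ pointwise on ${1\over k}\ZZ$. The fatal problem is Step 3. After applying Jensen to pull $\min_\lambda$ outside the expectation over the random empirical distribution $\hat P_X^{(k)}$, you reduce the lemma to the per-$\lambda$ inequality $\EE\bigl[\max_j \lambda_j L_j^{(k)}/\hat P_X^{(k)}(j)\bigr]\le \EE_{\mathrm{Mult}(s,P_X^*)}\bigl[\max_j \lambda_j L_j/P_X^*(j)\bigr]$, which is exactly what the joint convex-order domination you propose would deliver. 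This inequality is false. Take $q=2$, $P_S=\delta_2$, $P_X^*=\lambda=(1/2,1/2)$. Conditioning on $K_1=a$ label-$1$ items among the $k$ i.i.d.\ labels, a direct computation gives $\EE\bigl[\max\bigl(\tfrac{L_1 k}{2K_1},\tfrac{L_2 k}{2K_2}\bigr)\mid K_1=a\bigr]=\tfrac{k-2+\max(a,k-a)}{k-1}$, and averaging over $a\sim\mathrm{Bino}(k,1/2)$ (using $\EE[\max(a,k-a)]=k/2+\Theta(\sqrt{k})$) yields $\tfrac{3}{2}+\Theta(k^{-1/2})$, whereas the multinomial side equals $\EE[\max(L_1,L_2)]=\tfrac{3}{2}=F(\delta_2)$. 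Since the left side is convex and symmetric in $\lambda$, even its $\min_\lambda$ exceeds $3/2$. The loss is inherent to randomizing the outer minimizer's strategy: the $\Theta(k^{-1/2})$ fluctuation of $\hat P_X^{(k)}$ about $P_X^*$ overwhelms the $O(1/k)$ gain from the hypergeometric step, while the lemma requires an exact one-sided bound (indeed $F_k(\delta_2)=\tfrac{3k/2-2}{k-1}<\tfrac32$, attained by the \emph{deterministic} choice $P_X=(1/2,1/2)$). So the plan cannot be closed as written; any repair must use a deterministic discretization of $P_X^*$, not an i.i.d.\ empirical one.

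For contrast, the paper avoids any direct hypergeometric-versus-multinomial comparison. It proves the monotonicity $F_k(P_S)\le F_{2k}(P_S)$ operationally: subset designs on $k$ primary nodes nearly achieving $F_k$ (via \Cref{prop::subsetFiniteKResult} and \eqref{eq:fnk_bounds}) are duplicated onto $2k$ primary nodes by copying and then symmetrized (\Cref{prop:sym}) into subset designs on $2k$ nodes with the same ratio $tk/m$; the converse half of \Cref{prop::subsetFiniteKResult} then forces $F_{2k}\ge F_k$, and the convergence $F_{2^ik}\to F$ from \eqref{eq:fk_bounds} yields $F_k\le F$ along the doubling subsequence. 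This route gets the exact inequality precisely because it never averages over the choice of $P_X$.
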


\begin{proof}
Fix $P_S \in \mathbb{Q}$ with finite support. First, we will show that $F_k (P_S) \leq F_{2k} (P_S)\,.$ Using \Cref{prop::subsetFiniteKResult} and \eqref{eq:fnk_bounds} from \Cref{prop::Fnkprops}, for each $k$ there exists a sequence of subset designs $G_i$ which are $(k, m_i, t_i, E_i)_q$-designs with $E_i = m_i \mathbb{E}[S]$, and $\frac{t_i k}{m_i} \to F_k(P_S)$.

For each $G_i$, we will construct subset design $G_i'$ on $2k$ primary nodes by copying (see \Cref{prop:copying}) two copies of $G_i$. Then $G_i'$ is a $(2k, 2m_i, t_i, 2E_i)_q$-design. By \Cref{prop:sym}, for each $G_i'$, there exists a subset design $G_i''$ which is a $(2k, 2m_i \cdot (2k)!, t_i \cdot (2k)!, 2E_i \cdot (2k)!)$. 
\begin{equation}
F_{2k}(P_S) \geq \lim_{i \to \infty} \frac{t_i \cdot (2k)! 2k}{2m_i \cdot (2k)!} = \lim_{i \to \infty} \frac{t_i k}{m_i} = F_k(P_S)\,.
\end{equation}

Then, $F_k (P_S) \leq F_{2k} (P_S) \leq F_{4k} (P_S) \leq F_{8k} (P_S) \leq F_{16k} (P_S) \cdots$. Since $F_{2^i k} (P_S) \to F(P_S)$ monotonically with convergence given by \eqref{eq:fk_bounds} from \Cref{prop::Fnkprops}, this gives the desired result. 
\end{proof}

%-------------------SPLITTING RATIO PROOF---------------------%
%-------------------SPLITTING RATIO PROOF---------------------%
%-------------------SPLITTING RATIO PROOF---------------------%
%-------------------SPLITTING RATIO PROOF---------------------%

\subsection{Proof of \Cref{prop::splittingRatio}}\label{sec::splittingRatioProof}

\begin{proof}
Fixed $P_S$ with finite support and let $c = \mathbb{E}[S]$. Let $\hat{P}_S(s) = \frac{{P}_S(s)s}{\sum_s {P}_S(s)s } = \frac{1}{c}{P}_S(s)s$. We can substitute in $\hat{P}(s)$ and take the expectation with respect to $\hat{P}_S$ instead of $P_S$ by adjusting \eqref{eq::optfunc2} to 
\begin{align*}\label{eq::optRatio}
F(\hat P_S) = \min_{P_X} \max_{P_{Y|L}} \min \left\{\frac{c}{P_X(0)}\frac{L_0}{S} \mathbbm{1}\{Y = 0\} , \right. \\
\left.\frac{c}{P_X(1)}\frac{L_1}{S} \mathbbm{1}\{Y = 1\}\right\}\,. \numberthis
\end{align*}
Let the redundant node ratio of a redundant node with type $(\ell_0, \ell_1)$ be $\nu = \frac{\ell_0}{\ell_0 + \ell_1}$. Suppose that $P_{Y|L}$ is a labeling so that
\begin{enumerate}
\item $P_{Y|L}(1 | \vect \ell_a) > 0$ where $\vect \ell_a$ so that $\ell_0 + \ell_1 = s_a$ and has ratio $\nu_a$
\item $P_{Y|L}(0 | \vect \ell_b) > 0$ where $\vect \ell_b$ so that $\ell_0 + \ell_1 = s_b$ and has ratio $\nu_b$
\item $\nu_a > \nu_b$
\end{enumerate}

Let $P_{Y'|L}$ be equivalent to $P_{Y|L}$  except that 
\begin{subequations}
\begin{align} 
&P_{Y'|L}(1 | \vect \ell_a) =  P_{Y|L}(1 | \vect \ell_a) - \alpha {\hat P_S(s_b) P_{\vect L|S}(\vect \ell_b |s_b)}
\label{eq::splittingShifts1}\\
&P_{Y'|L}(0 | \vect \ell_a) =  P_{Y|L}(0 | \vect \ell_a) + \alpha {\hat P_S(s_b) P_{\vect L|S}(\vect \ell_b |s_b)}\label{eq::splittingShifts2}\\
&P_{Y'|L}(1 | \vect \ell_b) =  P_{Y|L}(1 | \vect \ell_b) + \alpha {\hat P_S(s_a) P_{\vect L|S}(\vect \ell_a |s_a)}
\label{eq::splittingShifts3}\\
&P_{Y'|L}(0 | \vect \ell_b) =  P_{Y|L}(0 | \vect \ell_b) - \alpha {\hat P_S(s_a) P_{\vect L|S}(\vect \ell_a |s_a)}
\label{eq::splittingShifts4}
\end{align}
\end{subequations}
for an appropriate $\alpha > 0$ which is small enough so that $P_{Y'|L}$ is still a valid distribution. Compared to $P_{Y|L}$, $P_{Y'|L}$ increases both quantities in the brackets in \eqref{eq::optRatio}. So $P_{Y|L}$ cannot be optimal and any optimal $P_{Y|L}$ must have the form of \eqref{eq::splittingRatio}.

For two redundant node type ratios where $\nu_a = \nu_b$, we can also see from \eqref{eq::splittingShifts1}-\eqref{eq::splittingShifts4} that there is a value of $\alpha$ (possibly negative unlike above) so that  $P_{Y'|L}(0 | \vect \ell_a) = P_{Y'|L}(0 | \vect \ell_b)$ and the value of \eqref{eq::optRatio} is not affected by the change. %Thus, there exists a solution where $P_{Y|L}$ is a function of the redundant node type ratio.
\end{proof}

%--------------------NUMERICAL RESULTS--------------------%
%--------------------NUMERICAL RESULTS--------------------%
%--------------------NUMERICAL RESULTS--------------------%
%--------------------NUMERICAL RESULTS--------------------%

\subsection{Numerical results derivation}\label{sec::numerical}

Here we develop upper and lower bounds for the expression found in \Cref{thm::asymResult} (the particular case when $\matx = \{0,1\}$). 

\subsubsection{Almost tight lower bound}\label{apx:numapx}

Our lower bound for the boundary of $\matr_\infty$ will be parametrized by $c$. To get this lower bound, we want to find an upper bound for
\begin{equation}
Z^*(c) = \max_{P_S: \mathbb{E}[S] = c} F(P_S)\,.
\end{equation}

For notation, let $\lambda = P_X(0)$ and $1 - \lambda = P_X(1)$.  Let $\vect\ell = (\ell_0, \ell_1)$ and $f(\ell_0, \ell_1) = f(\vect\ell) = P_{Y|L}(0|\vect\ell)$ where $f$ can take any value between $[0, 1]$.
For a fixed $\lambda$ and $s$, define random variable \textcolor{black}{$\vect M(s,\lambda) = (V, s - 1- V)$} where $V \sim \text{Bino}(s-1, \lambda)$ and $\vect e_0 = (1, 0)$ and $\vect e_1 = (0, 1)$ according to \Cref{lem::TVmult} and the proof of \Cref{prop::Fnkprops}. \textcolor{black}{(Just for clarity in this section, we added arguments in paranthesis for $\vect M$.)} Fix $P_S$ to have finite support.

First, we have that
\begin{align*}
F(P_S) =& \min_{0 \leq \lambda \leq 1} \max_{0 \leq f \leq 1} \min \left\{\frac{1}{\lambda} \mathbb{E}[L_0 f(\vect L)] , \right. \\ & \quad \left.\frac{1}{1 - \lambda}\mathbb{E}[L_1 (1 - f(\vect L))] \right\} \numberthis\\
=& \min_{0 \leq \lambda \leq 1} \max_{0 \leq f \leq 1} \min_{0 \leq \alpha \leq 1} \alpha\mathbb{E}[S \cdot f(\vect M(S,\lambda) + \vect e_0) ] \\ & \quad +(1-\alpha) \mathbb{E}[S \cdot (1 - f(\vect M(S,\lambda) + \vect e_1)) ]  \label{eq::swapc} \numberthis\\
\leq& \min_{0 \leq \lambda \leq 1}  \frac{1}{2}\mathbb{E}\left[S \cdot \max_{0 \leq f \leq 1}   \left(1 \right. \right.\\ & \quad \left. + f(\vect M(S,\lambda) + \vect e_0) - f(\vect M(S,\lambda) + \vect e_1)\right)\bigg] \label{eq::chalf} \numberthis\\
 =& \min_{0 \leq \lambda \leq 1}  \frac{1}{2} \mathbb{E}\bigg[S \cdot \left(1 +\right.\\ & \quad \left.\max_{0 \leq \ell_0 \leq S - 1} \mathbb{P}[\vect M(S,\lambda) = (\ell_0, S - 1 - \ell_0)] ) \right] \label{eq::maxProbBound}\numberthis \\
 \eqdef& \min_{0 \leq \lambda \leq 1} \mathbb{E}[ \phi(S, \lambda)] \numberthis
\end{align*}
where in \eqref{eq::swapc} we use \eqref{eq::binomSimp2} and convexify the minimum using $\alpha$, and then \eqref{eq::chalf} follows by setting $\alpha = \frac{1}{2}$. To get \eqref{eq::maxProbBound}, notice that for a fixed $s$

\begin{align}
&\mathbb{E}[1 + f(\vect M(s,\lambda) + \vect e_0)   - f(\vect M(s,\lambda) + \vect e_1)]  \\
\begin{split}
& = 1 + \sum _{\ell_0 = 0}^{s-1} \mathbb{P}[\vect M(s,\lambda) = (\ell_0, s - 1 - \ell_0)] f(\ell_0 + 1, s-1 - \ell_0) \\& \quad \quad - \sum _{\ell_0 = 0}^{s-1}  \mathbb{P}[\vect M(s,\lambda) = (\ell_0,  s -1 - \ell_0)] f(\ell_0, s -\ell_0) 
\end{split}\label{eq:subftilde}\\
%& = 1 + \sum _{\ell_0 = 0}^{s} (\mathbb{P}[\vect M(s,\lambda) = (\ell_0, s - 1 - \ell_0)] - \mathbb{P}[\vect M(s,\lambda) = (\ell_0 + 1,  s - \ell_0)])f(\ell_0 + 1, s - 1 - \ell_0)\\
\begin{split}
& = 1 + \mathbb{P}[\vect M(s,\lambda) = (x, s - 1 - x)] f(x + 1, s - 1 - x)  \\& \quad \quad + \mathbb{P}[\vect M(s,\lambda) = (x + 1,  s - x)](1 - f(x + 1, s - 1 - x))\,.
\end{split} \label{eq::subRemaining}
\end{align}

By \Cref{prop::splittingRatio}, the optimal $f$ must have a threshold solution. We can express this threshold solution by letting $x$ be be smallest value of $\ell_0$ where $f(\ell_0, s - \ell_0)$ is non-zero. Applying the cancellations to \eqref{eq:subftilde}, we get that only two terms remains. The value of $f$ which obtains the maximum must be where only the maximum value of $\mathbb{P}[\vect M(s,\lambda) = (x, s - 1 - x)]$ over all $x$ appears in \eqref{eq::subRemaining}, and this gives \eqref{eq::maxProbBound}.

We will bound 
\begin{align}
Z^*(c) \leq &\max_{{P_S: \EE[S] = {c}}}  \min_{0 \leq \lambda \leq 1}\mathbb{E}[ \phi(S,\lambda)] \\
\leq & \max_{{P_S: \EE[S] = {c}}} \min_{\lambda \in L_n}\mathbb{E}[ \phi(S,\lambda)] \\\eqdef& Z_n'(c)
\end{align}
where we defined\footnote{$L_n$ is
defined so that $L_n = \{\lambda \in (0,\frac{1}{2}]:\phi(s,\lambda) \leq \phi(s,\lambda')  \text{ for some } 1 < s \leq
2n \text{ and }\forall \lambda' \in [0,1] \}$ which is the set of all $\lambda$ which minimizes $\phi(s, \lambda)$ for
some $1< s \leq 2n$.} 
$$ L_n = \left\{\frac{\floor{s/2}}{s}: \text{ where } 1 < s \leq 2n \right\}\,.$$
Note that increasing $n$ makes the approximation tighter.
%The specific $L$ we choose is conjectured to make the inequality tight. (We conjecture this because the plot for $\phi_s(\lambda)$ looks to be concave at most points, and not differentiable at points $\frac{x}{s}$ where $x$ is an integer. The function $\sum_s P_s \phi_s(\lambda)$ will also be concave at most points. Thus, the minimum is probably at a non-differentiable point.)
Index the elements of $L_n$ as $\lambda_i$ where $\lambda_1 = \frac{1}{2}, \lambda_2 = \frac{1}{3}, \lambda_3 = \frac{2}{5}, ..., \lambda_n = \frac{n-1}{2n -1}$, so that
%\begin{equation}
$  \min_{\lambda \in L_n}\mathbb{E}[ \phi(S,\lambda)] =  \min_{i} \mathbb{E}[ \phi(S,\lambda_i)] \,.$
%\end{equation}

The quantity $Z_n'(c)$ is equivalent to maximizing the value of $t$ under the constraints that $\mathbb{E}[\phi(S,\lambda_i)] \geq t$ for all $1 \leq i \leq n$ and $\mathbb{E}[S] = {{c}}$.
%\begin{subequations}
%\begin{align}
%\textbf{maximize  } & t \label{subeq::maxProbOpt1} \\ 
%\textbf{subject to  } &  \sum_s^\infty P_S(s) \phi(s,\lambda_i) \geq t,  1 \leq i \leq n \label{subeq::maxProbOpt2}\\
%& P_S (s)\geq 0, s \in \mathbb{Z}_+\label{subeq::maxProbOpt3}\\
%&\sum_{s = 1}^{\infty} P_S(s) = 1\label{subeq::maxProbOpt4}\\
%& \sum_{s = 1}^\infty P_S(s) s = {{c}}\label{subeq::maxProbOpt5}
%\end{align}
%\end{subequations}
We can substitute 
\begin{align}
\phi(s, \lambda_i) =& \frac{s}{2}\left(1 + \max_{1 \leq \ell_0 \leq s} \mathbb{P}[\vect M(s,\lambda_i) = (\ell_0, s - 1 - \ell_0)]\right) \\
\eqdef& \frac{s}{2} \left(1 + \psi(s, \lambda_i)\right) \,.
\end{align}
%and taking the expectation over $\hat{P}_S = \frac{P_S(s) s}{\sum_s P_S(s) s}$ instead of $P_S(s)$. Then $Z_n'(c)$ is equivalent to maximizing $\frac{c}{2}(t + 1)$ under the constraints that $\mathbb{E}[\psi(S,\lambda_i)] \geq \frac{t}{c}$ for all $1 \leq i \leq n$ and $\mathbb{E}[\frac{1}{S}] = \frac{1}{c}$.
Then  $ \mathbb{E}[ \phi(S,\lambda_i)] = \frac{1}{2}\mathbb{E}[S] + \frac{1}{2}\mathbb{E}[S \cdot \psi(S, \lambda_i)] = \frac{c}{2} + \frac{1}{2}\mathbb{E}[S \cdot \psi(S, \lambda_i)]$. Note that  $\psi(S,\lambda_i) \to 0$ as $s \to \infty$ for all $i$.

For any value of $\pi_i \geq 0$, where $1 \leq i \leq n$, $\eta \geq 0$ and $\mu \geq 0$, we can define 

\begin{align}
\begin{split}
&Z_n''(c, \pi_1, ..., \pi_n, \eta, \mu) ] \eqdef \\ &\quad  \max_{P_s(s) \geq 0, \forall s} t + \sum_i \pi_i \left(\frac{c}{2} + \sum_{s = 1}^\infty {P}_S (s) \frac{s}{2} \psi(s, \lambda_i) - t \right) \\ & \quad - \eta\left(\sum_{s = 1}^\infty {P}_S (s) s - {c}\right) - \mu\left(\sum_{s = 1}^\infty P_S (s) - 1 \right)\,. \label{subeq::maxProbConvert1} 
\end{split}
\end{align}

%Then $Z_n'(c) \leq Z_n''(c, \pi_1, ..., \pi_n, \eta, \mu)$ for any $\pi_i, \eta, \mu$.

%\begin{subequations}
%\begin{align}
%\textbf{maximize  }  \begin{split} t + \sum_i^n \pi_i \left(\sum_{s}^{\infty}\hat{P}_S (s) \psi(s, \lambda_i) - t\right) \\- \eta\left(\sum_{s}^{\infty} \hat{P}_S (s) - {c}\right) - \mu\left(\sum_s^{\infty} \hat{P}_S (s) \frac{1}{s} - 1\right) \end{split}\label{subeq::maxProbConvert1} \\  
%\textbf{subject to  } & \hat{P}_S (s)\geq 0, s \in \mathbb{Z}_+ \label{subeq::maxProbConvert2}\\
%& \sum_{s = 1}^\infty \hat{P}_S(s) = {{c}} \label{subeq::maxProbConvert3}
%\end{align}
%\end{subequations}

Consider the set of $\pi_i, \eta, \mu$ which is the solution \textcolor{black}{to the dual problem}

\begin{subequations}
\begin{align}
\textbf{minimize  } & \frac{c}{2} + \eta c + {\mu}\label{subeq::dualOpt} \\ 
\textbf{subject to  }  &\sum_i^n \frac{1}{2}\pi_i \psi(s, \lambda_i) - \eta - \mu \frac{1}{s} \leq 0, s \in \mathbb{Z}_+ \label{subeq::dualConstraint}\\
&\sum_{i = 1}^{n} \pi_i - 1 = 0 \label{subeq::dualConstraint2}\\
& \eta \geq 0, \mu \geq0,  \pi_i \geq 0,  1 \leq i \leq n\,. \label{subeq::dualConstrain3}
\end{align}
\end{subequations}

Such an optimization has a solution which is easy to find despite having infinitely many constraints. The constraints \eqref{subeq::dualConstraint} will hold for all $s$ greater than some $s_0$ because $\psi(s,\lambda_i) \to 0$. By choosing a large enough $s_0$, we can solve the optimization by replacing it with an optimization where only the first $s_0$ constraints in \eqref{subeq::dualConstraint} are present.\footnote{\textcolor{black}{We can show that for each $c$, only considering constraints \eqref{subeq::dualConstraint} for $s \leq 16c$ is more than sufficient. All other infinite constraints can be removed without affecting the optimal solution.}}
%The minimum value of \eqref{subeq::dualOpt} is an upper bound to the maximization in \eqref{subeq::maxProbConvert1}.
Set the values of  $\pi_i, \eta, \mu$ in \eqref{subeq::maxProbConvert1} to be the values which obtain the minimum for \eqref{subeq::dualOpt}-\eqref{subeq::dualConstrain3}. Select a value of $s_1 \in \mathbb{Z}_+$. Then

\begin{align*}
Z_n'(c) &\leq Z_n''(c, \pi_1, ..., \pi_n, \eta, \mu) \numberthis\\
& =  \max_{P_s(s) \geq 0, \forall s} t \\ & \quad\quad + \sum_i \pi_i \left(\frac{c}{2} + \sum_{s = 1}^\infty {P}_S (s) \frac{s}{2} \psi(s, \lambda_i) - t \right) \\ & \quad\quad  - \eta\left(\sum_{s = 1}^\infty {P}_S (s) s - {c}\right) - \mu\left(\sum_{s = 1}^\infty P_S (s) - 1 \right) \numberthis\\
& =  \max_{P_s(s) \geq 0, \forall s} \sum_i \pi_i \left( \sum_{s = s_1}^\infty {P}_S (s) \frac{s}{2} \psi(s, \lambda_i) \right) \\& \quad\quad  - \eta\left(\sum_{s = s_1}^\infty {P}_S (s) s\right) - \mu\left(\sum_{s = s_1}^\infty P_S (s)  \right) \\ & \quad\quad + \frac{c}{2} + \eta c + \mu \numberthis\\ 
%\begin{split}
%=& t + \sum_i^n \pi_i \left(\sum_{s}^{s_1}\hat{P}_S (s) \psi(s, \lambda_i) + \sum_{s_1 + 1}^{\infty}\hat{P}_S (s) \psi(s, \lambda_i) - t\right)  \\& - \eta\left(\sum_{s}^{s_1} \hat{P}_S (s) + \sum_{s_1 + 1}^{\infty} \hat{P}_S (s)- {c}\right) - \mu\left(\sum_s^{s_1} \hat{P}_S (s) \frac{1}{s} + \sum_{s_1+1}^{\infty} \hat{P}_S (s) \frac{1}{s} - 1\right)
%\end{split}\\
%= & \sum_i^n \pi_i \sum_{s_1 + 1}^{\infty}\hat{P}_S (s) \psi(s, \lambda_i)  - \eta\left( \sum_{s_1 + 1}^{\infty} \hat{P}_S (s)- {c}\right) - \mu\left( \sum_{s_1+1}^{\infty} \hat{P}_S (s) \frac{1}{s} - 1\right)\\
%\leq & \sum_i^n \pi_i \sum_{s_1 + 1}^{\infty}\hat{P}_S (s) \psi(s, \lambda_i) + \eta{c} + \mu \\
&\leq \max_{s > s_1} \max_{i} \frac{c}{2}\psi(s, \lambda_i)  + \frac{c}{2} + \eta c + \mu \,.\numberthis
\end{align*}

Since $\psi(s, \lambda_i) \to 0$, the optimal $ \frac{c}{2} + \eta c + \mu $ given by \eqref{subeq::dualOpt}-\eqref{subeq::dualConstrain3} is an upper bound to $Z_n'(c)$ and hence also to $Z^*(c)$. This computes a lower bound on $\matr_\infty$. In \Cref{fig::maxProbConverse}, we found the lower bound using $n = 10$.

\subsubsection{Upper bounds}

To show a point in $\matr_{\infty}$ is achievable, it is sufficient to find a set of masses $P_S$ that achieves that point. Searching all possible masses $P_S$ is not computationally efficient. It turns out we can get decently close to the lower bound approximation by using the same masses which are solutions to $Z_n'(c)$ for each ${{c}}$ when restricting $P_S$ to only have finite support. While these results are close to the almost tight converse bound, they are not necessarily the best known. A few best known achievable points were found by simple search. The results are plotted in \Cref{fig::maxProbConverse} and shown in \Cref{fig:achNumbers}.

\subsection{Proof of \Cref{prop::hammingOptimal}} \label{sec::HammingOptimalProof}

\begin{proof}

The Hamming block achieves the point $(\frac{3}{2}, \frac{2}{3})$ in $\matr_{\infty}^3$. The proof that this is a corner point amounts to computing the region $\matr_{\infty}^3$. 

To solve for $\matr_{\infty}^3$, we will first simplify the expression for $F_3(P_S)$.
For any $P_S$ on $s \in [3]$, the labeling of primary nodes which gives the minimum value of $F_3(P_S)$ is when $P_X(0) = \frac{2}{3}$ and $P_X(1) = \frac{1}{3}$ (or these flipped). With this insight, we can simplify $F_3(P_S)$ to solve for the optimal $P_S$ given any parameter $\mathbb{E}[S] = c$ for some $1 \leq c \leq 3 $. Let $P_{Y|\vect L^{(3)}}(j|\ell_0, \ell_1)$ denote the proportion of redundant nodes of type $\vect \ell = (\ell_0, \ell_1)$ to label $j$.
\begin{align*}
F_3(P_S) &= \max_{P_{Y|\vect L^{(3)}}} \min_{j \in \{0, 1\}} {1 \over P_X(j)} \EE[L_j \mathbbm{1}\{Y=j\}] \numberthis\\
&= \max_{P_{Y|\vect L^{(3)}}} \min_{j \in \{0, 1\}}  \bigg\{ \frac{1}{P_X(j)} \sum_{s = 1}^3 P_S(s) \\ & \quad \quad \sum_{\vect \ell}  \ell_j P_{\vect L^{(3)}|S,P_X}(\vect \ell |s,P_X) P_{Y|\vect L^{(3)}}(j|\vect \ell)  \bigg\} \numberthis\\
&= \max_{P_{Y|\vect L^{(3)}}} \min \left\{ \frac{3}{2} \left(P_S(1) \frac{2}{3} P_{Y|\vect L^{(3)}}(0|1, 0) \right.\right.\\  
& \quad \quad +P_S(2) \left[ \frac{2}{3} P_{Y|\vect L^{(3)}}(0|1, 1) + 2\frac{1}{3} P_{Y|\vect L^{(3)}}(0|2, 0) \right]  \\ 
& \quad \quad  + P_S(3)2 P_{Y|\vect L^{(3)}}(0|2, 1) \bigg),  \\
& \quad \quad  \frac{3}{1}\left(P_S(1) \frac{1}{3} P_{Y|\vect L^{(3)}}(1|0, 1) \right. \\
& \quad \quad \quad+ P_S(2)  \frac{2}{3} P_{Y|\vect L^{(3)}}(1|1, 1)\\ 
& \left. \quad \quad \quad + P_S(3) P_{Y|\vect L^{(3)}}(1|2, 1) \bigg)\right\}\numberthis \label{eq::F3expand1}
\end{align*}

\textcolor{black}{To get \eqref{eq::F3expand1}, we expanded the summation into each term replacing $\ell_j$ and $P_{\vect L^{(3)}|S,P_X}(\vect \ell |s,P_X)$ with their numerical values. Since $P_X(0) = \frac{2}{3}$ and $P_X(1) = \frac{1}{3}$ and $k = 3$, we only need the $P_{Y|\vect L^{(3)}}$ terms for which $\vect \ell$ is  
a subset on $2$ zeros and $1$ one. }

We will first solve for the portion of $\matr_{\infty}^3$ where ${\varepsilon} > \frac{3}{2}$.

Set $\mathbb{E}[S] = 3$. There is a unique point of the form $(\frac{3}{\eta}, \frac{1}{\eta})$ for some $\eta > 0 $ which is a boundary point of the convex region $\matr_{\infty}^3$. The only distribution $P_S$ which can achieve $\mathbb{E}[S] = 3$ is when $P_S(3) = 1$ and $P_S(s) = 0$ for all other $s \neq 3$. With this $P_S$, we get that $F_3(P_S) = \frac{3}{2}$. Since no other $P_S$ is possible, the point 
\begin{equation}
\left(\frac{\mathbb{E}[S]}{F_3(P_S)}, \frac{1}{F_3(P_S)}\right) = \left(2, \frac{2}{3}\right)
\end{equation}
must be the boundary point of the form $(\frac{3}{\eta}, \frac{1}{\eta})$ in $\matr_{\infty}^3$. The line of points between this value and the value given by the Hamming block is achievable by convexity and by Claim~\ref{prop:basic:2} of \Cref{prop:basic} they must be optimal.

For the remaining portion of the region, we want to fixed a $1< c < \frac{9}{4}$ (the Hamming block has $\mathbb{E}[S] = \frac{9}{4}$), and solve for $P_S^* = \argmax_{P_S: \mathbb{E}[S] = c} F_3(P_S)$ and determine $F_3(P_S^*)$.

Note that it is optimal to set $P_{Y|\vect L^{(3)}}(0|\ell_0, 0) = 1$ and $P_{Y|\vect L^{(3)}}(1|0, \ell_1) = 1$. Then we can simplify notation by letting $P_{Y|\vect L^{(3)}}(0|1, 1) = x_{1,1}$ and $P_{Y|\vect L^{(3)}}(0|2, 1) = x_{2,1}$.
We can simplify \eqref{eq::F3expand1} by applying the constraints that $\sum_{s = 1}^3 P_S(s) = 1$ and $\sum_{s = 1}^3 P(s)s = c$.
%\begin{multline}
%F_3(P_S) = \max_{x_{1,1}, x_{2,1} \in [0,1]} \min \{3P_S(3) x_{2,1} + (c - 1 - 2_SP(3)) x_{1,1} + 1 - P_S(3), \\ c - 3P_S(3)x_{2,1} - 2(c - 1 - 2P_S(3))x_{1,1}\}
%\end{multline}
At the maximum point, the two quantities after the minimum must be equal.
%\begin{equation}
%P(3) = \frac{c - 1 - (3c-3)x_{1,1}}{6x_{2,1} - 6x_{1,1} - 1}
%\end{equation}
Simplifying the equation with these constraints, we have
\begin{align*} \label{eq::f3max}
&F_3(P_S^*) = 
\max_{x_{1,1}, x_{2,1} \in [0,1]} \\
& \quad\frac{c - 1 - (3c-3)x_{1,1}}{6x_{2,1} - 6x_{1,1} - 1} (3x_{2,1} - 2x_{1,1} - 1)  + (c-1)x_{1,1} + 1 \numberthis                                                                                                                                                                                                                                                                                                                                                                                                                                                                                                                                                                                                                                                                                                                                                                                                                                                                                                                                                                                                                                                                                                                                                                                                                                                                                                                                                                                                                                                                                                                                                                                                                                                                                                                                                                                                                                                                                                                                                                                                                                                                                                                                                                                                                                                                                                                                                                                                                                                                                                                                                                                                                                                                                                                                                                                                                                                                                                                                                                                                                                                                                                                                                                                                                                                                                                                                                                                                                                                                                                                                                                                                                                                                                                                                                                                                                                                                                                                                                                                                                                                                                                                                                                                                                                                                                                                                                                                                                                                                                                                                                                                                                                                                                                                                                                                                                                                                                                                                                                                                                                                                                                                                                                                                                                                                                                                                                                                                                                                                                                                                                                                                                                                                                                                                                          
\end{align*}
under the constraints that the variables are in $[0,1]$.

The optimal labeling must have that either $x_{1,1} = 0$ and $x_{2,1} \in [0,1]$ or that $x_{1,1} \in [0,1]$ and $x_{2,1} = 1$ by \Cref{prop::splittingRatio}. We try the cases $x_{1,1} = 0$ and $x_{2,1} = 1$ and take derivatives to solve for the best value of $x_{2,1}$ or $x_{1,1}$. For any value of $c$ we pick\footnote{For example, we can pick $c = \frac{12}{7}$. We get that the maximum value occurs when $x_{1,1} = 0$ and $x_{2,1} = 1$. With this setting of variables, $P_S^*(1) = \frac{3}{7}, P_S^*(2) = \frac{3}{7}, P_S^*(3) = \frac{1}{7}$ which corresponds to the subset design $S(3,3) \vee S(3,2) \vee S(3,1)$ and achieves $\left(\frac{12/7}{F_3(P_S^*)}, \frac{1}{F_3(P_S^*)}\right) = (\frac{12}{9}, \frac{7}{9})$.}, the point
$(\frac{c}{F_3(P_S^*)}, \frac{1}{F_3(P_S^*)})$ lies on the line between the point achievable by the repetition design and the Hamming block. By convexity, it must be that all points on the line between the values achievable by the repetition design and the Hamming block are optimal.

\end{proof}

\bibliographystyle{IEEEtran}
\bibliography{IEEEabrv,reports}
\end{document}